\newtcolorbox{mybox}[2][]{title=#2,#1,colframe  = black!100,}
\newcommand*\circled[1]{\tikz[baseline=(char.base)]{
            \node[shape=circle,draw,inner sep=1pt,color=black!100,fill=black!100] (char) {\tiny\color{white} #1};}}
\newcommand{\bbox}[3][black]
{   \par
    \begin{tikzpicture}
        \node[draw,minimum width=\textwidth,inner sep=2.5mm,align=center,#1] (tempnode) {#3};
        \node[right,#1,fill=white] at ($(tempnode.north)!0.9!(tempnode.north west)$) {#2};
    \end{tikzpicture}
    \par
}
\newif\if@blind
\if@blind \sethlcolor{black}\else
\newcommand*\rot[1]{\rotatebox{#1}}
\newcounter{bulletscount}
\newcommand{\rcounter}[1]{\refstepcounter{bulletscount}\label{#1}}
\newcommand*\samethanks[1][\value{footnote}]{\footnotemark[#1]}
\definecolor{mygreen}{rgb}{0,0.5,0}
\newcommand\bndefine{\mathrel{::=}}
\newcommand\bnexpr{\langle \mathit{exp} \rangle}
\newcommand\bnpred{\langle \mathit{pred} \rangle}
\newcommand\bnrel{\langle \mathit{rel} \rangle}
\newcommand\bnass{\langle \mathit{assert} \rangle}
\newcommand\bnev{\langle \mathit{set} \rangle}
\newcommand\bntermbase{\langle \mathit{b} \rangle}
\newcommand\bnterm{\langle \mathit{r} \rangle}
\newcommand\bnmcm{\langle \mathit{MCM} \rangle}
\newcommand\tsone{\mathit{TS}_1}
\newcommand\tstwo{\mathit{TS}_2}
\newcommand\msource{\mm_S}
\newcommand\mtarget{\mm_T}
\newcommand\redcross{\color{red}{\ding{55}}}
\newcommand\tick{\color{mygreen}{\ding{52}}}
\newcommand\tot[2]{\mathit{total}(#1,#2)}
\newcommand\equi[2]{\mathit{equiv}(#1,#2)}
\newcommand\acy[1]{\mathit{acyclic}(#1)}
\newcommand\cyc[1]{\mathit{cyclic}(#1)}
\newcommand\irref[1]{\mathit{irreflexive}(#1)}
\newcommand\dfloc{\phi_{\mathit{DF_{thrd}}}}
\newcommand\define{\mathrel{:=}}
\newcommand\imp{\Rightarrow}
\newcommand\threads{\ensuremath{{\cal T}}}
\newcommand\events{\mathbb{E}}
\newcommand\executed{\mathbb{X}}
\newcommand\executedi{\mathbb{I}}
\newcommand\memory{\mathbb{M}}
\newcommand\stores{\mathbb{W}}
\newcommand\loads{\mathbb{R}}
\newcommand\mm{\mathcal{M}}
\newcommand\state{\sigma}
\newcommand{\reach}[1]{\mathit{state}(#1)}
\newcommand{\consistent}[2]{\mathit{cons}_{#1}(#2)}
\newcommand{\execproj}[1]{\mathit{execproj}(#1)}
\newcommand\bench[2][]{%
\if\relax\detokenize{#1}\relax%
\textsc{#2}\else \textsc{#2}\hspace{0.3pt}{\footnotesize(}#1{\footnotesize)}\fi\@\xspace}
\newcommand\syncrel{\mathit{sync}}
\newcommand\lwsyncrel{\mathit{lwsync}}
\newcommand\isyncrel{\mathit{isync}}
\newcommand\mfencerel{\mathit{mfence}}
\newcommand\rfrel{\mathit{rf}}
\newcommand\rferel{\mathit{rfe}}
\newcommand\frrel{\mathit{fr}}
\newcommand\porel{\mathit{po}}
\newcommand\pporel{\mathit{ppo}}
\newcommand\corel{\mathit{co}}
\newcommand\adrel{\mathit{ad}}
\newcommand\ddrel{\mathit{dd}}
\newcommand\cdrel{\mathit{cd}}
\newcommand\sthdrel{\mathit{sthd}}
\newcommand\slocrel{\mathit{sloc}}
\newcommand\iirel{\mathit{ii}}
\newcommand\icrel{\mathit{ic}}
\newcommand\cirel{\mathit{ci}}
\newcommand\ccrel{\mathit{cc}}
\newcommand\var[3]{\texttt{#1}(#2, #3)}
\newcommand\rrel{\stackrel{r}{\rightarrow}}
\newcommand\newrow{\\[-1.5pt]}
\newcommand\herd  {\textsc{Herd7}\@\xspace}
\newcommand\offence  {\textsc{offence}\@\xspace}
\newcommand\dfence  {\textsc{DFence}\@\xspace}
\newcommand\fender  {\textsc{Fender}\@\xspace}
\newcommand\musketeer  {\textsc{musketeer}\@\xspace}
\newcommand\porthos  {\textsc{porthos}\@\xspace}
\newcommand\cedge[3]{\texttt{C}_#1(#2, #3)}
\newcommand\cvar[1]{\texttt{C}(#1)}
\newcommand\exec{X}
\newcommand\rf{\stackrel{\rfrel}{\rightarrow }}
\newcommand\fr{\stackrel{\frrel}{\rightarrow}}
\newcommand\uniproc  {uniproc}
\newcommand\poloc{\stackrel{{\porel} \cap {\slocrel}}{\longrightarrow}}
\newcommand\co{\stackrel{\corel}{\rightarrow}}
\newcommand\mem{$\sigma$}
\newcommand\A{E_1}
\newcommand\B{E_2}
\newcommand\C{E_3}
\newcommand\stdef[2]{\state [#1;#2]}
\newcommand\ass[2]{#1\leftarrow #2}
\newcommand{\Ppsi}{P_{\psi}}
\newcommand{\Pallpsi}{P_{\forall\psi}}
\newcommand{\Pnp}{P_{\mathit{np}}}
\newcommand{\prog}{\mathit{prog}}
\newcommand{\instr}{\mathit{inst}}
\newcommand{\tid}{\mathit{tid}}
\newcommand{\loc}{\mathit{loc}}
\newcommand{\reg}{\mathit{reg}}
\newcommand\rel{\stackrel{\mathit{rel}}{\rightarrow}}
\newcommand{\com}{\mathit{atom}}
\newcommand{\mfence}{\mathit{mfence}}
\newcommand{\hfence}{\mathit{sync}}
\newcommand{\lfence}{\mathit{lwsync}}
\newcommand{\cfence}{\mathit{isync}}
\newcommand{\thrd}{\mathit{thrd}}
\newcommand{\name}{\langle\mathit{name}\rangle}
\newcommand{\id}{\mathit{id}}
\newcommand{\iriw}{{\bf IRIW}}
\newcommand{\Set}[2]{\{#1\mid #2\}}
\newcommand{\execsof}[1]{\mathit{exec}(#1)}
\newcommand\hlcom{\mathit{hlinst}}
\begin{document}

\title{Portability Analysis for Weak Memory Models \\ \mdseries\porthos: {\bf One} \emph{Tool} {\bf for all} \emph{Models}}

\author{{\small Hern\'an Ponce-de-Le\'on\inst{1}\thanks{This work was carried out when the author was at Aalto University.} \and Florian Furbach\inst{2} \and Keijo Heljanko\inst{3} \and Roland Meyer\inst{4}\samethanks}}

\institute{ 
$^1$fortiss GmbH, Germany \
$^2$TU Kaiserslautern, Germany \
$^3$Aalto University and HIIT, Finland \
$^4$TU Braunschweig, Germany \\
\email{ponce@fortiss.org},
\email{furbach@cs.uni-kl.de},
\email{keijo.heljanko@aalto.fi}, 
\email{roland.meyer@tu-braunschweig.de}
}

\maketitle

\vspace{-5mm}
\begin{abstract}
We present \porthos, the first tool that discovers porting bugs in performance-critical code. 
\porthos\ takes as input a program and the memory models of the source architecture for which the program has been developed and the target model to which it is ported.
If the code is not portable, \porthos\ finds a bug in the form of an unexpected execution --- an execution that is consistent with the target but inconsistent with the source memory model.
Technically, \porthos\ implements a bounded model checking method that 
reduces the portability analysis problem to \mbox{satisfiability}
modulo theories (SMT).
There are two main problems in the reduction that we present novel and efficient solutions for. 
First, the formulation of the portability problem contains a quantifier alternation (consistent + inconsistent). 
We introduce a formula that encodes both in a single existential query.
Second, the supported memory models (e.g., Power) contain recursive definitions.
We compute the required least fixed point semantics for recursion (a problem that was left open in \cite{memalloy}) efficiently in SMT.
Finally we present the first experimental analysis of portability from TSO to Power.
\end{abstract}



\section{Introduction}
Porting code from one architecture to another is a routine task in system development. 
Given that no functionality has to be added, porting is rarely considered interesting from a programming point of view. 
At the same time, porting is non-trivial as the hardware influences both the semantics and the compilation of the code in subtle ways.
The unfortunate combination of being routine and yet subtle makes porting prone to mistakes. 
This is particularly true for performance-critical code that interacts closely with the execution environment. 
Such code often has data races and thus exposes the programmer to the details of the underlying hardware. 
When the architecture is changed, the code may have to be adapted to the primitives of the target hardware.

We tackle the problem of porting performance-critical code among hardware architectures.
Our contribution is the new (and to the best of our knowledge first) tool \porthos\ to fight porting bugs.
It takes as input a piece of code, a model of the source architecture for which the code has been developed, and a model of the target architecture to which the code is to be ported.
\porthos\ automatically checks whether every behaviour of the code on the target architecture is also allowed on the source platform. 
This guarantees that correctness of the program in terms of safety properties (in particular properties like mutual exclusion) carry over to the targeted hardware, and the program remains correct after porting.

Portability requires an analysis method that is hardware-architecture-aware in the sense that a description of the memory models of source and target platforms has to be part of the input. 
A language for memory models, called CAT~\cite{cat}, has been developed only recently. 
In CAT, memory models are defined in terms of relations between memory operations of a program. 
There are some relations (program order, reads from, coherence) that are common to all memory models.
A memory model may define further so-called derived relations by restricting and composing base relations.
The memory model specifies axioms in the form of acyclicity and irreflexivity constraints over relations.
An execution is consistent if it satisfies all axioms. 
Our work builds on the CAT language.

There are three problems that make portability different from most common verification tasks.
\begin{enumerate}
\item[\emph{(i)}] We have to deal with user-defined memory models.
These models may define derived relations as least fixed points.
\item[\emph{(ii)}] The formulation of portability involves an alternation (consistent + inconsistent) of quantifiers.
\item[\emph{(iii)}] High-level code may be compiled into different low-level code depending on the architecture.
\end{enumerate}

Concerning the first problem, we implement in SMT the operations that CAT defines on relations. 
Notably, we propose an encoding for derived relations that are defined as least fixed points. 
Such least fixed points are prominently used in the Power memory model~\cite{AlglaveMT14} and their computation was identified as a key problem in \cite{memalloy}. To quote the authors~\emph{[...] the proper fixpoint construction [...]
is much more expensive than a fixed unrolling}. We show that, with our encoding, this is not the case. 
A naive approach would implement the Kleene iteration in SAT by introducing copies of the variables for each iteration step, resulting in a very large encoding. 
We show how to employ SAT + integer difference logic~\cite{DBLP:conf/formats/CottonAMN04} to compactly encode the Kleene iteration process. Notably, every bounded model checking technique reasoning about complex memory models defined in CAT will face the problem of dealing with recursive definitions and can make use of our technique to solve it efficiently.

The second problem is to encode the quantifier alternation underlying the definition of portability. 
A porting bug is an execution that is consistent with the target but inconsistent with the source memory model.
We capture this alternation with a single existential query. 
Consistency is specified in terms of acyclicity (and irreflexivity) of relations.
Hence, an execution is inconsistent if a derived relation of the (source) memory model contains a cycle (or is not irreflexive). 
The naive idea would be to model cyclicity by unsatisfiability. 
Instead, we reduce cyclicity to satisfiability by introducing auxiliary variables that guess the cycle.

The reader may criticise our definition of portability: one could claim that all that matters is whether safety is preserved, even if the executions differ. 
To be precise, a state-based notion of portability requires that every state computable under the target architecture is already computable on the source platform.
We study state portability and come up with two results.

\begin{enumerate}
\item[\emph{(a)}] Algorithmically, state portability is beyond SAT.
\item[\emph{(b)}] Empirically, there is little difference between state portability and our notion.
\end{enumerate}

The third problem is that the same high-level program is compiled to different assembly programs depending on the source and the target architectures. 
Even the number of registers and the semantics of the synchronisation primitives provided by those architectures usually differ. 
Consider the program from~\autoref{fig:iriw}, written in C++11 and compiled to x86 and Power.
The observation is this. 
Even if the assembly programs differ, 
one can map every assembly memory access to the corresponding read or write operation in the high-level code. 
In the example, clearly ``$\texttt{MOV [y],\$1}$'' and ``$\texttt{stw r1,y}$'' correspond to ``$\texttt{y.store(memory\_order\_relaxed, 1)}$''. 
This allows us to relate low-level and high-level executions and to compare executions of both assembly programs by checking if they map to the same high-level execution.
With this observation, our analysis 
can be extended by translating an input program into two corresponding assembly programs and making explicit the relation among the low-level and high-level executions.  
While this relation among executions is not studied in the present paper, details of how to construct it and how to incorporate it into our approach are given in Appendix~\ref{sec:common}.

In summary, we make the following contributions.
\begin{enumerate}
\item We present the first SMT-based implementation of a core subset of CAT which can handle recursive definitions efficiently. 
\item We formulate the portability problem based on the CAT language.
\item We develop a bounded analysis for portability. Despite the apparent alternation of quantifiers, our SMT encoding is a satisfiability query of polynomial size and optimal in the complexity sense.
\item We compare our notion of portability to a state-based notion and show that the latter does not afford a polynomial SAT encoding.
\item We present experiments showing that \emph{(i)} in a large majority of cases both notions of portability coincide, and
\emph{(ii)} mutual exclusion algorithms are often non portable, particularly we perform the first analysis from TSO to Power.
\end{enumerate}
\section{Portability Analysis on an Example}
\label{sec:example}

Consider program \iriw\ in~\autoref{fig:iriw}, written in C++11 and using the atomic operator \texttt{memory\_order\_relaxed} which provides no guarantees on how memory accesses in different threads are ordered. 
When porting, the program is compiled to two different architectures. 
The corresponding low-level programs behave differently on x86 and on IBM's Power. 
On TSO, the memory model implemented by x86, each thread has a store buffer of pending writes. A thread can see its own writes before they become visible to other threads (by reading them from its buffer), but once a write hits the memory it becomes visible to all other threads simultaneously: TSO is a multi-copy-atomic model~\cite{opacb1105256}. 
Power on the other hand does not guarantee that writes become visible to all threads at the same point in time. 
Think of each thread as having its own copy of the memory. 
With these two architectures in mind, consider the execution in~\autoref{fig:iriw}. 
Thread $t_2$ reads $x=1, y=0$ and thread $t_3$ reads $x=0, y=1$, indicated by the solid edges $\rferel$ and $\rfrel$. 
Since under TSO every execution has a unique global view of all operations, no interleaving allows both threads to read the above values of the variables. 
Under Power, this is possible. 
Our goal is to automatically detect such differences when porting a program from one architecture to another, here from TSO to Power.
\begin{figure}[t]
\centering
\[\arraycolsep=5pt
\begin{array}{lllllll}
\texttt{thread } t_0 &\texttt{thread } t_1 \\
\texttt{y.store(memory\_order\_relaxed, 1)} & \texttt{x.store(memory\_order\_relaxed, 1)} \\
\vspace{-3mm}
\\
\texttt{thread } t_2 & \texttt{thread } t_3\\
r_1 = \texttt{x.load(memory\_order\_relaxed)}; & r_1 = \texttt{y.load(memory\_order\_relaxed)}; \\
r_2 = \texttt{y.load(memory\_order\_relaxed)} & r_2 = \texttt{x.load(memory\_order\_relaxed)} & \\
\end{array}
\]
\[\arraycolsep=5pt
\begin{array}{|l|l|l|l|}
\multicolumn{4}{c}{\textsc{x86 Assembly}} \\
\hline
\texttt{  thread } t_0 &\texttt{  thread } t_1 & \texttt{  thread } t_2 &\texttt{  thread } t_3 \\
\hline
\texttt{MOV [y],\$1} & \texttt{MOV [x],\$1} & \texttt{MOV EAX,[x]} & \texttt{MOV EAX,[y]} \\
& & \texttt{MOV EAX,[y]} & \texttt{MOV EAX,[x]} \\
\hline
\end{array}
\]
\[\arraycolsep=5pt
\begin{array}{|l|l|l|l|}
\multicolumn{4}{c}{\textsc{Power Assembly}} \\
\hline
\texttt{ thread } t_0 &\texttt{ thread } t_1 & \texttt{ thread } t_2 &\texttt{ thread } t_3 \\
\hline
\texttt{li  r1,1} & \texttt{li  r1,1} & \texttt{lwz  r1,x} & \texttt{lwz  r1,y} \\
\texttt{stw r1,y}  & \texttt{stw r1,x} & \texttt{lwz r3,y} & \texttt{lwz r3,x} \\
\hline
\end{array}
\]
\begin{tikzpicture}

\node[] (a) at (0, 0) {$Rx1$};
\node[] (b) at (0, -1) {$Ry0$};
\node[] (c) at (1.5,0) {$Ry1$};
\node[] (d) at (1.5, -1) {$Rx0$};
\node[] (e) at (3, 0) {$Wx1$};
\node[] (ix) at (4.5, 0) {$Ix0$};
\node[] (f) at (-1.5,0) {$Wy1$};
\node[] (iy) at (-3, 0) {$Iy0$};

\draw[->, dashed, red, thick] (a) edge[right] node {\footnotesize $\porel$} (b);
\draw[->, bend left=10, red, thick] (b) edge[below] node {\footnotesize $\frrel$} (f);
\draw[->] (iy) edge[above] node {\footnotesize $\corel$} (f);
\draw[->, dashed, red, thick] (c) edge[left] node {\footnotesize $\porel$} (d);
\draw[->, bend right=10, red, thick] (d) edge[below] node {\footnotesize $\frrel$} (e);
\draw[->] (ix) edge[above] node {\footnotesize $\corel$} (e);


\draw[->, bend right=25] (iy) edge[below] node {\footnotesize $\rfrel$} (b);
\draw[->, bend left=25] (ix) edge[below] node {\footnotesize $\rfrel$} (d);

\draw[->, bend right=15, red, thick] (e) edge node[above right] {\footnotesize $\rferel$} (a);
\draw[->, bend left=15, red, thick] (f) edge node[above left] {\footnotesize $\rferel$} (c);

\end{tikzpicture}
\caption{Portability of program \iriw\ from TSO to Power.}
\label{fig:iriw}
\end{figure}
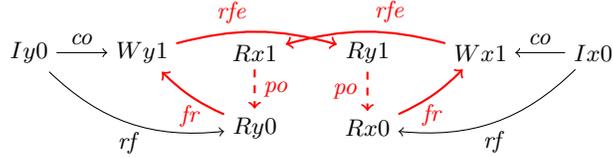

Our tool \porthos\ applies to various architectures, and 
we not only have a language for programs but also a \emph{language for memory models}. 
The semantics of a program on a memory model is defined axiomatically, following two steps~\cite{AlglaveMT14,memalloy}. 
We first associate with the program (and independent of the memory model) a set of executions which are candidates for the semantics. 
An execution is a graph (\autoref{fig:iriw}) whose nodes (events) are program instructions and whose edges are basic dependencies: the program order $\porel$, the reads-from relation $\rfrel$ (giving the write that a load reads from), and the coherence order $\corel$ (stating the order in which writes take effect). 
The memory model then defines which executions are consistent and thus form the semantics of the program on that model.

\begin{figure}[t]
\begin{minipage}{.68\linewidth}
\begin{mybox}{\emph{Consistent$_{TSO}$}}
\footnotesize\centering
\rcounter{b:uniproc}
$\circled{\thebulletscount}\ \acy {{({\porel} \cap {\slocrel})} \cup {\rfrel} \cup {\frrel} \cup {\corel}}$
\rcounter{b:tso}
$\circled{\thebulletscount}\ \acy {{\rferel} \cup {\corel} \cup {\frrel} \cup {({\porel} \setminus (\stores \times \loads))} \cup {\mathit{mfence}}}$
\end{mybox}
\end{minipage}
\hspace{.2cm}
\begin{minipage}{.28\linewidth }
\centering
\footnotesize\centering
${\frrel} \define {\rfrel^{-1};\corel}$ \hspace{3mm}
${\rferel} \define {\rfrel \setminus \sthdrel}$ \\
\end{minipage} 
\caption{TSO.}
\label{fig:sc}
\end{figure}

We describe memory models in the recently proposed language CAT~\cite{cat}. 
Besides the base relations, a model may define so-called derived relations. 
The consistency requirements are stated in terms of acyclicity and irreflexivity axioms over these (base and derived) relations. 
The CAT formalisation of TSO is given in~\autoref{fig:sc}. 
It forbids executions forming a cycle over ${\rferel} \cup {\frrel} \cup {({\porel} \setminus {(\stores \times \loads)})}$. 
The red edges in~\autoref{fig:iriw} yield such a cycle; the execution is not consistent with TSO. 
Power further relaxes the program order (\autoref{fig:power}), the dotted lines are no longer considered for cycles and 
thus the execution is consistent. 
Hence, {\bf IRIW} has executions consistent with Power but not with TSO and is hence not portable.

Our contribution is a bounded analysis for portability implemented in the \porthos\ tool 
(\url{http://github.com/hernanponcedeleon/PORTHOS}).
First, the program is unrolled up to a user-specified bound.
Within this bound, \porthos\ is guaranteed to find all portability bugs. 
It will neither see bugs beyond the bound nor will it be able to prove a cyclic program portable. 
The unrolled program, together with the CAT models, is transformed into an SMT formula where satisfying assignments correspond to bugs.

A bug is an execution consistent with the target memory model $\mtarget$ but inconsistent with the source $\msource$. 
We express this combination of consistency and inconsistency with only one existential quantification. 
The key observation is that the derived relations, which may differ in $\mtarget$ and $\msource$, are fully defined by the execution. 
Hence, by guessing an execution we also obtain the derived relations (there is nothing more to guess). 
Checking consistency for $\mtarget$ is then an acyclicity (or irreflexivity) constraint on the derived relations that immediately yields an SMT query. Inconsistency for $\msource$ requires cyclicity. 
The trick is to explicitly guess the cycle. 
We introduce Boolean variables for every event and every edge that could be part of the cycle. 
In~\autoref{fig:iriw}, if $Rx1$ is on the cycle, indicated by the variable $\cvar{Rx1}$ being set, then there should be one incoming and one outgoing edge also in the cycle.
Besides the incoming edge shown in the graph, $Rx1$ could read from the initial value $Ix0$.
Since there are two possible incoming edges but only one outgoing edge, we obtain ${\cvar{Rx1}} \imp (({{\cedge{\rferel}{Wx1}{Rx1}} \lor {\cedge{\rfrel}{Ix0}{Rx1}}})  {} \land {{\cedge{\porel}{Rx1}{Ry0}}})$.
If a relation is on the cycle, then also both end-points should be part of the cycle and the relation should belong to the execution: ${\cedge{{\porel}}{Rx1}{Ry0}} \imp {({\cvar{Rx1}} \land {\cvar{Ry0}} \land {\var{po}{Rx1}{Ry0}})}.$
Finally, at least one event has to be part of the cycle: ${\cvar{Ix0}} \lor {\cvar{Wx1}} \lor {\cvar{Rx1}} \lor {\cvar{Rx0}} \lor {\cvar{Iy0}} \lor {\cvar{Wy1}} \lor {\cvar{Ry1}} \lor {\cvar{Ry0}}.$
The execution in~\autoref{fig:iriw} contains the relations marked in red and forms a cycle which violates Axiom~\circled{\ref{b:tso}} in TSO.
The assignment respects the axioms of Power (\autoref{fig:power}), showing the existence of a portability bug in \iriw\ from TSO to Power.

The other challenge is to capture relations that are defined recursively. 
The Kleene iteration process \cite{stoltenberg-hansen_lindström_griffor_1994} starts with the empty relation and repeatedly adds pairs of events according to the recursive definitions.
We encode this into (quantifier-free) integer difference logic~\cite{DBLP:conf/formats/CottonAMN04}. 
For every recursive relation $\texttt{r}$ and every pair of events $(e_1, e_2)$, we introduce an integer variable $\Phi^\texttt{r}_{e_1,e_2}$ representing the iteration step in which the pair entered the value of $\texttt{r}$.
A Kleene iteration then corresponds to a total ordering on these integer variables. 
Crucially, we only have one Boolean variable $\texttt{r}(e_1, e_2)$ per pair rather than one per iteration step. We illustrate the encoding on a simplified version of the preserved program order for Power defined as $\pporel \define \iirel \cup \icrel$ (cf. ~\autoref{fig:power} for the full definition). 
The relation is derived from the mutually recursive relations $\iirel \define \ddrel \cup \icrel$ and $\icrel \define \cdrel \cup \iirel$, where $\ddrel$ and $\cdrel$ represent data and control dependencies. 
Call $Rx1$ and $Ry0$ respectively $e_1$ and $e_2$. 
The encoding is
\[\arraycolsep=2.5pt
\begin{array}{rcl}
{\var{ii}{e_1}{e_2}} & \Leftrightarrow & {(\var{dd}{e_1}{e_2} \land ({\Phi^\texttt{ii}_{e_1,e_2} > \Phi^\texttt{dd}_{e_1,e_2}}))} \lor {} 
 (\var{ic}{e_1}{e_2} \land ({\Phi^\texttt{ii}_{e_1,e_2} > \Phi^\texttt{ic}_{e_1,e_2}}))\\
 \\
{\var{ic}{e_1}{e_2}} & \Leftrightarrow & {(\var{cd}{e_1}{e_2} \land ({\Phi^\texttt{ic}_{e_1,e_2} > \Phi^\texttt{cd}_{e_1,e_2}}))} \lor {} 
 (\var{ii}{e_1}{e_2} \land ({\Phi^\texttt{ic}_{e_1,e_2} > \Phi^\texttt{ii}_{e_1,e_2}})). \\
\end{array}
\]
The pair $(e_1, e_2)$ that belongs to relation $\ddrel$ in step 
$\Phi^\texttt{dd}_{e_1,e_2}$ of the Kleene iteration can be added to relation 
$\iirel$ at a later step $\Phi^\texttt{ii}_{e_1,e_2}>\Phi^\texttt{dd}_{e_1,e_2}$.
As $\iirel \define \ddrel \cup \icrel$, the disjunction allows us to also add the elements of $\icrel$ to $\iirel$.
Since $\ddrel$ and $\cdrel$ are empty for \iriw, the relations $\iirel$ and $\icrel$ have to be identical. 
Identical non-empty relations will not yield a solution: the integer variables cannot satisfy $(\Phi^\texttt{ii}_{e_1,e_2} > \Phi^\texttt{ic}_{e_1,e_2})$ and $(\Phi^\texttt{ic}_{e_1,e_2} > \Phi^\texttt{ii}_{e_1,e_2})$ at the same time. 
Hence, the only satisfying assignment is the one where both $\iirel$ and $\icrel$ are the empty relation, which implies that $\pporel$ is empty.
This is consistent with the preserved program order of Power for \iriw.

\section{Programs and Memory Models}
\label{sec:back}

We introduce our language for programs and the core of the language CAT. 
The presentation follows~\cite{cat,memalloy} and we refer the reader to those works for details.

\paragraph{\bf Programs.}

Our language for shared memory concurrent programs is given in~\autoref{fig:grammar}. 
Programs consist of a finite number of threads from a while-language.  
The threads operate on assembly level, which means they explicitly read from the shared memory into registers, write from registers into memory, and support local computations on the registers. 
The language has various fence instructions (sync, lwsync, and isync on Power and mfence on x86) that enforce ordering and visibility constraints among instructions.
We refrain from explicitly defining the expressions and predicates used in assignments and conditionals. 
They will depend on the data domain. 
For our analysis, we only require the domain to admit an SMT encoding in a logic which has its satisfiability problem in NP.
For the rest of the paper we will assume that programs are acyclic: any while
statement is removed by unrolling the program to a depth specified by the user. 
Since verification is generally undecidable for
while-programs~\cite{Rice}, this under-approximation is necessary for cyclic programs.

\begin{figure}[t]
	\begin{minipage}[b]{0.4\textwidth}
		\begin{align*}
		\langle \prog \rangle \bndefine&\ \texttt{program}\ \langle \thrd \rangle^*\\
		\langle \thrd \rangle \bndefine &\ \texttt{thread}\ \langle \tid \rangle\ \langle \instr \rangle \\
		\langle \instr \rangle  \bndefine &\ \langle \com \rangle\mid \langle \instr \rangle ; \langle \instr \rangle\\
		& \mid  \texttt{while}\ \bnpred\ \langle \instr \rangle\\
		& \mid  \texttt{if}\ \bnpred\ \texttt{then}\  \langle \instr \rangle\ \\
		&\qquad \qquad \hspace{4mm} \texttt{else}\ \langle \instr \rangle\\
		\langle \com \rangle  \bndefine &\  
		\langle \reg \rangle \leftarrow \bnexpr \mid \langle \reg \rangle \leftarrow \langle \loc \rangle \\ 
		& \mid \langle \loc \rangle \define \langle \reg\rangle \mid \langle \mfence \rangle \\
		& \mid \langle \hfence \rangle \mid \langle \lfence \rangle \mid \langle \cfence \rangle
		\end{align*}
		\caption{Programming language.}
		\label{fig:grammar}
	\end{minipage}
	\begin{minipage}[b]{0.5\textwidth}
		\begin{align*}
		\bnmcm  \bndefine &\  \bnass\mid \bnrel \mid \bnmcm \wedge \bnmcm\\
		\bnass  \bndefine &\ \acy \bnterm  \mid \irref \bnterm  \\ 
		\bnterm \bndefine&\  \bntermbase \mid \bnterm \cup \bnterm \mid \bnterm \cap \bnterm \mid \bnterm \setminus \bnterm \\
		& \mid \bnterm^{-1}  \mid \bnterm^+ \mid \bnterm^* \mid \bnterm;\bnterm  \\
		\bntermbase \bndefine &\ \porel \mid \rfrel \mid \corel \mid \adrel \mid \ddrel \mid \cdrel \mid \sthdrel \mid \slocrel\\
		& \mid \mfencerel \mid \syncrel \mid \lwsyncrel \mid \isyncrel \\
		& \mid \id(\bnev) \mid \bnev \times \bnev\mid \name  \\
		\bnev  \bndefine &\ \events \mid \stores \mid \loads\\
		\bnrel  \bndefine &\ \name \define \bnterm
		\end{align*}
		\caption{Core of CAT~\cite{cat}.}
		\label{fig:model}
	\end{minipage}
\end{figure}

\paragraph{\bf Executions.}

The semantics of a program is given in terms of \emph{executions}, partial orders where the events represent occurrences of the instructions and the ordering edges represent dependencies. 
The definition is given in~\autoref{fig:candidate}. 
An execution consists of a set $\executed$ of executed events and so-called \emph{base} and \emph{induced relations} satisfying the Axioms~\circled{\ref{b:sthd}}-\circled{\ref{b:co2}}.
Base relations $\rfrel$ and $\corel$ and the set $\executed$ define an execution (they are the ones to be guessed). 
Induced relations can be extracted directly from the source code of the program. 
The axioms in~\autoref{fig:candidate} are common to all memory models and are natively implemented by our tool. 
To state them, let $\events$ represents memory events coming from program instructions accessing the memory.  
Memory accesses are either read or writes ${\events} := {{\loads} \cup {\stores}}$. 
By $\loads_l$ and $\stores_l$ we refer respectively to the reads and writes that access location $l$. 
The events of thread $t$ form the set $\events_t$. 
Relations $\sthdrel$ and $\slocrel$ are equivalences relating events belonging to the same thread \circled{\ref{b:sthd}} and  accessing the same location \circled{\ref{b:sloc}}. 
Relations $\porel, \adrel, \ddrel$ and $\cdrel$ represent program order and address/data/control dependencies.
Axiom~$\circled{\ref{b:po1}}$ states that the \emph{program order} $\porel$ is an intra-thread relation which~$\circled{\ref{b:po2}}$ forms a total order when projected to events in the same thread (predicate $\tot r A$ holds if $r$ is a total order on the set~$A$). 
\emph{Address dependencies} are either read-to-read or read-to-write~$\circled{\ref{b:ad}}$, \emph{data dependencies} are read-to-write~$\circled{\ref{b:dd}}$, and \emph{control dependencies} originate from reads~$\circled{\ref{b:cd}}$.
Fence relations are architecture specific and relate only events in program order \circled{\ref{b:sync}}-\circled{\ref{b:mfence}}. 
Axiom~\circled{\ref{b:path}}, which we do not make explicit, requires the executed events $\executed$ to form a path in the threads' control flow. 
By Axioms~$\circled{\ref{b:rf1}}$ and $\circled{\ref{b:rf2}}$, the \emph{reads-from relation} $\rfrel$ gives for each read a unique write to the same location from which the read obtains its value. 
Here, $r_1 ; r_2 := \Set{ (x,y)}{\exists z: (x,z) \in r_1\text{ and } (z,y) \in r_2}$ is the composition of the relations $r_1$ and $r_2$. 
We write $r^{-1}:=\Set{(y, x)}{(x, y)\in r}$ for the inverse of relation~$r$. 
Finally, $\id(A)$ is the identity relation on the set $A$. 
By Axioms~$\circled{\ref{b:co1}}$ and $\circled{\ref{b:co2}}$, the \emph{coherence relation} $\corel$ relates writes to the same location, and it forms a total order for each location. 
We will assume the existence of an initial write event for each location which assigns value $0$ to the location. 
This event is first in the coherence order.  

\begin{figure}[t]
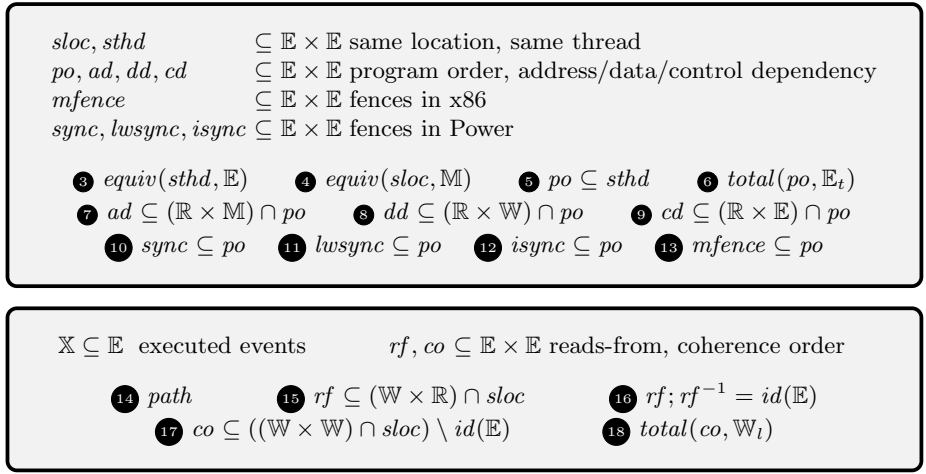

\footnotesize
\begin{mybox}{}
$\begin{array}{lcll}
\slocrel, \sthdrel & \subseteq & \events \times \events & \text{same location, same thread}\\
\porel, \adrel, \ddrel, \cdrel & \subseteq & \events \times \events & \text{program order, address/data/control dependency}\\
\mfencerel & \subseteq & \events \times \events & \text{fences in x86}\\
\syncrel, \lwsyncrel, \isyncrel & \subseteq & \events \times \events & \text{fences in Power}\\
\end{array}$
\begin{center}
\rcounter{b:sthd}
$\circled{\thebulletscount}\ \equi{\sthdrel}{\events}$ \hspace{4mm}
\rcounter{b:sloc}
$\circled{\thebulletscount}\ \equi{\slocrel}{\memory}$ \hspace{4mm}
\rcounter{b:po1}
$\circled{\thebulletscount}\ \porel \subseteq \sthdrel$  \hspace{4mm}
\rcounter{b:po2}
$\circled{\thebulletscount}\ \tot{\porel}{\events_t}$ \\[0.05cm]
\rcounter{b:ad}
$\circled{\thebulletscount}\ \adrel \subseteq (\loads\times \memory)\cap \porel$ \hspace{4mm}
\rcounter{b:dd}
$\circled{\thebulletscount}\ \ddrel \subseteq (\loads\times \stores)\cap \porel$ \hspace{4mm}
\rcounter{b:cd}
$\circled{\thebulletscount}\ \cdrel \subseteq (\loads \times \events) \cap \porel$ \\[0.05cm]
\rcounter{b:sync}
$\circled{\thebulletscount}\ {\syncrel} \subseteq {\porel}$ \hspace{2mm}
\rcounter{b:lwsync}
$\circled{\thebulletscount}\ {\lwsyncrel} \subseteq {\porel}$ \hspace{2mm}
\rcounter{b:isync}
$\circled{\thebulletscount}\ {\isyncrel} \subseteq {\porel}$ \hspace{2mm}
\rcounter{b:mfence}
$\circled{\thebulletscount}\ {\mfencerel} \subseteq {\porel}$ \\
\end{center}
\end{mybox}
\begin{mybox}{}
$\begin{array}{lclllcll}
& \executed \subseteq \events & & \text{executed events} & \hspace{10mm}
\rfrel, \corel & \subseteq & \events \times \events & \text{reads-from, coherence order}
\end{array}$
\begin{center}
\rcounter{b:path}
$\circled{\thebulletscount}\ \mathit{path}$\hspace{10mm}
\rcounter{b:rf1}
$\circled{\thebulletscount}\ \rfrel \subseteq (\stores \times \loads) \cap \slocrel$\hspace{10mm}
\rcounter{b:rf2}
$\circled{\thebulletscount}\ \rfrel ;\rfrel^{-1} = \id(\events)$ \\[0.05cm]
\rcounter{b:co1}
$\circled{\thebulletscount}\ \corel \subseteq ((\stores \times \stores) \cap \slocrel) \setminus \id(\events)$ \hspace{10mm}
\rcounter{b:co2}
$\circled{\thebulletscount}\ \tot{\corel}{\stores_l}$\\
\end{center}
\end{mybox}
\caption{Executions; adapted from~\cite{memalloy}.}
\label{fig:candidate}
\end{figure}

\paragraph{\bf Memory Consistency Models.}

We give in~\autoref{fig:model} a core subset of the CAT language for memory consistency models (MCMs). 
A \emph{memory model} is a constraint system over so-called \emph{derived relations}. 
Derived relations are built from the base and induced relations in an execution, hand-defined relations that refer to the different sets of events, and named relations that we will explain in a moment. 
The assertions are acyclicity and irreflexivity constraints over derived relations.
CAT also supports recursive definitions of relations.  
We assume a set $\name$ of relation names (different from the predefined relations) and require that each name used in the memory model has associated a defining equation $\name:=\bnterm$. 
Notably, $\bnterm$ may again contain relation names, making the system of defining equations recursive.
The actual relations that are denoted by the names are defined to be
 the least solution to this system of equations.
We can compute the least solution with a standard Kleene iteration~\cite{stoltenberg-hansen_lindström_griffor_1994} starting from the empty relations and iterating until the least fixed point is reached.


In~\autoref{sec:experiments} we study portability to Power;  we use its formalization~\cite{AlglaveMT14} in the core of CAT as given in~\autoref{fig:power}. 
Power is a highly relaxed memory model that supports program-order relaxations depending on address and data dependencies, that is non-multi-copy atomic, and that has a complex set of fence instructions.
The axioms defining Power are uniproc \circled{\ref{b:uniproc}} and the constraints \circled{\ref{b:power1}} to \circled{\ref{b:power3}}.
The model relies on the recursively defined relations $\iirel$, $\cirel$, $\icrel$, and $\ccrel$.


\begin{figure*}[t]
\begin{mybox}{\emph{Consistent$_{Power}$}}
\footnotesize\centering
$\circled{\ref{b:uniproc}}\ \acy{(\porel \cap \slocrel) \cup \rfrel \cup \frrel \cup \corel}$ \hspace{6mm}
\rcounter{b:power1}
$\circled{\thebulletscount}\ \acy{\mathit{hb}}$ \hspace{6mm}
\rcounter{b:power2}
$\circled{\thebulletscount}\ \irref{\mathit{fre};\mathit{prop};\mathit{hb}^*}$ \hspace{6mm}
\rcounter{b:power3}
$\circled{\thebulletscount}\ \acy{\corel \cup \mathit{prop}}$
\end{mybox}
\centering
\footnotesize
\bbox[densely dashed,text=black]{\emph{Preserved Program Order}}{
${\mathit{dp}} \define {{\adrel} \cup {\ddrel}}$ \hspace{2mm}
${\mathit{rdw}} \define {{(\porel \cap \slocrel)} \cap {(\mathit{fre};\mathit{rfe})}}$ \hspace{2mm}
${\mathit{detour}} \define {{(\porel \cap \slocrel)} \cap {(\mathit{coe};\mathit{rfe})}}$ \\
${\mathit{ii_0}} \define {{\mathit{dp}} \cup {\mathit{rdw}} \cup {\mathit{rfi}}}$ \hspace{5mm}
${\mathit{ci_0}} \define {{\mathit{cd\text{-}isync}} \cup {\mathit{detour}}}$ \\
${\mathit{ic_0}} \define {\emptyset}$ \hspace{5mm}
${\mathit{cc_0}} \define {{\mathit{dp}} \cup {(\porel \cap \slocrel)} \cup {\cdrel} \cup {(ad;po)}}$ \\
${\iirel} \define {{\mathit{ii_0}} \cup {\cirel} \cup {(\icrel;\cirel)} \cup {(\iirel;\iirel)}}$ \hspace{1mm}
${\cirel} \define {{\mathit{ci_0}} \cup {(\cirel;\iirel)} \cup {(\ccrel;\cirel)}}$ \\
${\icrel} \define {{\mathit{ic_0}} \cup {\iirel} \cup {\ccrel} \cup {(\icrel;\ccrel)} \cup {(\iirel;\icrel)}}$ \hspace{1mm}
${\ccrel} \define {{\mathit{cc_0}} \cup {\cirel} \cup {(\cirel;\icrel)} \cup {(\ccrel;\ccrel)}}$ \\
${\mathit{ppo}} \define {{((\loads \times \loads) \cap \iirel)} \cup {((\loads \times \stores) \cap \icrel)}}$
}

\begin{minipage}{.58\linewidth}
\centering
\bbox[densely dashed,text=black]{\emph{Fences}}
{
${\mathit{fence}} \define {\mathit{sync} \cup {(\mathit{lwsync} \setminus (\stores \times \loads))}}$
}
\end{minipage}
\hspace{.2cm}
\begin{minipage}{.38\linewidth }
\centering
\bbox[densely dashed,text=black]{\emph{Thin Air}}
{ 
${\mathit{hb}} \define {{\mathit{ppo}} \cup {\mathit{fence}} \cup {\mathit{rfe}}}$
}
\end{minipage} 

\bbox[densely dashed,text=black]{\emph{Propagation}}{
${\mathit{prop\text{-}base}} \define {({\mathit{fence}} \cup {(\mathit{rfe};\mathit{fence})}); \mathit{hb}^*}$ \\
${\mathit{prop}} \define {({(\stores \times \stores)} \cap {\mathit{prop\text{-}base}}) \cup {(\mathit{com}^*;\mathit{prop\text{-}base}^*;\mathit{sync};\mathit{hb}^*)}}$
}
\caption{Power~\cite{AlglaveMT14}.}
\label{fig:power}
\end{figure*}

\section{Portability Analysis}
\label{sec:encoding}

Let $\consistent{{\mm}} P$ be the set of executions of program $P$ consistent with $\mm$. Given a program $P$ and two MCMs $\msource$ and $\mtarget$, our goal is to find an execution $\exec$ which is consistent with the target ($\exec \in \consistent{{\mtarget}} P$) but not with the source ($\exec \not\in \consistent{{\msource}} P$). In such a case $P$ is not portable from $\msource$ to $\mtarget$.

\begin{definition}[Portability]
\label{def:traceportability}
Let $\msource$, $\mtarget$ be two MCMs. A program $P$ is \emph{portable from $\msource$ to $\mtarget$} if $\consistent{{\mtarget}} P \subseteq \consistent{{\msource}} P.$
\end{definition}


Our method finds non-portable executions as satisfying assignments to an SMT formula.
Recall that an execution is uniquely represented by the set $\executed$ and the relations $\rfrel$ and $\corel$, which need to be guessed by the solver. 
All other relations are derived from these guesses, the source code of the program, and the MCMs in question.  
Thus. we also have to encode the derived relations of the two MCMs defined in the language of~\autoref{fig:model}. 
As the last part, we encode the assertions expressed in the language of~\autoref{fig:model} on these relations in such a way that the guessed execution is allowed by $\mtarget$ (all the assertions stated for $\mtarget$ hold) while the same execution is not allowed by $\msource$ (at least one of the axioms of $\msource$ is violated).
The full SMT formula is of the form ${\phi_{\mathit{CF}} \land \phi_{DF} \land \phi_{\mtarget} \land \phi_{\neg\msource}}.$
Here, $\phi_{\mathit{CF}}$ and $\phi_{DF}$ encode the control flow and data flow of the executions, $\phi_{\mtarget}$ encodes the derived relations and all assertions of $\mtarget$, and $\phi_{\neg\msource}$ encodes the derived relations of $\msource$ and a violation of at least one of the assertions of the source memory model. 
The control-flow and data-flow encodings are standard for bounded model checking~\cite{CollavizzaR06}. 
The rest of the section focuses on the parts that are new in this work: how to encode the derived relations needed for representing both the MCMs, 
how to encode assertions for the target memory model 
and how to encode an assertion violation in the source memory model. 

\paragraph{\bf Encoding Derived Relations.}

For any pair of events $e_1, e_2 \in \events$ and relation $r \subseteq \events \times \events$ we use a Boolean variable $\var{r}{e_1}{e_2}$ representing the fact that $e_1 \stackrel{r}{\rightarrow} e_2$ holds. 
We similarly use fresh Boolean variables to represent the derived relations, using the encoding to force its value as follows. 
For the union (resp. intersection) of two relations, at least one of them (resp. both of them) should hold; set difference requires that the first relation holds and the second one does not; 
for the composition of relations we iterate over a third event and check if it belongs to the range of the first relation and the domain of the second.
Computing a reverse relation requires reversing the events.
We define the transitive closure of $\mathit{r}$ recursively where the base case $tc_0$ holds if events are related according to $\mathit{r}$ and the recursive case uses a relation composition. These are computed with the iterative squaring technique using the relation composition. Finally reflexive and transitive closure checks if the events are the same or are related by $\mathit{r}^+$.
The encodings are summarized below.

\vspace{-3 mm}
$$\begin{array}{rclrcl}
\var{{r$_1$}$\cup${r$_2$}}{e_1}{e_2}  &\Leftrightarrow  &{\var{r$_1$}{e_1}{e_2}} \lor {\var{r$_2$}{e_1}{e_2}} &
\var{{r$_1$}$\cap${r$_2$}}{e_1}{e_2}  &\Leftrightarrow & {\var{r$_1$}{e_1}{e_2}} \land {\var{r$_2$}{e_1}{e_2}} \\
\var{{r$_1$}$\setminus${r$_2$}}{e_1}{e_2} & \Leftrightarrow & {\var{r$_1$}{e_1}{e_2}} \land \neg {\var{r$_2$}{e_1}{e_2}} &
\var{r$^{-1}$}{e_1}{e_2}  & \Leftrightarrow & \var{r}{e_2}{e_1} \\
\var{r$_1$;r$_2$}{e_1}{e_2} & \Leftrightarrow & {\bigvee\limits_{e_3 \in \events}{{\var{r$_1$}{e_1}{e_3}} \land {\var{r$_2$}{e_3}{e_2}}}} &
\var{r$^*$}{e_1}{e_2}  &\Leftrightarrow & {\var{r$^+$}{e_1}{e_2}} \lor ({e_1} = {e_2}) \\
\end{array}$$
\vspace{-2 mm}
$$\begin{array}{rcl}
\var{r$^+$}{e_1}{e_2} & \Leftrightarrow & {\var{tc$_{\lceil \log |\events| \rceil}$}{e_1}{e_2}}\mathit{, where}\\
\var{tc$_0$}{e_1}{e_2} & \Leftrightarrow & {\var{r}{e_1}{e_2}}\mathit{, and}\\
\var{tc$_{i+1}$}{e_1}{e_2} & \Leftrightarrow & {\var{r}{e_1}{e_2}} \lor ({\var{tc$_{i}$}{e_1}{e_2}}; {\var{tc$_{i}$}{e_1}{e_2}}).\\
\end{array}$$


Recall that some of the relations (e.g. $\iirel$ and $\icrel$ of Power) can be defined mutually recursively, and that we are using the least fixed point (smallest solution) semantics for cyclic definitions. 
A classical algorithm for solving such equations
is the Kleene fixpoint iteration. 
The iteration starts from the empty relations as initial approximation and on each round computes a new approximation until the (least) fixed point is reached. 
Such an iterative algorithm can be easily encoded into SAT. 
The problem of such an encoding is the potentially large number of iterations needed, and thus the resulting formula size can grow to be large. 
A more clever way to encode this is an approach that has been already used in earlier work on encoding mutually recursive monotone equation systems with nested least and greatest fixpoints~\cite{DBLP:journals/jcss/HeljankoKLN12}. 
The encoding of this paper uses an extension of SAT with integer difference logic (IDL), a logic that is still NP complete. 
A SAT encoding is also possible but incurs an overhead in the encoding size: if the SMT encoding is of size $O(n)$, the SAT encoding is of size $O(n \log n)$~\cite{DBLP:journals/jcss/HeljankoKLN12}. We chose IDL since our experiments showed the encoding to be the most time consuming of the tasks.
%

The basic idea of the encoding is to guess a certificate that contains the iteration number in which a tuple would be added to the relation in the Kleene iteration. 
For this we use additional integer variables and enforce that they actually locally follow the propagations made by the fixed point iteration algorithm. 
Thus, for any pair of events $e_1, e_2 \in \events$ and relation $r \subseteq \events \times \events$ we introduce an integer variable $\Phi^\texttt{r}_{e_1,e_2}$ representing the round in which $\var{r$$}{e_1}{e_2}$ would be set by the Kleene iteration algorithm. 
Using these new variables we guess the execution of the Kleene fixed point iteration algorithm, and then locally check that every guess that was made is also a valid propagation of the fixed point iteration algorithm. 
For a simple example on how the encoding for the union of relations needs to be modified to also handle recursive definitions, consider a definition where ${r_1} \define {{r_2} \cup {r_3}}$ and ${r_2} \define {{r_1} \cup {r_4}}$.
The encoding is as follows
\vspace{-.25mm}
\[\arraycolsep=2.5pt\def\arraystretch{.2}
\begin{array}{rcl}
\var{r$_1$}{e_1}{e_2} & \Leftrightarrow & {(\var{r$_2$}{e_1}{e_2} \land ({\Phi^\texttt{r$_{1}$}_{e_1,e_2} > \Phi^\texttt{r$_{2}$}_{e_1,e_2}}))} \lor {} 
 (\var{r$_3$}{e_1}{e_2} \land ({\Phi^\texttt{r$_{1}$}_{e_1,e_2} > \Phi^\texttt{r$_{3}$}_{e_1,e_2}}))\\
 \\
\var{r$_2$}{e_1}{e_2} & \Leftrightarrow & {(\var{r$_1$}{e_1}{e_2} \land ({\Phi^\texttt{r$_{2}$}_{e_1,e_2} > \Phi^\texttt{r$_{1}$}_{e_1,e_2}}))} \lor {}  (\var{r$_4$}{e_1}{e_2} \land ({\Phi^\texttt{r$_{2}$}_{e_1,e_2} > \Phi^\texttt{r$_{4}$}_{e_1,e_2}})). \\
\end{array}
\]
A pair $(e_1,e_2)$ is added to $r_1$ by the Kleene iteration in step $\Phi^\texttt{r$_{1}$}_{e_1,e_2}$. It comes from either $r_2$ or $r_3$. If it came from $r_2$ then it is of course also in $r_2$ and it was added to $r_2$ in an earlier iteration $\Phi^\texttt{r$_{2}$}_{e_1,e_2}$ and thus $({\Phi^\texttt{r$_{1}$}_{e_1,e_2} > \Phi^\texttt{r$_{2}$}_{e_1,e_2}})$. It is similar if it came from $r_3$. 
The only satisfying assignment for the encoding is one where both $r_1$ and $r_2$ are the union of $r_3$ and $r_4$.

\paragraph{\bf Encoding Target Memory Model Assertions.}

For the target architecture we need to encode all acyclicity and irreflexivity assertions of the memory model. 
For handling acyclicity we again use non-Boolean variables in our SMT encoding for compactness reasons. One can encode that a relation is acyclic by adding a numerical variable 
$\Psi_e \in \mathbb{N}$ for each event $e$ in the relation we want to be acyclic. Then acyclicity of relation $r$ is encoded as 
${\acy r} \Leftrightarrow {\hspace{-1mm}\bigwedge\limits_{{e_1,e_2} \in {\events}} \hspace{-1mm} (\var{r}{e_1}{e_2} \imp (\Psi_{e_1} < \Psi_{e_2}})).$
Notice that we can impose a total order with $ \Psi_{e_1} < \Psi_{e_2}$ only if there is no cycle.
Our encoding is the same as the SAT + IDL encoding in~\cite{DBLP:conf/jelia/GebserJR14} where more discussion of SAT modulo acyclicity can be found.
The irreflexive constraint is simply encoded as: ${\irref r} \Leftrightarrow {\bigwedge\limits_{e \in \events} \neg \var{r}{e}{e}}.$


\paragraph{\bf Encoding Source Memory Model Assertions.}
For the source architecture we have to encode that one of the derived relations does not fulfill its assertions. On the top level this can be encoded as a simple disjunction over all the assertions of the source memory model, forcing at least one of the irreflexivity or acyclicity constraints to be violated.

For the irreflexivity violation, we can reuse the same encoding as for the target memory model simply as $\neg \irref{r}$. 
What remains to be encoded is $\cyc r$, which requires the relation $r$ to be cyclic. 
Here, we give an encoding that uses only Boolean variables. We add Boolean variables $\cvar e$ and $\cedge{\texttt{r}}{e_1}{e_2}$, which guess the edges and nodes constituting the cycle. 
We ensure that for every event in the cycle, there should be at least one incoming edge and at least one outgoing edge that are also in the cycle: 
$${c_{\mathit{n}}} = {\bigwedge_{e_1 \in \events} ({\cvar {e_1}} \imp ({\bigvee_{{e_2} \rrel {e_1}} \cedge{\text{r}}{e_2}{e_1}} \wedge {\bigvee_{{e_1} \rrel {e_2})} \cedge{\text{r}}{e_1}{e_2}}))}.$$
If an edge is guessed to be in a cycle, the edge must belong to relation $r$, and both events must also be guessed to be on the cycle: $$c_{e} = {\bigwedge_{e_1, e_2 \in \events} ({\cedge{\texttt{r}}{e_1}{e_2}} \imp ({\var{r}{e_1}{e_2}} \wedge {\cvar{e_1}} \wedge {\cvar{e_2}})}).$$
A cycle exists, if these formulas hold and there is an event in the cycle: $$\cyc r \Leftrightarrow ({c_{e}} \wedge {c_{n}} \wedge {\bigvee_{e \in \events} \cvar e}).$$

\section{State Portability}
\label{sec:complexity}
Portability from $\msource$ to $\mtarget$ requires that there are no new executions in $\mtarget$ that did not occur in $\msource$.  
One motivation to check portability is to make sure that safety properties of $\msource$ carry over to $\mtarget$. 
Safety properties only depend on the values that can be computed, not on the actual executions. 
Therefore, we now study a more liberal notion of so-called \emph{state portability}: $\mtarget$ may admit new executions as long as they do not compute new states.
Admitting more executions means we require less synchronization (fences) to consider a ported program correct, and thus state portability promises more efficient code. 
The notion has been used in~\cite{KupersteinVY12}.

The main finding in this section is negative: a polynomial encoding of state portability to SAT does not exist (unless the polynomial hierarchy collapses). 
Phrased differently, state portability does not admit an efficient bounded analysis (like our method for portability).
We remind the reader that we restrict our input to acyclic programs (that can be obtained from while-programs with bounded unrolling). 
For while-programs, verification tasks are generally undecidable \cite{Rice}.

Fortunately, our experiments indicate that new executions often compute new states.
This means portability is not only a sufficient condition for state portability but, in practice, the two are equivalent.
Combined with the better algorithmics of portability, we do not see a good motivation to move to state portability.

A state is a function that assigns a value to each location and register. 
An execution $X$ computes the state $\reach X$ defined as follows: a location receives the value of the last write event (according to $\corel$) accessing it; 
for a register, its value depends on the last event in $\porel$ that writes to it. 
The relationship between the notions is as in Lemma~\ref{lem:statevstrace}.

\begin{definition}[State Portability]\label{def:StatePort}
Let $\msource$, $\mtarget$ be MCMs. Program $P$ is \emph{state portable from $\msource$ to $\mtarget$} if 
$\reach{\consistent{\mtarget}{P}}\subseteq\reach{\consistent{\msource}{P}}.$
\end{definition}


\begin{lemma}\label{lem:statevstrace}
(1) Portability implies state portability. 
(2) State portability does not imply portability. 
\end{lemma}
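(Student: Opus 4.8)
The plan is to handle the two statements separately: part (1) is a short monotonicity argument, and part (2) requires exhibiting one concrete program together with a source and a target model that separate the two notions.

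For part (1), the only observation needed is that $\reach{\cdot}$, viewed as a map on sets of executions, is monotone: if $\mathcal A\subseteq\mathcal B$ then $\reach{\mathcal A}=\Set{\reach{X}}{X\in\mathcal A}\subseteq\Set{\reach{X}}{X\in\mathcal B}=\reach{\mathcal B}$. Portability of $P$ from $\msource$ to $\mtarget$ is the inclusion $\consistent{\mtarget}{P}\subseteq\consistent{\msource}{P}$; applying $\reach{\cdot}$ to both sides gives $\reach{\consistent{\mtarget}{P}}\subseteq\reach{\consistent{\msource}{P}}$, which is exactly state portability (Definition~\ref{def:StatePort}). Intuitively, the very execution that realises a state under the target already realises it under the source, so there is nothing more to do here.

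For part (2) I would take $\msource$ to be TSO, $\mtarget$ to be Power, and $P$ to be the two-thread program in which $t_1$ runs $x:=1;\,y:=1$ and $t_2$ runs $y:=1;\,x:=1$ --- the ``2+2W'' test --- with the crucial twist that \emph{every store writes the value $1$}. Consider the execution $X$ whose coherence order places $t_2$'s write to $x$ before $t_1$'s and $t_1$'s write to $y$ before $t_2$'s. Together with the two (store-to-store) program-order edges this closes the cycle $Wx1^{t_1}\po Wy1^{t_1}\co Wy1^{t_2}\po Wx1^{t_2}\co Wx1^{t_1}$, so $X$ violates TSO's axiom $\circled{\ref{b:tso}}$ and $X\notin\consistent{\mathrm{TSO}}{P}$. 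On the other hand $X\in\consistent{\mathrm{Power}}{P}$: since $P$ has no reads and no fences, the relations $\mathit{fr}$, $\mathit{rfe}$, $\mathit{fence}$ and $\mathit{prop}$ are empty, and because $\mathit{ppo}$ only collects program-order edges issued from loads whereas here every such edge is issued from a store, $\mathit{ppo}$ and hence $\mathit{hb}$ are empty too; thus Power's axioms $\circled{\ref{b:power1}}$, $\circled{\ref{b:power2}}$ and $\circled{\ref{b:power3}}$ hold vacuously, while $\circled{\ref{b:uniproc}}$ holds because $\corel$ restricted to either location is a total order and the $\slocrel$-part of $\porel$ in the relevant union is empty. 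Hence $\consistent{\mtarget}{P}\not\subseteq\consistent{\msource}{P}$ and $P$ is not portable from TSO to Power.

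Finally I would check that $P$ nevertheless \emph{is} state portable, and here the choice of values does all the work: $P$ has no registers, and in any execution the final value of $x$ (resp.\ $y$) is the value carried by the $\corel$-maximal write to $x$ (resp.\ $y$), which is $1$ no matter how coherence is resolved. So $\reach{\consistent{\mtarget}{P}}$ and $\reach{\consistent{\msource}{P}}$ both equal the single state $x\mapsto 1,\ y\mapsto 1$ --- the second set being non-empty because the ``interleaved'' coherence order, with both of $t_1$'s writes first, is already TSO-consistent --- and the inclusion of Definition~\ref{def:StatePort} holds trivially. The one genuine design decision, which I expect to be the crux, is finding the counterexample: the program must use two locations with writes issued in opposite orders so that TSO and Power really disagree (a coherence cycle has to be closeable through program order), and at the same time all stores must carry the same value so that this disagreement never shows up in a state; checking that $X$ is Power-consistent is routine but needs care because of Power's nested recursive definitions of $\iirel$, $\cirel$, $\icrel$, $\ccrel$.
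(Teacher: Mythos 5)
Both parts of your proposal are correct. Part (1) is exactly the (essentially trivial) argument the paper intends: portability is an inclusion of execution sets, and applying the monotone map $\reach{\cdot}$ yields the inclusion of state sets required by Definition~\ref{def:StatePort}; the paper does not even spell this out. For part (2) you take a genuinely different witness than the paper, though the underlying trick is the same. The paper's counterexample is a variant of \iriw\ in which all written values are $0$: it inherits non-portability from the \iriw\ analysis already carried out in Section~\ref{sec:example} (the red cycle violating Axiom~\circled{\ref{b:tso}} survives unchanged), and is trivially state portable because every computable state coincides with the initial one. You instead use the write-only ``2+2W'' shape with all stores writing $1$, and verify non-portability from scratch: the $\porel\cup\corel$ cycle indeed violates Axiom~\circled{\ref{b:tso}}, and your check that the execution is Power-consistent is sound (with no reads and no fences, $\rfrel$, $\frrel$, $\mathit{rfe}$, $\mathit{fence}$, $\mathit{prop}$ and $\mathit{ppo}$ are all empty, so Axioms~\circled{\ref{b:power1}}--\circled{\ref{b:power3}} hold vacuously and uniproc reduces to acyclicity of the per-location coherence orders). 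What your choice buys is a much lighter consistency argument on the Power side, since there is no reads-from or from-read structure to reason about, at the cost of having to establish the non-portability witness yourself rather than reusing the paper's \iriw\ discussion; conversely the paper's choice requires no new model reasoning at all. One cosmetic remark: in the paper's programming language a store is $\langle\loc\rangle\define\langle\reg\rangle$, so your program does use registers to hold the constant $1$; this does not affect the argument, since those registers end with value $1$ in every execution and the computed state is still unique, but the phrase ``$P$ has no registers'' should be softened accordingly.
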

\noindent For \autoref{lem:statevstrace}.(2), consider a variant of {\bf IRIW} (\autoref{fig:iriw}) where all written values are $0$. 
The program is trivially state portable from Power to TSO, but like {\bf IRIW}, not portable.

We turn to the hardness argumentation.
To check state portability, every $\mtarget$-computable state seems to need a formula  checking whether some $\msource$-consistent execution computes it.
The result would be  an exponential blow-up or a quantified Boolean formula, which is not practical.  
But can this exponential blow-up or quantification be avoided by some clever encoding trick?
The answer is no! 
Theorem~\ref{thm:TraceComplete0} shows that state portability is in a higher class of the polynomial hierarchy than portability.
So state portability is indeed harder to check than portability.

The polynomial hierarchy~\cite{STOCKMEYER19761} contains complexity classes between NP and PSPACE. 
Each class is represented by the problem of checking validity of a Boolean formula with a fixed number of quantifier alternations.
We need here the classes $\text{co-NP}=\Pi^P_1 \subseteq \Pi^P_2$.
The \emph{tautology problem} (validity of a closed Boolean formula with a universal quantifier $\forall x_1\dots x_n:\psi$ ) is a $\Pi^P_1$-complete problem. 
The higher class $\Pi^P_2$ allows for a second quantifier:
validity of a formula ($\forall x_1 \dots x_n \exists y_1 \dots y_n:\psi$) is a $\Pi^P_2$-complete problem. 
Theorem~\ref{thm:TraceComplete0} refers to a class of common memory models that we define in a moment. 
Moreover, we assume that the given pair of memory models $\msource$ and $\mtarget$ is \emph{non-trivial} in the sense that $\consistent{{\mtarget}} P \subseteq  \consistent{{\msource}} P$ fails for some program, and similar for state portability.

\begin{restatable}{theorem}{tracecomplete}
\label{thm:TraceComplete0}
Let $\msource, \mtarget$ be a non-trivial pair of common MCMs.
(1) Portability from $\msource$ to $\mtarget$ is $\Pi^P_1$-complete. 
(2) State portability is $\Pi^P_2$-complete. 
\end{restatable}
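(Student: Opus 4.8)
The theorem has two parts: (1) portability is $\Pi^P_1$-complete, and (2) state portability is $\Pi^P_2$-complete. Let me think about each.

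**Part 1: Portability is $\Pi^P_1$-complete.**

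Upper bound: Portability from $\msource$ to $\mtarget$ asks whether $\consistent{\mtarget}{P} \subseteq \consistent{\msource}{P}$. The complement (non-portability) asks: does there exist an execution $X$ with $X \in \consistent{\mtarget}{P}$ and $X \notin \consistent{\msource}{P}$? An execution is a polynomial-size object (sets $\executed$, $\rfrel$, $\corel$). Given $X$, checking $X \in \consistent{\mtarget}{P}$ is in P: the derived relations are computed by fixpoint iteration (polynomial), and checking acyclicity/irreflexivity is polynomial. Checking $X \notin \consistent{\msource}{P}$ is also in P. So non-portability is in NP, hence portability is in co-NP $= \Pi^P_1$.

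Lower bound: We reduce from TAUT (validity of $\forall \bar{x}: \psi$), which is $\Pi^P_1$-complete. We need to build a program $P$ (and use the given non-trivial $\msource$, $\mtarget$) such that $P$ is portable iff $\forall \bar x: \psi$ is valid. The idea: non-triviality gives us a "witness program" $P_0$ where some execution is consistent with $\mtarget$ but not $\msource$. We want to build a program whose $\mtarget$-consistent-but-not-$\msource$-consistent executions correspond to satisfying assignments of $\neg\psi$. So we need a gadget that, depending on guessed Boolean values for $\bar x$, either behaves like $P_0$ (exhibits the bad execution — when $\neg\psi$ is satisfied by $\bar x$) or behaves innocuously. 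The Boolean values $\bar x$ are encoded via the reads-from relation on designated variables (each $x_i$ can be read as 0 or 1), and $\psi$ is evaluated through register computations on a single thread, with the "bad gadget" $P_0$ enabled only when $\neg\psi$ holds.

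**Part 2: State portability is $\Pi^P_2$-complete.**

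Upper bound: State portability asks $\reach{\consistent{\mtarget}{P}} \subseteq \reach{\consistent{\msource}{P}}$. Its complement: does there exist an execution $X \in \consistent{\mtarget}{P}$ such that for all executions $Y \in \consistent{\msource}{P}$, $\reach X \neq \reach Y$? This is $\exists \forall$ over polynomial-size objects with a polynomial-time matrix, i.e. $\Sigma^P_2$. So state portability is in $\Pi^P_2$.

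Lower bound: Reduce from validity of $\forall \bar x \exists \bar y: \psi$. We need a program $P$ such that $P$ is state portable iff $\forall \bar x \exists \bar y: \psi(\bar x, \bar y)$ holds. The structure: use the $\mtarget$-side to "universally" guess an assignment $\bar x$ to the outer variables (encoded again via reads-from, with the $\mtarget$-only behavior from the non-trivial pair), and have the computed state record $\bar x$ together with a flag. We want: for every $\bar x$ that $\mtarget$ can realize, there is an $\msource$-execution computing the same state — and the $\msource$-execution gets to choose $\bar y$, so it is state-portable exactly when for every $\bar x$ there is a $\bar y$ making $\psi$ true. The subtle part is making the state reveal exactly $\bar x$ (not $\bar y$ or intermediate values) so the "same state" condition forces the matching, and making the $\mtarget$-only vs. $\msource$-also distinction kick in appropriately.

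**Main obstacle.** The hard part will be the lower-bound gadget constructions, specifically ensuring that the abstract "non-trivial pair of common MCMs" hypothesis is actually enough to wire in the quantifier structure. We are not allowed to pick a convenient $\msource$/$\mtarget$; we only know that non-portability holds for *some* program. The key step is to extract from that witness program a reusable gadget that can be switched on/off by Boolean data computed from reads-from, and to argue this switching preserves the consistency distinction. For Part~2, the additional difficulty is arranging that the reachable-state projection exposes precisely the outer-quantified bits: one must ensure writes to the "output" locations happen last in coherence order and registers holding $\bar y$ do not leak into the state, so that $\reach X$ coincides across the two models exactly when the inner existential is satisfiable for that $\bar x$. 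I also need to double-check that the "common MCM" class (to be defined in the paragraph preceding the theorem) is rich enough that these gadgets are expressible and that fixpoint computation stays polynomial, which underpins both upper bounds.
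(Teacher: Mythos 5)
Your high-level plan coincides with the paper's: membership via a single existential/universal alternation over polynomial-size executions, and hardness via a formula-evaluating gadget ($\Ppsi$-style) whose outcome gates the witness program $\Pnp$ supplied by non-triviality; for state portability, an outer-variable assignment recorded in the final state with the inner variables overwritten. However, what you have is a proof plan, and the obstacle you yourself flag is exactly where the real work lies, and it is not resolved. The paper discharges it with structural lemmas about \emph{common} MCMs that your sketch never supplies: (a) every derived relation of a common MCM satisfies a path condition (related events are connected by a path of basic relations $\porel,\rfrel,\corel,\adrel,\ddrel,\cdrel,\frrel$), and (b) derived relations are determined by these basic paths. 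These are precisely what lets one argue that, in the gated composition $\Pallpsi$, any violating cycle for $\msource$ must lie entirely inside the $\Pnp$ part (since the guard reads have incoming but not outgoing basic edges into $\Ppsi$), and conversely that appending an $\mtarget$-consistent execution of $\Pnp$ does not create a new $\mtarget$-violating cycle. Without (a) and (b), "the switching preserves the consistency distinction" is an unproven claim, and one also needs the separate lemma that $\Ppsi$ itself is portable between all common MCMs (via property (iv), reduction to SC, plus $\uniproc$ to exclude the $w \poloc r \frrel w$ case).

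For Part 2 there is a second concrete gap: your gadget assumes the computed state "reveals exactly $\bar x$", i.e.\ that the assignment on which $\psi$ is evaluated equals the coherence-final values of $x_1\dots x_n$. Under a weak $\msource$ this does not hold automatically: the evaluating thread may read stale values that are later overwritten, so an $\msource$-execution could compute a state labelled "satisfied" for an assignment it never actually checked. The paper prevents this with an explicit handshake (auxiliary locations $x'_1\dots x'_m$ written back by the second thread, compared by the first, with a third outcome $y=2$ flagging incomplete synchronization), and proves via $\uniproc$-cycle arguments that whenever $y\in\{0,1\}$ the read assignment is coherence-final. Your sketch omits this mechanism, and without it the claimed equivalence between state portability of the constructed program and validity of $\forall \bar x\,\exists \bar y:\psi$ breaks down. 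The upper bounds in your proposal are fine and match the paper's.
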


By Theorem~\ref{thm:TraceComplete0}.(2), state portability cannot be solved efficiently. 
The first part says that our portability analysis is optimal. 
We focus on this lower bound to give a taste of the argumentation: given a non-trivial pair of memory models, we know there is a program that is not portable.
Crucially, we do not know the program but give a construction that works for any program. 
The proof of Theorem~\ref{thm:TraceComplete0}.(2) is along similar lines but more involved. 

\begin{definition}\label{Definition:Common}
We call an MCM \emph{common}\footnote{Notice that all memory models considered in~\cite{AlglaveMT14} and in this paper are common ones.} if
\begin{itemize}
	\item[(i)] the inverse operator is only used in the definition of $\frrel$, 
	\item[(ii)] the constructs $\sthdrel$, $\slocrel$, and ${\bnev} \times {\bnev} $ are only used to restrict (in a conjunction) other relations,
	\item[(iii)] it satisfies \uniproc\ (Axiom~\circled{\ref{b:uniproc}}) , and
	\item[(iv)] every program is portable from this MCM to SC.
\end{itemize}
\end{definition}

We explain the definition.  
When formulating a MCM, one typically forbids well-chosen cycles of base relations (and~$\frrel$).
To this end, derived relations are introduced that capture the paths of interest, and acyclicity constraints are imposed on the derived relations. 
The operators inverse and $\bnev \times \bnev$ may do the opposite, they add   relations that do not correspond to paths of base relations (and~$\frrel$). 
Besides stating what is common in MCMs,   
Properties~\emph{(i)} and~\emph{(ii)} help us compose programs (cf. next paragraph). 
Uniproc is a fundamental property without which an MCM is hard to program.
Since the purpose of an MCM is to capture SC relaxations, we can assume MCMs to be weaker than SC. 
Properties~\emph{(iii)} and~\emph{(iv)} guarantee that the program $\Ppsi$ given below is portable between any common MCMs.

The crucial property of common MCMs is the following.
For every pair of events $e_1, e_2$ in a derived relation, (1) there are (potentially several) sequences of base relations (and $\frrel$) that connect $e_1$ and $e_2$, and 
(2) the derived relation only depends on these sequences.
The property ensures that if we append a program $P'$ to a location-disjoint program $P$,  
consistency of composed executions is preserved.

It remains to prove $\Pi^P_1$-hardness of portability. 
We first introduce the program $P_\psi$ that generates some assignment  and checks if it satisfies the Boolean formula $\psi (x_1 \ldots x_m)$ (over the variables $x_1\ldots x_m$). 
The program $P_\psi \define  t_1\parallel t_2$ consists of the two threads $t_1$ and $t_2$ defined below. 
Note that we cannot directly write a constant $i$ to a location, so we first assign $i$ to register $r_{c,i}$.

\[\arraycolsep=2.5pt
{\small
\begin{array}{l|l}
\multicolumn{1}{c|}{thread\ t_1} & \multicolumn{1}{c}{thread\ t_2} \\
\hline
r_{c,0} \leftarrow 0;\ r_{c,1}\leftarrow 1;\ r_{c,2} \leftarrow 2\ & \ r_{c,1} \leftarrow 1; \\
x_1 \define r_{c,0} \dots x_m\define r_{c,0}; & \ x_1 \define r_{c,1} \dots x_m\define r_{c,1}; \\
r_1 \leftarrow x_1 \dots r_m \leftarrow x_m; & \\
\textbf{if } \psi(r_1 \dots r_m) \textbf{ then} & \\
\ \ \ \ y \define r_{c,2}; & \\
\textbf{else}\ \  y \define r_{c,1}; & \\
\end{array}}
\]

We reduce checking whether $\forall x_1\ldots x_m:\psi(x_1\ldots x_m)$ holds to portability of a program $\Pallpsi$. 
The idea for $\Pallpsi$ is this. 
First $\Ppsi$ is run, it guesses and evaluates an assignment for $\psi$.
If $\psi$ is not satisfied ($y=1$), then some non-portable program $\Pnp$ is executed.
The program $\Pallpsi$ is portable iff the non-portable part is never executed. This is the case iff $\psi$ is satisfied by all assignments.

Let $\msource$, $\mtarget$ be common and non-trivial. 
By non-triviality, there is a program $\Pnp= t'_1\parallel \dots \parallel t'_k$ that is {\bf n}ot {\bf p}ortable from $\msource$ to $\mtarget$. 
We can assume $\Pnp$ has no registers or locations in common with $\Ppsi$. 
Program $\Pallpsi$ prepends $\Ppsi$ to the first two threads of $\Pnp$.
Once $y=1$, $\Pnp$ starts running.
Formally, let $t_1$ and $t_2$ be the threads in $\Ppsi$ and let $t_i:=skip$ for $3\leq i\leq k$. 
We define $\Pallpsi\ :=\ t''_1 \parallel \dots \parallel t''_k$ with
$t''_i\ :=\ t_i;\ r \leftarrow y;\ \textbf{if} (r=1)\ \textbf{then } t'_i$.


We show that $\Pallpsi$ is portable iff $\psi$ is satisfied for every assignment by proving the following: 
if $\Pallpsi$ is not portable then $\psi$ has an unsatisfying assignment and vice versa. 
\section{Experiments}
\label{sec:experiments}

\begin{table}[t]
\begin{minipage}[b]{0.415\textwidth}

\setlength{\tabcolsep}{-1.5pt}
\def\sep{\hspace{4pt}}
\footnotesize
\tt
\scalebox{.75}{
\begin{tabular}{lc@{\sep} cccccccccccccccccccc}
 \rm\small Benchmark
& $\rot{45} {SC\text{-}TSO}$
& $\rot{45} {SC\text{-}Power}$
& $\rot{45} {TSO\text{-}Power}$
\\
\midrule

\rm\bench{Bakery} &
\redcross & \redcross & \redcross \newrow
\rm\bench{Bakery x86} &
\tick & \redcross & \redcross \newrow
\rm\bench{Bakery Power} &
\tick & \tick & \tick \newrow
\rm\bench{Burns} &
\redcross & \redcross & \redcross \newrow
\rm\bench{Burns x86} &
\tick & \redcross & \redcross \newrow
\rm\bench{Burns Power} &
\tick & \tick & \tick \newrow
\rm\bench{Dekker} &
\redcross & \redcross & \redcross \newrow
\rm\bench{Dekker x86} &
\tick & \redcross & \redcross \newrow
\rm\bench{Dekker Power} &
\tick & \tick & \tick \newrow
\rm\bench{Lamport} &
\redcross & \redcross & \redcross \newrow
\rm\bench{Lamport x86} &
\tick & \redcross & \redcross \newrow
\rm\bench{Lamport Power} &
\tick & \tick & \tick \newrow
\rm\bench{Peterson} &
\redcross & \redcross & \redcross \newrow
\rm\bench{Peterson x86} &
\tick & \redcross & \redcross \newrow
\rm\bench{Peterson Power} &
\tick & \tick & \tick \newrow
\rm\bench{Szymanski} &
\redcross & \redcross & \redcross \newrow
\rm\bench{Szymanski x86} &
\tick & \redcross & \redcross \newrow
\rm\bench{Szymanski Power \ } &
\tick & \tick & \tick \newrow
\midrule
\\
\end{tabular}
}
\rm
\end{minipage}
\begin{minipage}[b]{0.4\textwidth}

\setlength{\tabcolsep}{2.5pt}
\small
\tt
\scalebox{.75}{
\begin{tabular}{l|c|cc|ccc}
\multicolumn{3}{c}{} & &  \multicolumn{2}{c}{Deadness} \\
\hline
& \color{red}{\ding{55}}\color{red}{\ding{55}}
& \color{mygreen}{\ding{52}}\color{mygreen}{\ding{52}}
& \color{red}{\ding{55}}\color{mygreen}{\ding{52}}
& \color{mygreen}{\ding{52}}\color{mygreen}{\ding{52}}
& \color{red}{\ding{55}}\color{mygreen}{\ding{52}}
\\
\hline
SC\text{-}TSO & 27 & 898 & 75 & 933 & 40 \newrow
SC\text{-}PSO & 27 & 777 & 196 & 836 & 137 \newrow
SC\text{-}RMO & 27 & 737 & 236 & 780 & 193 \newrow
SC\text{-}Alpha & 27 & 846 & 127 & 887 & 86 \newrow
TSO\text{-}PSO & 0 & 833 & 67 & 883 & 27 \newrow
TSO\text{-}RMO & 0 & 760 & 240 & 798 & 202 \newrow
TSO\text{-}Alpha & 0 & 877 & 133 & 912 & 88 \newrow
PSO\text{-}RMO & 0 & 831 & 169 & 844 & 156 \newrow
PSO\text{-}Alpha & 0 & 968 & 32 & 973 & 27 \newrow
RMO\text{-}Alpha & 0 & 999 & 1 & 999 & 1 \newrow
Alpha\text{-}RMO & 0 & 856 & 144 & 864 & 136 \newrow
\hline
\rowcolor{black!20}
&  0.98\% & 85.29 \% & 13.73 \% & 88.26 \% & 10.73 \% \newrow
\hline
SC\text{-}Power & 1477 & 898 & 52 & 936 & 14 \newrow
TSO\text{-}Power & 917 & 1132 & 378 & 1166 & 344 \newrow
PSO\text{-}Power & 502 & 1880 & 45 & 1892 & 33 \newrow
RMO\text{-}Power & 40 & 2227 & 160 & 2239 & 148 \newrow
Alpha\text{-}Power & 0 & 2427 & 0 & 2427 & 0 \newrow
\hline
\rowcolor{black!20}
&  24.20\% & 70.57\% & 5.23\%  & 71.35\% & 4.45\% \newrow 
\hline
\end{tabular}
}
\rm
\vspace{-26.2mm}
\end{minipage}
\caption{{\bf (Left)} Bounded portability analysis of mutual exclusion algorithms: portable ({\tick}), non-portable ({\redcross}). {\bf (Right)} Portability vs. State Portability on litmus tests.}
\label{tab:portability}
\end{table}

The encoding from~\autoref{sec:encoding} has been implemented in a tool called \porthos. 
We evaluate \porthos on benchmark programs
on a wide range of well-known MCMs. For SC, TSO, PSO, RMO and Alpha (henceforth called traditional architectures) we use the formalizations from~\cite{jade}; for Power the one in~\autoref{fig:power}.
We divide our results in three categories: portability of mutual exclusion algorithms, portability of litmus tests, and performance of the tool.

\paragraph{\bf Portability of Mutual Exclusion Algorithms.}

Most of the tools that are MCM-aware~\cite{AlglaveMT14,Mador-HaimAM10,TorlakVD10,memalloy,YangGLS04} accept only litmus tests as inputs. \porthos, however, can analyze cyclic programs with control flow branching and merging by unrolling them into acyclic form.
In order to show the broad applicability of our method, we tested portability of several mutual exclusion algorithms: Lamport's bakery~\cite{bakery}, Burns' protocol~\cite{BURNS1993171}, Dekker's~\cite{dekker}, Lamport's fast mutex~\cite{Lamport87}, Peterson's~\cite{Peterson81} and Szymanski's~\cite{Szymanski88}.
The benchmarks also include previously known fenced versions for TSO (marked as \rm\bench{x86}) and new versions we introduced using Power fences (marked as \rm\bench{Power}). The mutual exclusion program loops were unrolled once in all the experiments to obtain an acyclic program, and the discussion in what follows is for the portability analysis of this acyclic program.

While these algorithms have been proven correct for SC, it is well known that they do not guarantee mutual exclusion when ported to weaker architectures. The effects of relaxing the program order have been widely studied; there are techniques that even place fences automatically to guarantee portability, but they assume SC as the source architecture~\cite{AlglaveKNP14,BouajjaniDM13}. In~\autoref{tab:portability} (left) we do not only confirm that fenceless versions of the benchmarks are not portable from SC to TSO and fenced versions of them are, we also show that those fences are not enough to guarantee mutual exclusion when ported from TSO to Power.
We have used \porthos to find portability bugs when porting from TSO to Power and manually added fences to forbid such executions (see benchmarks marked as \rm\bench{Power}). To the best of our knowledge these are the first results about portability of mutual exclusion algorithms from memory models weaker than SC to the Power architecture. 

\paragraph{\bf Checking Portability on Litmus Tests.}

We compare the results of \porthos (which implements portability) against \herd (\url{http://diy.inria.fr/herd}) which reasons about state reachability and can be used to test state portability. 
\herd systematically constructs all consistent executions of the program and exhaustively enumerate all possible computable states.
Such enumeration can be very expensive for programs with lots of computable states, e.g. for many programs with a very large level of concurrency.
Since \herd only allows to reason about one memory model at a time, for each test we run the tool twice (one for each MCM) and compare the set of computable states. The program is not state portable if the target MCM generates computable states that are not computable states of the source MCM.

\begin{figure}[t]
\begin{minipage}{.729 \linewidth}
\begin{mybox}{}
\footnotesize\centering
\rcounter{b:dead1}
$\circled{\thebulletscount}\ domain(\cdrel) \subseteq range(\rfrel)$
\rcounter{b:dead2}
$\circled{\thebulletscount}\ imm(\corel);imm(\corel);imm(\corel^{-1}) \subseteq \rfrel^?;(\porel;(\rfrel^{-1})^?)^?$
\end{mybox}
\end{minipage}
\centering
\footnotesize\centering
\begin{minipage}{.25 \linewidth}
\hspace{1mm}
$\mathit{imm}(r) \define r \setminus (r; r^{+})$
\end{minipage}
\caption{Syntactic Deadness~\cite{memalloy}.}
\label{fig:dead}
\end{figure}

Our experiments contain two test suites: $\tsone$ contains 1000 randomly generated litmus tests in x86 assembly (to test traditional architectures) and $\tstwo$ contains 2427 litmus tests in Power assembly taken from~\cite{Mador-HaimMSMAOAMSW12}. Each test contains between 2 and 4 threads and between 4 and 20 instructions.
\autoref{tab:portability} (right) reports the number of non-portable (w.r.t.{\ }both definitions) litmus tests ({\color{red}\ding{55}\ding{55}}), the number of portable and state-portable litmus tests ({\color{mygreen}\ding{52}\ding{52}}) and the number of litmus tests that are not portable but are still state portable ({\color{red}\ding{55}}{\color{mygreen}\ding{52}}). In the last case the new executions allowed by the target memory model do not result in new computable states of the program. We show that in many cases both notions of portability coincide. 
For $\tsone$ on traditional architectures, the amount of non state-portable tests is very low (0.98\%), while the non portability of the program does not generate a new computable state in 13.73\% of the cases. For $\tstwo$ from traditional architectures to Power, the number of non state-portable litmus tests rises to 24.20\%, while only in 5.24\% of cases the two notions of portability do not match because the new executions do not result in a new computable state for the program.


In order to remove some executions that do not lead to new computable states, \porthos\ optionally supports the use of syntactic deadness which has been recently proposed in~\cite{memalloy}. 
Dead executions are either consistent or lead to not computable states. 
Formally an execution $\exec$ is dead if $X \not \in \consistent \mm P$ implies that ${\reach X \not = \reach Y}$ for all ${Y \in \consistent \mm P}$.
Instead of looking for any execution which is not consistent for the source architecture, 
we want to restrict the search to non-consistent and dead executions of $\msource$. 
This is equivalent to checking state portability. 
As shown by Wickerson et al. \cite{memalloy}, dead executions can be approximated with constraints 
\circled{\ref{b:dead1}} and \circled{\ref{b:dead2}} given in~\autoref{fig:dead} where $\mathit{r}^?$ is the reflexive closure of $\mathit{r}$. 
These constraints can be easily encoded into SAT. Our tool has an implementation which rules out quite a few executions not computing new states. The last two columns of the table show that by restricting the search to (syntactic) dead executions, the ratio of  litmus tests the tool reports as non portable, but are actually state portable is reduced to 10.73\% for traditional architectures and to 4.44\% for Power.

The experiments above show that in most of the cases both notions of portability coincide, specially when using dead executions. The cases where such differences are manifested is very low, specially when porting to Power. To test state portability, our method can be complemented with an extra query to check if the final state of the counter-example execution is also reachable in the source model by another execution. However, as shown in~\autoref{sec:complexity}, the price to obtain such a result is to go one level higher in the polynomial hierarchy which affects the performance of the analyses.

\paragraph{\bf Performance.}
%

\begin{figure}[t]
\centering
\scalebox{.8}{

\begin{tikzpicture}
\begin{axis}[
	symbolic x coords={
	Bakery, Bakery x86, Bakery Power, 
	Burns, Burns x86, Burns Power,
	Dekker, Dekker x86, Dekker Power,
	Lamport, Lamport x86,Lamport Power,
	Peterson, Peterson x86, Peterson Power,
	Szymanski, Szymanski x86, Szymanski Power,
	a
	},
	xtick=data,
	xtick style={draw=none},
	ytick style={draw=none},
	ytick={0,5,10,...,35,40},
	x tick label style={rotate=45, anchor=east, font=\small},
	enlargelimits=0.025,
	legend style={at={(0.15,.9)},
		anchor=north,legend columns=1,
	legend entries={SC-TSO, SC-Power, TSO-Power}, font=\footnotesize},
	ybar interval=0.7,
	width=\textwidth,
	height=6cm
]
\addplot [fill=blue]
	coordinates {
(Bakery, 0.21112990379333496)
(Bakery x86, 0.2102651596069336)
(Bakery Power, 0.21029901504516602)
(Burns, 0.05124306678771973)
(Burns x86, 0.04942202568054199)
(Burns Power, 0.050054073333740234)
(Dekker, 0.16898703575134277)
(Dekker x86, 0.16709613800048828)
(Dekker Power, 0.1702289581298828)
(Lamport, 0.4890270233154297)
(Lamport x86, 0.759335994720459)
(Lamport Power, 0.7899990081787109)
(Peterson, 0.10684895515441895)
(Peterson x86, 0.10451698303222656)
(Peterson Power, 0.10692906379699707)
(Szymanski, 0.5494050979614258)
(Szymanski x86, 0.5911979675292969)
(Szymanski Power, 0.6410701274871826)
(a, 0)
};
\addplot [fill=white]
	coordinates {
(Bakery, 1.718883991241455)
(Bakery x86, 2.0373098850250244)
(Bakery Power, 2.0865390300750732)
(Burns, 0.2777390480041504)
(Burns x86, 0.3118760585784912)
(Burns Power, 0.3193979263305664)
(Dekker, 1.4639639854431152)
(Dekker x86, 1.7945969104766846)
(Dekker Power, 1.7987449169158936)
(Lamport, 4.230733871459961)
(Lamport x86, 5.859885931015015)
(Lamport Power, 22.798483848571777)
(Peterson, 0.5806450843811035)
(Peterson x86, 0.74149489402771)
(Peterson Power, 0.7837228775024414)
(Szymanski, 11.447784185409546)
(Szymanski x86, 10.17952585220337)
(Szymanski Power, 14.220797061920166)
(a, 0)
};
\addplot [fill=red]
	coordinates {
(Bakery, 1.669646978378296)
(Bakery x86, 2.095571994781494)
(Bakery Power, 2.1380279064178467)
(Burns, 0.2794959545135498)
(Burns x86, 0.3151819705963135)
(Burns Power, 0.328110933303833)
(Dekker, 1.4670288562774658)
(Dekker x86, 1.8313229084014893)
(Dekker Power, 1.8584771156311035)
(Lamport, 4.643279790878296)
(Lamport x86, 6.377609968185425)
(Lamport Power, 33.291738986968994)
(Peterson, 0.6031379699707031)
(Peterson x86, 0.7748010158538818)
(Peterson Power, 0.8080220222473145)
(Szymanski, 11.43646502494812)
(Szymanski x86, 9.99736499786377)
(Szymanski Power, 16.323559999465942)
(a, 0)
};

\end{axis}
\end{tikzpicture}}
  \caption{Solving times (in secs.) for portability of mutual exclusion algorithms.}
  \label{fig:solving}
\end{figure}
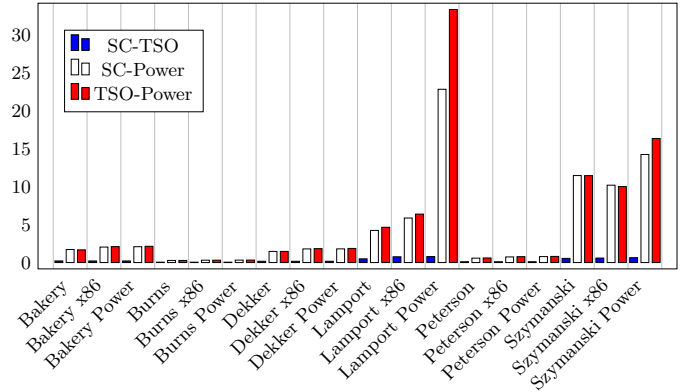

For small litmus test, the running times of \herd outperform \porthos. 
However, as soon as the programs become bigger, \herd does not perform as well as \porthos. We believe this is due to the use of efficient search techniques in the SMT solver.
The impact on efficiency is manifested as the number of executions \herd has to explicitly simulate by enumeration grows.
We evaluate the solving times of our tool on the mutual exclusion benchmarks.
Our prototype encoding implementation is done in Python; the encoding generations times have a minimum of 13~secs and a maximum of 303~secs. The encodings involving Power are usually more time consuming than traditional models since Power has both transitive closures and least fixed points in its encoding. We expect that the encoding times could be vastly improved by a careful C/C++ implementation of the encoding.
\autoref{fig:solving} presents the solving times of \porthos for the mutual exclusion algorithms, which are actually much lower than the encoding times for our prototype implementation.


\section{Related Work}

Semantics and verification under weak memory models have been the subject of study at least since 2007.
Initially, the behavior of x86 and TSO has been clarified~\cite{Checkfence2007,SarkarSNORBMA09}, then the Power architecture has been addressed~\cite{Mador-HaimMSMAOAMSW12,SarkarSAMW11}, now ARM is being tackled~\cite{FlurGPSSMDS16}. The study also looks beyond hardware, in particular C++11 received considerable attention~\cite{BattyDW16,BattyOSSW11}. 
Research in semantics goes hand in hand with the development of verification methods. 
They come in two flavors: program logics~\cite{TuronVD14,VafeiadisN13} and algorithmic approaches~\cite{AbdullaAAJLS15,AbdullaAJL16,AlglaveKT13,AlglaveMT14,AtigBBM10,BouajjaniDM13,Burckhardt2008,DanMVY13,DanMVY15}. 
Notably, each of these methods and tools is designed for a specific memory model and hence is not directly able to handle porting tasks.

The problem of verifying consistency under weak memory models has been extensively studied. 
Multiple methods and the complexity of the corresponding problems have been analyzed~\cite{CantinLS05,EneaF16,FarzanM08}. 
A prominent approach is testing where an execution is (partially) given and consistency is tested for a specified model \cite{furbachmeyerschneidersenftleben2015,Gibbons97testingshared}. 
In this line we showed that state portability (formulated as a bounded analysis for cyclic programs) is $\Pi_2^p$-complete.
This means there is no hope for a polynomial encoding into SAT (unless the polynomial hierarchy collapses). 
In contrast, our execution-based notion of portability is co-NP-complete (we look for a violation to portability), which in particular means that our portability analysis is optimal in the complexity sense.
Our experiments show that in most of the cases both notions of portability coincide.

A problem less general than portability is solved in~\cite{BouajjaniDM13} where non-portable traces from SC to TSO are characterized. 
The problem is reduced to state reachability under the SC semantics in an instrumented program and a minimal number of fences is synthesized to enforce portability.
One step further, one can enforce portability not only to TSO, but also to weaker memory models. 
The \offence tool~\cite{AlglaveM11} does this, but can only analyze litmus test and is limited to restoring SC. 
Checking the existence of critical cycles (i.e. portability bugs) on complex programs has been tacked in~\cite{AlglaveKNP14}, where such cycles are broken by automatically introducing fences. 
The cost of different types of fences is considered and the task is encoded as an optimization problem. 
The \musketeer tool analyzes C programs and has shown to scale up to programs with thousands lines of code, but the implementation is also restricted to the case were the source model is SC. 
Fence insertion can also be used to guarantee safety properties (rather than restoring SC behaviors). The \fender and \dfence tools~\cite{KupersteinVY12,LiuNPVY12} can verify real-world C code, but they are restricted to TSO, PSO, and RMO. 


\section{Conclusion and Outlook}
\label{sec:conclusion}

We introduce the first method that tests portability between any two axiomatic memory models defined in the CAT language.
The method reduces portability analysis to satisfiability of an SMT formula in SAT + integer difference logic.
We propose efficient solutions for two crucial tasks: reasoning about two user-defined MCMs at the same time and encoding recursively defined relations (needed for Power) into SMT. The latter can be re-used by any bounded model checking technique reasoning about complex memory models.


Our complexity analysis and experimental results both suggest that our definition of portability is preferable over the state-based notion of portability. If state-based portability is required, the complexity results show that it cannot be done with a single SMT solver query, unlike the approach to portability analysis suggested in this paper.
We also show that our method is not restricted just to litmus tests, 
but actually for the first time report on automated tool-based portability analysis of mutual exclusions algorithms from several axiomatic memory models to Power.

\section*{Acknowledgements}

We thank John Wickerson for his explanations about dead executions, Luc Maranget for several discussions about CAT models and Egor Derevenetc for providing help with the mutual exclusion benchmarks. This work was partially supported by the Academy of Finland project 277522. Florian Furbach was supported by the DFG project R2M2: Robustness against Relaxed Memory Models.


\bibliographystyle{plain}
\bibliography{bib}

\clearpage
\appendix

\section{Rest of the encoding}
\label{sec:app-encod}

This section details the remaining two sub formulas for the portability encoding, i.e. the control-flow and the data-flow.

\subsection{Control-flow}
\label{sec:cf}

Instead of representing the branching of the program with a tree~\cite{CruzFP12}, we use a direct acyclic graph (DAG) capturing the branches of the program and how those merge again. This allows to keep the size of the control-flow formula linear w.r.t the (unfolded) program. The tree representation can be exponential if the program has several if statements. 
We encode this DAG in the formula $\phi_{\mathit{CF}}$.

For each instruction\footnote{Notice that instructions are defined recursively and thus the term ``instruction'' may represent a sequence of events accessing the memory.} $i$ we use a Boolean variable $\mathit{cf_i}$ representing the fact that the instruction is actually executed by the execution. For a sequence $i_1 \define i_2 ; i_3$, instruction $i_1$ belongs to the execution iff both $i_2$ and $i_3$ belong too~(\ref{eq:cf1}). Assignments (local computations, loads and stores) and fences do not impose any restriction in the control-flow encoding~(\ref{eq:cf2})-(\ref{eq:cf5}); belonging or not to the execution depends on them being part of the body of some if statement at a higher level of the recursive definition. Given an instruction $i_1 \define \texttt{if}\ b\ \texttt{then}\ i_2\ \texttt{else}\ i_3$, we use three control-flow variables $\mathit{cf_{i_1}, cf_{i_2}, cf_{i_3}}$; then $i_1$ is executed iff one of $i_2, i_3$ is performed~(\ref{eq:cf6}), which one actually depends on the value of $b$ and this is encoded in the data-flow formula $\phi_{DF}$. These restrictions are encoded recursively by the following constraints:
\arraycolsep=1.5pt
\small
\begin{eqnarray}
\phi_{\mathit{CF}}(i_2 ; i_3) & = & \mathit{cf_{i_1}} \Leftrightarrow (\mathit{cf_{i_2}} \land \mathit{cf_{i_3}}) \label{eq:cf1} \land\ \phi_{\mathit{CF}}(i_2) \land \phi_{\mathit{CF}}(i_3) \\
\phi_{\mathit{CF}}(r \leftarrow e) & = & \mathit{cf_{r \leftarrow e}} \label{eq:cf2} \\
\phi_{\mathit{CF}}(r \leftarrow l) & = & \mathit{cf_{r \leftarrow l}} \label{eq:cf3} \\
\phi_{\mathit{CF}}(l \define r) & = & \mathit{cf_{l \define r}} \label{eq:cf4} \\
\phi_{\mathit{CF}}(\texttt{fence}) & = & \mathit{cf_{\texttt{fence}}} \label{eq:cf5} \\
\phi_{\mathit{CF}}(\texttt{if}\ b\ \texttt{then}\ i_2\ \texttt{else}\ i_3) & = & \mathit{cf_{i_1}} \Leftrightarrow {(\mathit{cf_{i_2}} \lor \mathit{cf_{i_3}})} \label{eq:cf6} \land\ \phi_{\mathit{CF}}(i_2) \land \phi_{\mathit{CF}}(i_3)
\end{eqnarray}

\subsection{Data-flow}
\label{sec:dataflow}

We encode the data flow with single static assignments using the method of~\cite{CollavizzaR06}. Formula $\mathit{\phi_{DF}}$ represents how the data flows between locations and registers; we first focus on how the data-flow of the local thread behavior is encoded (sub-formula $\dfloc$). For each location of the program (resp. register) we use several integer variables (one for each variable in the SSA form of the program) representing the value carried by that location (resp. register) in the execution. For loads, stores and local computations, if the instruction is part of the execution (i.e. its control-flow variable is True) then both sides of the assignment should coincide~(\ref{eq:1})-(\ref{eq:3}). For a sequence, the formula is the conjunction of the encoding of the corresponding instructions~(\ref{eq:4}).

Suppose register $r$ and location $l$ have been already assigned $p$ and $q$ times respectively, then:
\small
\begin{eqnarray}
\dfloc(r \leftarrow e) & = & cf_{r \leftarrow e} \imp (r_{p+1} = e) \label{eq:1} \\
\dfloc(r \leftarrow l) & = & cf_{r \leftarrow l} \imp (r_{p+1} = l_{q+1}) \label{eq:2} \\
\dfloc(l \define r) & = & cf_{l \define r} \imp (l_{q+1} = r_p) \label{eq:3} \\
\dfloc(i_1 ; i_2) & = & \dfloc(i_1) \land \dfloc(i_2) \label{eq:4}
\end{eqnarray}

Following the SSA form, the left hand side of each assignment introduces new variables; for registers in the right hand side, indexes are not updated so they match with the last value which can only be modified by the same thread~(\ref{eq:3}). However for locations in the right hand side, the index is also updated~(\ref{eq:2}) to allow variables to match not only with the last assignment done by that thread, but also from other threads (see the sub-formula $\phi_{\mathit{DF_{mem}}}$ below).

If statements may have a different number of assignments in their branches for certain variables. The idea here is to insert dummy assignments to ensure that both branches have the same number of assignments. We show the encoding for the simple case where each branch consists only of local computations to a register $r$. The same process is applied individually for each register and location assigned in a branch. If the branches contain if statements, the procedure must be applied recursively to each of them. Consider the if statement $\texttt{if}\ b\ \texttt{then}\ i_1\ \texttt{else}\ i_2$ where the first branch has less assignments to $r$ than the second one, i.e. $i_1 \define r \leftarrow e_{1,1}; \dots; r \leftarrow e_{1,p}$ and $i_2 \define r \leftarrow e_{2,1}; \dots; r \leftarrow e_{2,q}$ with $p<q$ (the encoding is symmetric for $q<p$). Assume $r$ has been already assigned $x$ times, then the encoding of the instruction contains the following constraint:
\arraycolsep=1pt
\small
\begin{eqnarray}
\hspace{-6mm}
\dfloc(\texttt{if}\ b\ \texttt{then}\ i_1\ \texttt{else}\ i_2) & = & (b \imp cf_{i_1}) \land (\neg b \imp cf_{i_2}) \label{eq:5} \\
& & \land\ cf_{i_1} \imp (r_{x+1} = e_{1,1}) \label{eq:6}\\
& & \hspace{17mm} \vdots \nonumber \\
& & \land\ cf_{i_1} \imp (r_{x+p} = e_{1,p}) \label{eq:7}\\
& & \hspace{17mm} \vdots \nonumber \\
& & \land\ cf_{i_1} \imp (r_{x+q} = r_{x+p}) \label{eq:8}\\
& & \land\ cf_{i_2} \imp (r_{x+1} = e_{2,1}) \label{eq:9}\\
& & \hspace{17mm} \vdots \nonumber \\
& & \land\ cf_{i_2} \imp (r_{x+q} = e_{2,q}) \label{eq:10}
\end{eqnarray}
Constraint~(\ref{eq:5}) imposes that which branch is followed depends on the value of the predicate. Next, we specify how the value of $r$ is updated depending on the branch: if the first branch is taken, then the value of $r$ is updated according to the expressions the first $p$ times~(\ref{eq:6})-(\ref{eq:7}) and it remains unchanged for the remaining $q-p$ assignments~(\ref{eq:8}); if the second branch is taken, the value of $r$ is updated according to the corresponding expressions in that branch~(\ref{eq:9})-(\ref{eq:10}). 
By adding constraints which keep assignments unchanged, we can easily model how branches merge again since any variable assigned after the if statement would be matched with the last value assigned to that variable.

Since fresh variables are added for locations in both sides of the assignments~(\ref{eq:2}), their values are not yet constrained. 
We now specify how the data flows between instructions that access locations in the shared memory (possibly in different threads). This depends on where the values are read-from (i.e. the $\rfrel$ relation) and is encoded by constraints $\mathit{DF_{mem}}(i_1, i_2)$. 
A write instruction $l \define r$ generates data-flow constraint $cf_{l \define r} \imp (l_i = r)$ and a read $r \leftarrow l$ is encoded by $cf_{r \leftarrow l} \imp (r = l_j)$. 
The variables $l_i, l_j$ remain unconstrained. 
If both instructions (call them $i_1$ and $i_2$) are related over the $\rfrel$ relation, then their values need to match: $${\phi_{\mathit{DF_{mem}}}(i_1, i_2)} = {\texttt{rf}(i_1,i_2) \imp l_i = l_j}$$


Finally, the data-flow between register and location either within or between threads is encoded as: $${\mathit{\phi_{DF}}} = {{\hspace{-6mm} \bigwedge\limits_{(i_1, i_2) \in {(\stores \times \loads)} \cap sloc} \hspace{-6mm} \phi_{\mathit{DF_{mem}}}(i_1, i_2)} \wedge {\bigwedge\limits_{t \in \threads} \dfloc(i_t)}}$$ where $i_t$ represent the instruction at the highest recursive level of the thread.

\section{Complexity Proofs}
\label{sec:app-comp}

We recall the main theorem and the program $\Ppsi \define  t_1\parallel t_2$ from the paper:

\tracecomplete*

		\begin{algorithm}\label{thread:t1}
			\NoCaptionOfAlgo
			 $r_{c,0} \leftarrow 0;\ r_{c,1}\leftarrow 1;\ r_{c,2} \leftarrow 2$\;  \label{line:rci}
			 $x_1 \define r_{c,0}; \dots; x_m\define r_{c,0}$ \tcp*{Writes $w_{1,0} \dots w_{m,0}$}
			 $r_1 \leftarrow x_1; \dots; r_m \leftarrow x_m$ \label{line:rx}\tcp*{Reads $r_{1} \dots r_{m}$}
			 \eIf(\tcp*[f]{If $A$ satisfies $\psi$,}) {$\psi(r_1, \dots, r_m)$}{\label{line:readpsi}
			 		$y\define r_{c,2}$ \tcp*{return 2.}
			 }{
			 		$y\define r_{c,1}$ \tcp*{If it doesn't, return 1.}
			 }
 
			\caption{$t_1$}  
		\end{algorithm}

			\NoCaptionOfAlgo

		\begin{algorithm}\label{thread:t2}
			$r_{c,1} \leftarrow 1$\;
			 $x_1 \define r_{c,1};\ \ldots x_m\define r_{c,1}$ \tcp*{Writes $	w_{1,1} \dots w_{m,1}$}
			\caption{$t_2$}  
			\RestoreCaptionOfAlgo
		\end{algorithm}

We give some technical results and show $\Pi^P_1$-Completeness of Portability for common MCMs.

\begin{lemma}\label{lem:psiportable}
$P_\psi$ is portable from every common MCM to another common MCM.
\end{lemma}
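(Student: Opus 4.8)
The plan is to show that $P_\psi$ computes exactly the same set of consistent executions under \emph{any} common MCM, and in particular that every execution consistent with the target is already consistent with the source. The key structural feature of $P_\psi$ that I would exploit is that it has essentially no cross-thread communication that a weak model could exploit: thread $t_2$ only writes, thread $t_1$ writes location-local initial values (via the $r_{c,i}$ registers), then reads $x_1,\dots,x_m$, evaluates $\psi$, and writes $y$ once. The only shared locations on which $t_1$ both observes and $t_2$ acts are $x_1,\dots,x_m$, and $t_1$ writes each $x_j$ before reading it. So the candidate executions are very constrained: each read $r_j\leftarrow x_j$ either reads from $t_1$'s own preceding write $w_{j,0}$ (value $0$) or from one of $t_2$'s writes $w_{j,1}$ (value $1$), or from the initial write; $y$ is written once with value $1$ or $2$ depending on whether $\psi$ holds on the guessed assignment.

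First I would argue, using Property~\emph{(iv)} of common MCMs (every program is portable from a common MCM to SC), that $\consistent{\mm}{P_\psi}\subseteq\consistent{SC}{P_\psi}$ for every common MCM $\mm$, so it suffices to show $\consistent{SC}{P_\psi}\subseteq\consistent{\mm}{P_\psi}$ for every common MCM, i.e. that \emph{every} SC-consistent execution of $P_\psi$ is consistent with every common model. Second, I would take an arbitrary candidate execution $X$ of $P_\psi$ and inspect which derived relations of a common MCM can be non-empty on it. Using Properties~\emph{(i)} and~\emph{(ii)}, every derived relation is (a union of) compositions of base relations $\porel,\rfrel,\corel,\frrel$ restricted by $\sthdrel,\slocrel,\bnev\times\bnev$; I would trace the possible base-relation paths. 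Because the only inter-thread edges are $\rfrel$/$\frrel$ edges into the reads of $x_j$ and the $\corel$ order on the writes of $x_j$ and $y$, and because there are no reads after the $y$-write and no reads in $t_2$, no base-relation path can revisit a thread, hence no derived relation over which an acyclicity axiom is imposed can contain a cycle. The \uniproc\ axiom (Property~\emph{(iii)}) is the only one touching $\slocrel$-local cycles, and it holds because $t_1$'s read of $x_j$ is in program order after its write $w_{j,0}$ and coherence can be chosen consistently (put $w_{j,0}$ before $w_{j,1}$, or after, matching $\rfrel$).

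Concretely, the main step is: given an SC execution, show all axioms of the (arbitrary common) target hold. I would do this by the composition-preservation property quoted just before the lemma — the remark that appending a location-disjoint program preserves consistency — but here applied internally: the relevant observation is that every cycle in a derived relation of $\mm$ projects to a cycle of base-relation sequences, and on $P_\psi$ any such sequence that leaves a thread goes through an $x_j$ or through $y$ and cannot come back, so no cycle exists; the irreflexivity axioms (like Axiom~\circled{\ref{b:power2}}) fail to be violated for the same reason, since they too require a base-relation path returning to its source. The hard part will be making the ``no returning path'' argument precise for \emph{all} common MCMs uniformly rather than model by model — i.e.\ extracting from Definition~\ref{Definition:Common} a clean lemma of the form ``a nonempty derived relation of a common MCM on an execution $X$ implies a base-relation-plus-$\frrel$ path in $X$ between the endpoints,'' and then checking that the communication graph of $P_\psi$ (threads as super-nodes, $\rfrel/\corel/\frrel$ as edges) is acyclic, so that any base-path returning to its starting event must stay within one thread, where $\porel$ is acyclic by Axiom~\circled{\ref{b:po2}}. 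Once that structural lemma is in hand, the rest is a short case check over the few shapes of edges $P_\psi$ admits.
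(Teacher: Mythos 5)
There is a genuine gap, and it sits at the very first step: you have the direction of portability (and hence of Property~\emph{(iv)}) backwards. By Definition~\ref{def:traceportability}, ``portable from $\mm$ to SC'' means $\consistent{SC}{P}\subseteq\consistent{\mm}{P}$, i.e.\ every SC-consistent execution is already consistent with the common MCM (common models are \emph{weaker} than SC). So the inclusion you propose to prove at length, $\consistent{SC}{\Ppsi}\subseteq\consistent{\mm}{\Ppsi}$, is exactly what Property~\emph{(iv)} hands you for free for \emph{every} program, while the inclusion you take for granted, $\consistent{\mm}{\Ppsi}\subseteq\consistent{SC}{\Ppsi}$, is false for general programs (IRIW on Power is the paper's running counterexample) and is precisely the program-specific content of this lemma. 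The paper's proof spends all its effort there: for each $x_i$ it does a case analysis on the four $\corel$/$\rfrel$ configurations of $w_{i,0}$, $w_{i,1}$ and $r_i$, exhibits an SC interleaving in three of the cases, and rules out the fourth ($w_{i,1}\rf r_i$ with $w_{i,1}\co w_{i,0}$) because it yields the cycle $w_{i,0}\poloc r_i\fr w_{i,0}$, contradicting \uniproc{} (Property~\emph{(iii)}). Only then does Property~\emph{(iv)} close the chain $\consistent{\mtarget}{\Ppsi}\subseteq\consistent{SC}{\Ppsi}\subseteq\consistent{\msource}{\Ppsi}$. None of this appears in your plan.

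A secondary problem is that the structural argument you do sketch is not sound as stated. Your claim that ``any base-path returning to its starting event must stay within one thread, where $\porel$ is acyclic'' ignores that $\frrel$ connects two events of the \emph{same} thread against program order: the candidate execution with $w_{i,1}\rf r_i$ and $w_{i,1}\co w_{i,0}$ has the intra-thread cycle $w_{i,0}\poloc r_i\fr w_{i,0}$, so thread-local acyclicity of $\porel$ does not exclude cycles; excluding them is exactly what the \uniproc{} axiom is invoked for. To repair the proof you need to (a) restate the decomposition with the inclusions the right way around, and (b) supply the missing argument that every execution of $\Ppsi$ consistent with a common MCM (hence satisfying \uniproc) can be realised as an SC interleaving, which is where the case analysis above is unavoidable.
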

\begin{proof}
According to property (iv), any common MCM is portable to SC.
In SC, an execution corresponds to an interleaving of the two thread executions. 
First, the threads create some variable assignment $A$ which is read by $t_1$.
Then, $t_1$ checks whether the assignment satisfies $\psi$. 
If it does, $y$ is set to 2, otherwise $y$ is set to 1. 

We show that any consistent execution of some common MCM is SC-consistent by examining possible executions. 
\begin{itemize}
	\item If $w_{i,1} \co w_{i,0}$  and $w_{i,0} \rf r_i$, then this corresponds to an interleaving where $w_{i,1}$ occurs first, then $t_1$ writes $w_{i,0}$ and reads 0 ($r_i$).
	\item If $w_{i,0} \co w_{i,1}$  and $w_{i,0} \rf r_i$, then $w_{i,0}$ and $r_i$ in $t_1$ occur first and then $w_{i,1}$.
	\item If $w_{i,0} \co w_{i,1}$  and $w_{i,1} \rf r_i$, then $t_1$ writes $w_{i,0}$, $t_2$ overwrites this with $w_{i,1}$ and afterward $t_1$ reads 1 with $r_i$.	
	\item If $w_{i,1} \co w_{i,0}$  and $w_{i,1} \rf r_i$, then the derived relation ${\frrel} \define {\rfrel^{-1};\corel}$ satisfies $r_i \fr w_{i,0}$. 
	Since $w_{i,0}$ and $r_i$ are related by $\porel$ and access the same location there is a cycle $w_{i,0} \poloc r_i \fr w_{i,0}$. 
  This is a violation of uniproc, the situation can not occur in a common MCM.
\end{itemize}
So we can construct a corresponding interleaving for any execution of a common MCM.
It follows that every execution of a common MCM is SC-consistent and according to property (iv) consistent with any common MCM. 
\end{proof}


We use the following technical lemmas to show hardness. 
We call the relations $\porel, \rfrel, \corel, \adrel, \ddrel, \cdrel$ and $ \frrel$ basic.
Given a common MCM, we define the \emph{violating cycles} as follows:
For an assertion $\acy{r}$, any cycle of $r$ is violating.
For an assertion $\irref{r}$, any cycle of the form $e \stackrel{r}{\rightarrow} e$ is violating.

The following lemma shows that an execution is not consistent if it contains a violating cycle. 
\begin{lemma}\label{lem:cycles}
Let $\mm$ be common. An execution $\exec$ is consistent with $\mm$ iff $\exec$ contains no violating cycle of $\mm$.
\end{lemma}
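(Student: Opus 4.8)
The plan is to prove both directions of the biconditional, with the forward direction (consistency implies no violating cycle) being essentially definitional and the reverse direction (no violating cycle implies consistency) requiring the structural properties of common MCMs.

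For the easy direction, suppose $\exec$ is consistent with $\mm$. Then every assertion of $\mm$ holds. If $\exec$ contained a violating cycle, it would be either a cycle of some relation $r$ subject to an $\acy{r}$ assertion, contradicting acyclicity of $r$, or a self-loop $e \stackrel{r}{\rightarrow} e$ for some $r$ subject to an $\irref{r}$ assertion, contradicting irreflexivity. Either way we get a contradiction, so $\exec$ has no violating cycle.

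For the reverse direction I would argue the contrapositive: if $\exec$ is not consistent with $\mm$, then some assertion of $\mm$ is violated, and I must exhibit a violating cycle. If an irreflexivity assertion $\irref{r}$ fails, there is an $e$ with $e \stackrel{r}{\rightarrow} e$, which is by definition a violating cycle of the required form. If an acyclicity assertion $\acy{r}$ fails, then $r$ contains a cycle, which is again a violating cycle by definition. So in fact this direction is also nearly immediate from the definition of violating cycle; the content of the lemma is mostly to package this observation for later use. The only subtlety — and the main thing to get right — is that the derived relations appearing in the assertions are themselves built up from the basic relations, so "a cycle of $r$" must be understood as a cycle in the graph on events whose edges are the pairs belonging to the derived relation $r$ computed (as a least fixed point where recursion is involved) from the basic relations of $\exec$. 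I would state explicitly that a cycle of a derived relation $r$ decomposes, by unfolding the definition of $r$, into a closed walk built from basic relations (using properties \emph{(i)} and \emph{(ii)} of common MCMs so that inverse is confined to $\frrel$ and the $\sthdrel$, $\slocrel$, $\bnev \times \bnev$ constructs only restrict rather than create edges); this is the link that makes the notion of "violating cycle" well-defined and usable in the subsequent composition arguments.

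The step I expect to require the most care is not the logical equivalence itself but fixing the right notion of "cycle of $r$" for recursively defined $r$, i.e. confirming that the least-fixed-point semantics makes $r$ a genuine set of event pairs so that "$\exec$ contains a violating cycle" is a well-posed statement, and observing that such a cycle can always be traced back to a closed walk over basic relations. Once that is pinned down, both implications follow directly by unpacking the definition of consistency (all assertions hold) against the definition of violating cycle.
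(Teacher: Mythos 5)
Your proposal is correct and takes essentially the same route as the paper, which disposes of this lemma in one line as following directly from the definition of violating cycles: consistency means every $\acy{r}$ and $\irref{r}$ assertion holds, and a violating cycle is by definition exactly a witness that one of them fails, so both directions are definitional unpacking just as you argue. Your additional remarks on tracing a cycle of a derived relation back to a closed walk of basic relations are not needed here (that is the content of the paper's separate path-condition lemma, used later for the composition arguments) but they introduce no error.
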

This follows directly from the definiton of violating cycles.

We say a relation $r$ satisfies the path condition if $e_1 \stackrel{r}{\rightarrow } e_2$ implies $e_1 {\stackrel{b}{\longrightarrow^* }} e_2$ with $b\define po\cup rf \cup co \cup ad \cup dd \cup cd \cup fr$.
\begin{lemma}\label{lem:common}
Any relation $r$ of a common MCM satisfies the path condition. 
\end{lemma}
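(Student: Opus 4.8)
The plan is to proceed by structural induction on the definition of the relation $r$ in the memory model, following the grammar for $\bnterm$ in~\autoref{fig:model}. For the base cases, the basic relations $\porel, \rfrel, \corel, \adrel, \ddrel, \cdrel$ are literally part of $b$, so each is trivially contained in $\stackrel{b}{\longrightarrow^*}$; and $\frrel$ satisfies the path condition since $e_1 \stackrel{\frrel}{\rightarrow} e_2$ means $e_1 {\stackrel{(\rfrel^{-1};\corel)}{\longrightarrow}} e_2$, i.e.\ there is a $z$ with $z \rf e_1$ and $z \co e_2$, giving the path $e_1 \stackrel{\rfrel^{-1}}{\rightarrow} z \stackrel{\corel}{\rightarrow} e_2$ — but here I must be careful, because $\rfrel^{-1}$ is not one of the basic relations. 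This is exactly where Property~\emph{(i)} of Definition~\ref{Definition:Common} enters: the inverse operator is used \emph{only} in the definition of $\frrel$, and for $\frrel$ itself we note that uniproc (Property~\emph{(iii)}, Axiom~\circled{\ref{b:uniproc}}) forbids the cycle $e_1 \poloc e_2 \fr e_1$; combined with $\circled{\ref{b:rf1}}$ and $\circled{\ref{b:co1}}$ forcing $e_1$ to be a load and $e_2$ a write to the same location, one shows the reversed $\rfrel$-edge can be absorbed — or, more simply, one treats $\frrel$ as an additional basic relation (which it is, being one of the seven listed), so no inversion appears at the leaf level at all. The named relations $\name$ are handled by the induction: each name has a defining equation $\name \define \bnterm$, and since the semantics is the least fixed point, one runs an inner induction on the Kleene iteration step, with the empty relation vacuously satisfying the path condition and each iteration preserving it by the inductive hypothesis on the operators.

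For the inductive step I would go through the operators one by one. Union: if both $r_1$ and $r_2$ satisfy the path condition then so does $r_1 \cup r_2$, since any edge comes from one of them. Composition: an edge of $r_1;r_2$ factors as $e_1 \stackrel{r_1}{\rightarrow} z \stackrel{r_2}{\rightarrow} e_2$, and concatenating the two $b$-paths gives a $b$-path from $e_1$ to $e_2$. Transitive and reflexive-transitive closure $r^+, r^*$: an edge is a finite composition of $r$-edges (plus possibly the identity for $r^*$, which contributes the empty, hence trivially-$b^*$, path $e_1 = e_2$), so again concatenation works. Intersection $r_1 \cap r_2$ and set difference $r_1 \setminus r_2$: an edge of either is in particular an edge of $r_1$, so the path condition is inherited from $r_1$ alone. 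Inverse $r^{-1}$ and $\bnev \times \bnev$ and $\id(\bnev)$: here I invoke Properties~\emph{(i)} and~\emph{(ii)}. Inverse never occurs except inside $\frrel$ (already absorbed into the base case), so it never actually arises as an operator applied to a compound term. The constructs $\sthdrel, \slocrel$ and $\bnev \times \bnev$ occur only as the second argument of an intersection restricting another relation, so any edge involving them is really an edge of the relation being restricted, and the path condition comes from that relation; and $\id(\bnev) \subseteq \id(\events)$ contributes only the empty path $e_1 = e_2$.

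The main obstacle, and the reason the hypotheses in Definition~\ref{Definition:Common} are exactly what they are, is that $r^{-1}$, $\bnev \times \bnev$, and $\id(\bnev)$ do \emph{not} in general satisfy the path condition: reversing a $b$-edge need not give a $b$-path (since $b$ is not symmetric), and a Cartesian product or identity set can relate events with no basic connection whatsoever. So the whole argument hinges on showing these "bad" constructs can never appear in a position where they would need the path condition on their own: inverse is quarantined to $\frrel$ (where uniproc cleans it up), while $\sthdrel, \slocrel, \bnev\times\bnev$ appear only as restrictors in intersections and $\id$ only adds self-loops. Once that syntactic discipline is pinned down, the structural induction is routine and the lemma follows, providing the key ingredient — together with Lemma~\ref{lem:cycles} — for the program-composition argument used in the hardness proof of Theorem~\ref{thm:TraceComplete0}.
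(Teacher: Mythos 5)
Your proposal is correct and follows essentially the same route as the paper: an induction over the Kleene iteration for the named (recursive) relations, combined with a case analysis over the CAT operators, using Properties~\emph{(i)} and~\emph{(ii)} of Definition~\ref{Definition:Common} to rule out problematic occurrences of inverse and of $\sthdrel$, $\slocrel$, ${\bnev}\times{\bnev}$, and treating $\frrel$ as a basic relation (your brief detour through uniproc is unnecessary, since $\frrel$ is already included in $b$, as you yourself conclude). The handling of union, intersection, difference, composition, and the closures matches the paper's argument, including the length-zero path for the identity part of $r^*$.
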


\begin{proof}
Note that the recursively defined relations of a common MCM can be obtained with a Kleene iteration.
We use a structural induction over the Kleene iteration.

Induction Basis: Any named relation is initially the empty relation and trivially satisfies the path condition. 

Induction Step: Assume all named relations satisfy the path condition.
A named relation is updated according to its defining equation using the current assignments of the named relations and basic relations. 
To simplify this proof, we can assume that any equation only contains one operator or a relation in $base$. 
Any basic relation trivially satisfies the condition. 
Let $r_1$ and $r_2$ satisfy the path condition. We examine the operations that can be applied to them according to properties (i) and (ii) of common MCMs.
We see that $r_1 \cup r_2, r_1 \cap r_2, r_1 \cap sloc, r_1 \cap sthd, r_1 \cap {\bnev} \times {\bnev}, r_1 \setminus r_2$ all satisfy the path condition. 
The relation $r_ 1;r_2 (e_1,e_2)$ requires an event $e_3$ with $r_1(e_1,e_3)$ and $r_2(e_3,e_2)$ and since $r_1$ and $r_2$ satisfy the path condition there is a path from $e_1$ to $e_3$ and from $e_3$ to $e_2$ and thus $r_ 1;r_2 $ also satisfies the path condition. 
Similarly, $r_1^+$ adds a relation only where there already is a path and satisfies the path condition.
Relation $r_1^*$ consists of $r^+$ and the identity relation, which also satisfies the path condition (there is always a path of length 0 from some $e_1$ to $e_1$).
A named relation still satisfies the path condition after a Kleene iteration step.
\end{proof}

Note that according to Lemma~\ref{lem:common}, a violating cycle implies a cycle of basic relations (called a basic path) that contains all events of the violating cycle.
We can argue about consistency of an execution by examining the paths of basic relations.
\begin{lemma}\label{lem:paths}
Given a relation $r$ of a common MCM and events $e_1,e_2$ of an execution, whether $r(e_1,e_2)$ holds is determined by the basic paths from $e_1$ to $e_2$.
\end{lemma}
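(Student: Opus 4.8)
The plan is to refine the argument of Lemma~\ref{lem:common} by a structural induction over the rounds of the Kleene fixpoint iteration, this time carrying along not merely the \emph{existence} of a basic path but the whole labelled structure of basic paths between the two events. First I would make the statement precise. For events $e_1,e_2$ of an execution let $G(e_1,e_2)$ be the labelled directed graph whose vertices are all events that lie on some directed path using only basic relations ($\porel,\rfrel,\corel,\adrel,\ddrel,\cdrel,\frrel$) from $e_1$ to $e_2$, whose edges record which basic relations hold between each ordered pair of these vertices, and which additionally records the $\slocrel$ and $\sthdrel$ relationships among these vertices. The lemma then reads: for every named or derived relation $r$ of a common MCM, whether $r(e_1,e_2)$ holds is a function of $G(e_1,e_2)$ up to isomorphism fixing the marked endpoints $e_1,e_2$; in particular, if no basic path connects $e_1$ to $e_2$ then $G(e_1,e_2)$ is empty and $r(e_1,e_2)$ is false by Lemma~\ref{lem:common}.

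Next I would isolate the one structural fact that makes the induction close: if $e_3$ is a vertex of $G(e_1,e_2)$, then $G(e_1,e_3)$ and $G(e_3,e_2)$ are induced sub-structures of $G(e_1,e_2)$. Indeed, any event on a basic path $e_1\to e_3$ also lies on a basic path $e_1\to e_2$ (append the suffix $e_3\to e_2$ extracted from the path witnessing $e_3\in G(e_1,e_2)$), symmetrically for $G(e_3,e_2)$, and the $\slocrel/\sthdrel$ decorations restrict accordingly.

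The induction itself proceeds over the Kleene rounds exactly as in Lemma~\ref{lem:common}, and it suffices to treat defining equations with a single operator. Base case: initially every named relation is empty (the constant false function of $G(e_1,e_2)$), and every basic relation is read off directly from the length-one edges recorded in $G(e_1,e_2)$. Step: by properties (i) and (ii) of common MCMs the operators to consider are $\cup,\cap,\setminus$, composition, $(\cdot)^+$, $(\cdot)^*$, restriction by $\slocrel,\sthdrel,\bnev\times\bnev$, and the inverse --- the last only inside $\frrel$, which is itself basic and already handled. For $\cup,\cap,\setminus$ and a restriction $r_1\cap\slocrel$ (etc.), $r(e_1,e_2)$ is a Boolean combination of $r_1(e_1,e_2)$, $r_2(e_1,e_2)$ and the $\slocrel$-/$\sthdrel$-bit between $e_1$ and $e_2$, all determined by $G(e_1,e_2)$ by induction. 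For composition, $r_1;r_2(e_1,e_2)$ holds iff some $e_3$ satisfies $r_1(e_1,e_3)\wedge r_2(e_3,e_2)$; by Lemma~\ref{lem:common} any such $e_3$ is a vertex of $G(e_1,e_2)$, and by the sub-structure fact $r_1(e_1,e_3)$ and $r_2(e_3,e_2)$ are decided by the induced sub-structures $G(e_1,e_3)$ and $G(e_3,e_2)$ of $G(e_1,e_2)$, so the disjunction over $e_3$ is again a function of $G(e_1,e_2)$. Transitive and reflexive--transitive closure are the same argument applied to a chain $e_1\to f_1\to\dots\to f_k\to e_2$: every $f_i$ is a vertex of $G(e_1,e_2)$ and each step $r_1(f_i,f_{i+1})$ is decided by the corresponding induced sub-structure.

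I expect the main obstacle to be pinning down the correct notion of ``the basic paths from $e_1$ to $e_2$'': the naive reading --- the multiset of edge-labelled sequences --- is not enough, because the restriction operators force us to also carry same-location and same-thread information, and the composition and closure cases force the sub-structure closure property (events lying between intermediate events must again lie between $e_1$ and $e_2$). Once $G(e_1,e_2)$ is set up so that closure holds and the decorations are included, the inductive cases are routine and parallel Lemma~\ref{lem:common}.
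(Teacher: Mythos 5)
Your proof takes essentially the same route as the paper's: a structural induction over the Kleene iteration with a case split on the operators permitted by properties (i) and (ii) of common MCMs --- union/intersection/difference, the $\slocrel$/$\sthdrel$/${\bnev}\times{\bnev}$ restrictions, composition via an intermediate event that (by the path condition of Lemma~\ref{lem:common}) must lie on a basic path from $e_1$ to $e_2$, and the closures. Your explicit formalization of ``the basic paths'' as the decorated graph $G(e_1,e_2)$ together with the induced-substructure fact merely makes precise what the paper's more informal proof leaves implicit, in particular in the composition and closure cases, so it is a careful rendering of the same argument rather than a different one.
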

\begin{proof}
We use a structural induction over the Kleene iteration.

Induction Basis: Any named relation is initially the empty relation and trivially satisfies the condition. 
Any basic relation trivially satisfies the condition. 

Induction Step: Assume all current assignments of relations are determined by the basic paths between related events. 
Two events $e_1$ and $e_2$ are related by some relation $r$ iff the basic paths from $e_1$ to $e_2$ satisfy some property.
A named relation is updated according to its defining equation using the current assignments of the named relations and basic relations. 
Let $r_1$, $r_2$ be either named or basic relations.
Events $e_1,e_2$ are related by $r_1 \cup r_2$ ($r_1 \cap r_2$) if the basic paths from $e_1$ to $e_2$ satisfy the property of $r_1$ or $r_2$ (resp. $r_1$ and $r_2$).
Similar, events are related by $ r_1 \setminus r_2$ if the basic paths satisfy the property of $r_1$ but not $r_2$. 
For $ r_1 \cap sloc, r_1 \cap sthd, r_1 \cap {\bnev} \times {\bnev}$ two events $e_1,e_2$ are related if the property of $r_1$ is satisfied and $e_1$ and $e_2$ satisfy an additional condition. 
The condition for the events $e_1,e_2$ can be expressed as a additional condition for paths from $e_1$ to $e_2$.
 
The relation $r_ 1;r_2$ relates $e_1$ to $e_2$ if there is an $e_3$ such that the basic paths from $e_1$ to $e_3$ ensure $r_1(e_1,e_3)$ and the  basic paths from $e_3$ to $e_2$ ensure $r_1(e_3,e_2)$.
It follows that $r_ 1;r_2(e_1,e_3)$ depends on the basic paths from $e_1$ to $e_2$ over some $e_3$. 
The relations $r_1^*$ and $ r_1^+$ are derived using previously examined operators over relations and thus they satisfy the condition.

A named relation still satisfies the path condition after a Kleene iteration step.
\end{proof}

\begin{proof}[Proof of Theorem 1.1]
We show that $\Pallpsi$ is not portable iff $\psi$ has an unsatisfying assignment. 

($\Rightarrow$): We assume an execution $\exec$ exists that is $\mtarget$-consistent but not $\msource$-consistent.  
According to Lemma~\ref{lem:cycles}, the execution has a violating cycle for $\msource$. 
We assume towards contradiction that no event of $\Pnp$ is executed. 
The read $r \leftarrow y$ in $t''_1$ has to read from the write in $t_1$ (in $\Ppsi$) according to uniproc (the execution is $\mtarget$-consistent).
It occurs after $t_1$ in the program order. The read has incoming basic relations from events in $\Ppsi$ but no outgoing relations to some event in $\Ppsi$.
Any read $r \leftarrow y$ in another thread can either read from the write in $\Ppsi$ (it has an incoming $rf$ relation from $\Ppsi$) or it reads the initial value which results in an outgoing from-read relation to $\Ppsi$. There are no other basic relations between a read $r \leftarrow y$ and some event in $\Ppsi$.
So any read $r \leftarrow y$ has either basic incoming relations from $\Ppsi$ ($\rfrel, \porel$) or outgoing to $\Ppsi$ ($fr$), not both. 

It follows from Lemma~\ref{lem:common}, that no read $r \leftarrow y$ has both incoming and outgoing derived relations and so no read $r \leftarrow y$ is in a violating cycle. 
Any violating cycle is in $\Ppsi$.
Further, there is no basic path from some event $e_1$ in $\Ppsi$ to one of the reads and back to some $e_2$ in $\Ppsi$. 
The reads do not affect the basic paths between events in $\Ppsi$. 
So the violating cycle for $\msource$ is still present if we remove the reads and thus restrict the execution to events of $\Ppsi$. 
Since removing the reads does not affect the basic paths in $\Ppsi$, no violating cycle for $\mtarget$ has been added. 
It follows that the execution of $\Ppsi$ is consistent with $\mtarget$ but not $\msource$. 
This is a contradiction to $\Ppsi$ being always portable for common MCMs (Lemma~\ref{lem:psiportable}).

It follows that any violating cycle requires events from $\Pnp$ to be executed.
Since an event of $\Pnp$ can only be executed if $y=1$ was read in the thread (and thus written by $t_1$), the if-condition in Line~\ref{line:readpsi} was not satisfied.
It follows that there is an assignment that does not satisfy $\psi$.

($\Leftarrow$): We now assume that there is an assignment that doesn't satisfy $\psi$.
There is an SC-consistent execution $Z$ of $\Ppsi$ that executes the write $y\define r_{c,1}$. 
We extend this execution of $\Ppsi$ to an execution $\exec$ of $\Pallpsi$: 
We ensure that all reads $r\leftarrow y$ read from $y\define r_{c,1}$ and thus $\Pnp$ is executed.
Let $Y$ be an execution of $\Pnp$ that is $\mtarget$-consistent but not $\msource$-consistent. 
This results in an execution $\exec\supset Y\cup Z$ of $\Pallpsi$ that contains the executions of $\Ppsi$ and $\Pnp$, the read from relation for the reads of $y$ and the required program order additions. 
Since $\Pnp$ has no registers or locations in common with the rest of the program and occurs last in $\porel$, no basic relation of $\exec$ leaves $Y$. 
It follows that $\exec$ contains the same basic paths between events of $Y$ as $Y$.
Thus the violating cycle for $\msource$ in $Y$ is also in $\exec$. 
It follows that $\exec$ is not $\msource$-consistent. 
We show that $\exec$ is still $\mtarget$-consistent. 
We assume towards contradiction that there is a violating cycle for $\mtarget$: 
As before, it holds that a violating cycle must contain an event from $Y$. 
Since $Y$ is never left, any basic cycle of $\exec$ with events in $Y$ must be contained entirely in $Y$. 
It follows from (Lemma~\ref{lem:common}) that a violating cycle for $\mtarget$ must be entirely in $Y$.
This is a contradiction to the execution of $\Pnp$ being $\mtarget$-consistent.


Since $\Pallpsi$ is a polynomial-time reduction, portability is $\Pi^P_1$-hard. 
\end{proof}

\subsection{$\Pi^P_2$-Completeness of State Portability}

We introduce Lemma~\ref{lem:StateComplete} and \autoref{thm:StateHard} in order to show that state portability is both in $\Pi^P_2$ and is $\Pi^P_2$-hard.
It follows, that state portability is $\Pi^P_2$-complete for common MCMs and thus Theorem~\ref{thm:TraceComplete}.2 is correct.

\begin{lemma}\label{lem:StateComplete}
State-based portability is in $\Pi_2$ for all MCMs.
\end{lemma}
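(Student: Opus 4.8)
\medskip
\noindent\textbf{Proof plan.}
The plan is to read the quantifier structure straight off Definition~\ref{def:StatePort} and show that the resulting matrix is decidable in polynomial time, so that the negation of state portability lands in $\Sigma_2^P$. Recall that a candidate execution of the (already unrolled, hence acyclic) program $P$ is fully determined by the executed set $\executed$ together with the base relations $\rfrel$ and $\corel$; since the events are in bijection with the memory-accessing instructions of the unrolled $P$, such an execution --- and the state $\reach X$ it computes --- is an object whose size is polynomial in $|P|$. First I would rewrite Definition~\ref{def:StatePort}: $P$ is \emph{not} state portable from $\msource$ to $\mtarget$ iff
\[
\exists X\ \forall Y:\ \ X \in \consistent{\mtarget}{P}\ \wedge\ \bigl(Y \notin \consistent{\msource}{P}\ \vee\ \reach X \neq \reach Y\bigr),
\]
where $X$ and $Y$ range over candidate executions of $P$. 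Both quantifiers are over polynomially-bounded objects, so the whole statement is a $\Sigma_2^P$ predicate as soon as its matrix is polynomial-time checkable --- which makes state portability membership in $\Pi_2^P$ follow by complementation.

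Next I would verify that every part of the matrix is in P. Deciding whether $X$ is a candidate execution of $P$ is a linear scan of the unrolled program together with evaluation of concrete expressions and predicates: one checks that $\executed$ traces a control-flow path, that $\porel,\adrel,\ddrel,\cdrel$ and the fence relations are exactly the ones induced by the source code, and that the data-flow constraints hold; this is polynomial under the standing assumption that the data domain supports efficient evaluation. The state $\reach X$ is obtained by reading off, for each location, its $\corel$-maximal write and, for each register, its $\porel$-maximal write, so computing $\reach X$ and $\reach Y$ and testing $\reach X = \reach Y$ are polynomial as well.

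The one genuinely technical point --- and the step I expect to be the main obstacle --- is showing that consistency with an MCM $\mm$ is decidable in polynomial time \emph{even in the presence of recursively defined derived relations}. Given $X$, the named relations are the least solution of the CAT equation system over the finite lattice of subsets of $\events \times \events$; computing it by Kleene iteration from the empty relations converges within polynomially many rounds (at most $|\events|^2$ times the number of relation names), since every round that has not yet reached the fixpoint adds at least one pair to some named relation, and each round only evaluates the CAT operators $\cup$, $\cap$, $\setminus$, inverse, (reflexive) transitive closure, and relational composition on relations over $\events$, all polynomial-time computable. With all derived relations in hand, an acyclicity axiom reduces to cycle detection and an irreflexivity axiom to inspecting the diagonal, so membership in $\consistent{\mm}{P}$ is in P. Hence the displayed formula is a genuine $\Sigma_2^P$ predicate, its complement (state portability) lies in $\Pi_2^P$ for every MCM, and the remaining work is only the bookkeeping sketched above.
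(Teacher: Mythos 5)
Your proposal is correct, and it follows the same quantifier decomposition that the definition forces (an execution consistent with $\mtarget$, versus all executions consistent with $\msource$ computing the same state), but the technical core differs from the paper's. The paper establishes membership by reusing its Section~\ref{sec:encoding} machinery: it encodes $X \in \consistent{\mm}{P}$ and $\reach X = \sigma$ as polynomial-size Boolean/SMT formulas (values as Boolean sequences, fixed points and acyclicity via the certificate variables $\Phi$ and $\Psi$) and exhibits a closed formula $\forall X\,\exists Y\colon (X \in \consistent{\mtarget}{P}) \imp (Y \in \consistent{\msource}{P} \land \reach X = \reach Y)$, i.e.\ a reduction to $\forall\exists$-QBF validity. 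You instead work with the verifier characterization of $\Pi^P_2$ and show the matrix is decidable in deterministic polynomial time, replacing the certificate-based encoding of recursive relations by a direct Kleene iteration (your bound of at most $|\events|^2$ pairs per named relation is the right argument) and acyclicity by plain cycle detection. Your route is more elementary and self-contained: it does not depend on the completeness of the certificate encodings of Section~\ref{sec:encoding}, whereas the paper's route has the side benefit of yielding a concrete quantified formula aligned with the tool's encoding. One small mismatch to flag: you invoke a ``standing assumption that the data domain supports efficient evaluation,'' while the paper only assumes the domain admits an SMT encoding whose satisfiability is in NP. This is harmless here --- in the paper's own proof values are encoded as fixed-length Boolean sequences, so ground evaluation is polynomial, and even if the data-flow check were merely an NP predicate, folding its witness into $X$ and the co-NP-shaped antecedent on $Y$ into further universal quantifiers still lands the complement in $\Sigma^P_2$ --- but it would be cleaner to state explicitly that you treat valuations as part of the (polynomial-size) execution objects being quantified over.
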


\begin{proof}
We encode the state portability property in a closed formula (i.e. all variables are quantified) of the form $\forall \exists \psi$. 
We have already shown how to encode consistency of an execution $\exec$ with an MCM $\mm$ as a formula ($X \in \consistent{{\mm}} P$) in~\autoref{sec:encoding}. 
Again we encode numbers as sequences of Boolean variables.
Let $val(e)$ be the value that is read/written by a read or write event $e$ and $loc(e)$ the location it accesses.
We can encode the property $\reach X=\sigma$ in a Boolean formula as follows.
If a write has no outgoing $co$ relation, then it must have the same value as the location in the state:
$$\bigwedge_{w\in \stores} (\bigwedge_{w'\in \stores} \neg (w\co w')) \rightarrow val(w)=\state (loc(w)).$$
In a similar way we can ensure that the last operation in the program order on a register $r$ has the value $\state (r)$.
This means we can construct Boolean formulas for properties of the form $X \in \consistent{{\mm}} P$ and $\reach X = \sigma$.
With this, we can construct the following formula:

 $$\forall X\; \exists Y:\; (X \in \consistent{{\mtarget}} P ) \imp$$
 $$( Y \in \consistent{{\msource}} P \land \reach X  =\reach Y  ).$$

This is equivalent to state portability (see Definition~\ref{def:StatePort}).
The state portability problem from $\msource$ to $\mtarget$ can be expressed as a closed quantified formula of the form $\forall \exists \psi$ and thus state based portability is in $\Pi^P_2$. 
\end{proof}

We now introduce the program $P_\psi$ and examine its behavior. We will then use $P_\psi$ in order to prove $\Pi^P_2$-hardness. 

Let $\psi (x_1 \dots x_m)$ be a be a Boolean formula over variables $x_1 \dots x_m$.
The concurrent program $P_\psi:= t_1\parallel t_2$ with two threads $t_1$ and $t_2$ is defined below. 
The program is similar to the program in the previous section. 
It contains additional synchronization in order to ensure that the formula assignment and the computed state match.
The program either computes a satisfying assignment of $\psi$ ($y=1$), an unsatisfying assignment ($y=0$) or it ends with an error ($y=2$). 

We use the value 0 to encode the Boolean value false. 
To avoid confusion, we assume that the variables are initialized with some other unused value, e.g. 3. 
This does not interfere with the validity of the proofs since our program only assigns constants and thus 0 and 3 are interchangeable.

We will see that it is sufficient to examine the program under SC (the strongest common MCM) 
where an execution is an interleaving of the two thread executions.
The threads first create some variable assignment $A$; thread $t_1$ assigns 0 to the variables and $t_2$ assigns 1. 
The assignment $A$ is determined by the interleaving of those writes. If the write $x_i\define 0$ of $t_1$ is followed by $x_i\define 1$ of $t_2$
($w_{i,0} \co w_{i,1}$), then $x_i$ is set to 1.
Then $t_1$ ensures that $t_2$ has executed all its writes (so that the assignment doesn't change anymore) by
using the synchronization variables $x'_1 \dots x'_m$ to check that $t_1$ and $t_2$ reads the same assignment.
If that is not the case, then some writes of $t_2$ have not occurred yet and $y$ is set to 2 in Line~\ref{line:y=2}. 
If all the writes from $t_2$ have occurred, $t_1$ checks whether $A$ satisfies $\psi$ ($y$ is set to 1) or not ($y$ is set to 0).

		\begin{algorithm}\label{thread:t1}
			\NoCaptionOfAlgo
			$r_{c,0} \leftarrow 0;\ r_{c,1}\leftarrow 1;\ r_{c,2} \leftarrow 2$\;  \label{line:rci}
			 $x_1 \define r_{c,0};\ \ldots x_m\define r_{c,0}$ \tcp*{$w_{1,0} \dots w_{m,0}$}
			 $r'_1 \leftarrow x'_1;\ldots r'_m \leftarrow x'_m$\label{line:rx1done} \tcp*{$r'_{1} \dots r'_{m}$}
			 $r_1 \leftarrow x_1;\ldots r_m \leftarrow x_m$ \label{line:rx}\tcp*{$r_{1} \dots r_{m}$}
			\eIf(\tcp*[f]{If $A$ is incomplete}) {$\neg (r_1=r'_1\wedge \dots \wedge r_m=r'_m)$}{\label{line:checkerror}
				$y\define r_{c,2}$ \tcp*{exit with error.} \label{line:y=2}
			}{
			    \eIf(\tcp*[f]{If $A$ satisfies $\psi$,}) {$\psi(r_1 \dots r_m)$}{\label{line:readpsi}
			    		$y\define r_{c,1}$ \tcp*{return 1.}
			    }{
			    		$y\define r_{c,0}$ \tcp*{If it does not, return 0.}
			    }
			} 
			\caption{$t_1$}  
		\end{algorithm}

			\NoCaptionOfAlgo

		\begin{algorithm}\label{thread:t2}
			$r_{c,1} \leftarrow 1;\ r_{c,3} \leftarrow 3$\;
			 $x_1 \define r_{c,1};\ \ldots x_m\define r_{c,1}$ \tcp*{$	w_{1,1} \dots w_{m,1}$}
			 $r_1 \leftarrow x_1;\ldots r_m \leftarrow x_m$ \tcp*{$\bar{r}_1 \dots \bar{r}_m$}\label{line:t2readx}
			 $x'_1 \define r_1;\ \ldots x'_m\define r_m$ \tcp*{$\bar{w}_1 \dots \bar{w}_m$} \label{line:t2writex}
			 $x'_1 \define r_{c,3};\ \ldots x'_m\define r_{c,3}$ 	\label{line:t2overwritex}
			\caption{$t_2$}  
			\RestoreCaptionOfAlgo
		\end{algorithm}

To simplify our study we will define a state only over its locations, not the registers. This does not change the complexity of the state portability problem:
We could simply add instructions to our input programs that write all registers to locations in the end. We can use our simpler notion of states without registers on the input program with the added instructions to solve the original state portability problem with registers.

An assignment $A$ of a set of Boolean variables $V$ is a function $A\subset V\times \{ 0,1 \}$ that assigns either 0 or 1 to a variable. 
\begin{definition}
Given an Assignment $A$ of $x_{1} \dots x_n$, locations $y_1\dots y_m$ and a number $a\in \mathbb{N}$, let 
$\stdef{A}{\ass{y_1 \dots y_n}{a}}$ denote the state with 
\begin{itemize}
	\item $\stdef{A}{\ass{y_1 \dots y_n}{a}}(x_i)=A(x_i)$ for $i\leq n$ and 
	\item $\stdef{A}{\ass{y_1 \dots y_n}{a}}(y_j)=a$ for $j\leq m$,
	\item $\stdef{A}{\ass{y_1 \dots y_n}{a}}(z)=v$ for $z$ any other location and $v$ the initial value (we use 3).
\end{itemize}
\end{definition}

We lift the definition accordingly to $$\stdef{A}{\ass{y_1 \dots y_n}{a};\ass{z_1 \dots z_l}{b}}.$$
The program $P_\psi$ can compute some assignment $A$ with $y=1$ or $y=0$ depending on whether $A$ satisfies $\psi$.

The following lemmas show that $P_\psi$ behaves similar for all common MCMs.

\begin{lemma}\label{lem:progpsiuniproc}
Let $\mm$ contain uniproc and $A$ be an assignment of $\psi$. 
It holds $A\models \psi$ (resp. $A \not\models \psi$) only if
$$ \exists X \in \consistent{{\mm}}{P_\psi}: \reach X=\stdef{A}{\ass{y}{1}}$$
$$(\text{resp. }\reach X =\stdef{A}{\ass{y}{0}}).$$
\end{lemma}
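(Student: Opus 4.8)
The plan is to establish both implications with a single explicit witness. Given $A$, I would construct one interleaving $X$ of the two threads of $P_\psi$ and show that $\reach X=\stdef{A}{\ass{y}{1}}$ when $A\models\psi$ and $\reach X=\stdef{A}{\ass{y}{0}}$ when $A\not\models\psi$. Any interleaving is SC-consistent, and $\consistent{\mathrm{SC}}{P_\psi}\subseteq\consistent{\mm}{P_\psi}$ by property (iv) of Definition~\ref{Definition:Common} (portability of $P_\psi$ from $\mm$ to SC), so such an $X$ is automatically $\mm$-consistent. Hence it suffices to build the interleaving and read off its final state over the locations; the registers do not enter the simplified notion of state.

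First I would fix the relative scheduling of the two threads' writes to each variable $x_i$, which is the only delicate choice: schedule $t_2$'s write of $1$ to $x_i$ \emph{before} $t_1$'s write of $0$ to $x_i$ exactly when $A(x_i)=0$, and the other way round when $A(x_i)=1$. Running the register initialisations of both threads first and then handling the indices $1,\dots,m$ in turn makes this a genuine interleaving, after which every $x_i$ permanently holds $A(x_i)$. Next let $t_2$ read the $x_i$ into its registers (obtaining $A(x_i)$) and copy them into the synchronisation variables $x'_i$; then let $t_1$ perform $r'_i\leftarrow x'_i$ and $r_i\leftarrow x_i$, so that $r_i=r'_i=A(x_i)$ for every $i$. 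Consequently the error guard $\neg\bigwedge_i (r_i=r'_i)$ on line~\ref{line:checkerror} is false, $t_1$ takes the else branch, and --- according to whether $A\models\psi$ --- it executes $y\define r_{c,1}$ or $y\define r_{c,0}$, writing $1$ resp.\ $0$ into $y$. Finally let $t_2$ run its last block $x'_i\define r_{c,3}$, restoring every $x'_i$ to the initial value $3$.

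Reading off the state over the locations then yields $x_i=A(x_i)$ (the last write to $x_i$ in $\corel$), $x'_i=3$ (the last write to $x'_i$, by $t_2$), $y=1$ resp.\ $y=0$ according as $A\models\psi$ or not, and every other location at its initial value $3$; this is precisely $\stdef{A}{\ass{y}{1}}$ resp.\ $\stdef{A}{\ass{y}{0}}$. Being an interleaving, $X$ satisfies all SC axioms, so $X\in\consistent{\mathrm{SC}}{P_\psi}\subseteq\consistent{\mm}{P_\psi}$, which proves the lemma. I expect the main obstacle to be the write-scheduling bookkeeping: since $t_2$ unconditionally writes $1$ to every $x_i$, the final coherence value equals $A(x_i)$ only if $t_2$'s write precedes $t_1$'s exactly on the bits with $A(x_i)=0$; and symmetrically $t_2$'s final overwrite of the $x'_i$ to $3$ must come after $t_1$ has read them, for otherwise the error guard could not be falsified and $t_1$ would instead write $2$ into $y$. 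Verifying that the reads-from and coherence relations induced by the chosen interleaving are well defined, and that $t_1$'s control flow really follows the claimed branch, is then routine.
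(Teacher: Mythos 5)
Your proposal proves the wrong implication for this lemma, and even that implication is not derivable from the stated hypothesis. Despite the ``only if'' phrasing, Lemma~\ref{lem:progpsiuniproc} is used in the paper (as ``the other direction'' in the proof of Lemma~\ref{lem:progpsi}, and again in Lemma~\ref{lem:mssat} and Theorem~\ref{thm:StateHard}) as a \emph{soundness} statement: whenever an execution satisfying uniproc computes a state with $y=1$ (resp.\ $y=0$), that state has the form $\stdef{A}{\ass{y}{1}}$ (resp.\ $\stdef{A}{\ass{y}{0}}$) where $A$ is the assignment actually read by $t_1$, and $A\models\psi$ (resp.\ $A\not\models\psi$). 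The paper's proof therefore starts from an \emph{arbitrary} uniproc-satisfying execution with $y=1$, observes that $t_1$ must have passed the $r_i=r'_i$ check and evaluated $\psi$ on the values read, and then does the real work: a two-case coherence argument showing that the writes read by $r_1\dots r_m$ are $\corel$-maximal --- otherwise one gets a cycle $w_{i,0}\poloc r_i\fr w_{i,0}$, or, using the copy through $x'_i$ and $\bar r_i,\bar w_i$, a cycle $w_{i,1}\poloc\bar r_i\fr w_{i,1}$, both violating uniproc. This is what guarantees that the assignment against which $\psi$ was checked coincides with the assignment in the computed state, i.e.\ that no consistent execution can ``cheat''. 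Your proposal contains none of this; constructing one witness interleaving for a given $A$ cannot establish a claim about \emph{all} consistent executions, and it is precisely the anti-cheating direction that the hardness proofs later rely on.

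Read literally as the existence direction, your argument still has a gap: the lemma assumes only that $\mm$ contains uniproc, not that $\mm$ is common, yet your final step invokes property (iv) of Definition~\ref{Definition:Common} to conclude $\consistent{\mathrm{SC}}{\Ppsi}\subseteq\consistent{\mm}{\Ppsi}$. An MCM that merely contains uniproc may impose further axioms that reject your SC interleaving (or every execution of $\Ppsi$), so SC-consistency does not give $\mm$-consistency here. The interleaving construction itself is sound and is essentially what the paper does --- but it does so in the first half of the proof of Lemma~\ref{lem:progpsi}, where $\mm$ is assumed common; as a proof of Lemma~\ref{lem:progpsiuniproc} it both misses the implication the paper actually needs and borrows a hypothesis the lemma does not grant.
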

\begin{proof}

We show that any $\mm$-consistent execution with $y=0$ or $y=1$ computes a desired state.
Let $\exec$ be an execution that satisfies \uniproc\ and computes some \mem\ with $\state (y)=1$ ($\state (y)=0$ is analogue). 
Since $y\define 1$ is executed in $t_1$, 
it follows that $\psi$ is satisfied by the values of $x_1\dots x_m$ read by $r_1 \dots r_m$ in Line~\ref{line:rx} and also $r'_1, \dots ,r'_m$ in Line~\ref{line:rx1done} read the same values. We call this assignment $A$.
Since the reads of Line~\ref{line:t2overwritex} of $t_2$ occur after the writes $\bar{w}_1 \dots \bar{w}_m$ they are ordered last in $\corel$ according to uniproc. 
The execution computes 3 for $x'_1\dots x'_m$. 

It remains to show that the writes accessed by $r_1\dots r_m$ are indeed computed by $\exec$, meaning they are ordered last in $\corel$.
Towards contradiction, we assume this is not the case.
Then, there is a write $w_{i,1}$ or $w_{i,0}$ that is accessed by a read but its value is not computed (it is not last in $co$).

\paragraph*{Case 1:} Assume that this write is $w_{i,1}$. 
The write is accessed by a read (${w_{i,1}} \rf {r_i}$) and it is not last in the coherence order (${w_{i,1}} \co {w_{i,0}}$). 
According to ${\frrel} \define {\rfrel^{-1};\corel}$, it holds ${r_i} \fr {w_{i,0}}$.
Since $w_{i,0}$ occurs before $r_i$ in $t_1$ and they access the same location, they are related w.r.t $\porel$ and $\slocrel$ and thus $w_{i,0}\poloc r_i\fr w_{i,0}$. 
This cycle is a contradiction to $\exec$ satisfying \uniproc\ which is property (iii) of common MCMs.

\paragraph*{Case 2:} Assume that there is a write $w_{i,0}$ that is read (${w_{i,0}} \rf {r_i}$) and its value is not computed (${w_{i,0}} \co {w_{i,1}}$). It follows that $r_i$ reads the value 0. 
Since $y=1$ is computed, the condition in \autoref{line:readpsi} is not satisfied and since ${val(r_i)} = {val(r'_i)}$, we know that $r'_i$ also reads 0.
This means $r'_i$ reads not the initial value 3 so it must read from $\bar{w}_i$.
For the write $\bar{w}_i$ that $r'_i$ reads from (${\bar{w}_i} \rf {r'_i}$) follows that $val(\bar{w}_i)=val(r'_i)=0$.
According to the data-flow, $\bar{w}_i$ writes the value that was obtained by the previous read $\bar{r}_i$ which must be 0 ($val(\bar{r}_i)=val(\bar{w}_i)=0$). 
Since $\bar{r}$ reads 0 it must read from the only write of 0 ($w_{i,0}\rf \bar{r}_i$). 
From $w_{i,0}\rf \bar{r}_i$ and $w_{i,0}\co w_{i,1}$ follows $\bar{r}_i\fr w_{i,1}$. 
This leads to the cycle $w_{i,1}\poloc \bar{r}_i \fr w_{i,1}$, which is a contradiction to $\exec$ satisfying \uniproc. 

So the writes accessed by $r_1\dots r_m$ are ordered last by $\corel$ and thus the assignment that $\psi$ is checked against is computed by the execution.
It follows that $\stdef{A}{\ass{y}{1}}$ is computed. 
\end{proof}

\begin{lemma}\label{lem:progpsi}
Let $\mm$ be a common MCM and $A$ be an assignment of $\psi$. 
It holds $A\models \psi$ (resp. $A \not\models \psi$) iff
$$ \exists X \in \consistent{{\mm}}{P_\psi}: \reach X=\stdef{A}{\ass{y}{1}}$$
$$(\text{resp. }\reach X =\stdef{A}{\ass{y}{0}}).$$
\end{lemma}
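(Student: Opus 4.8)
The plan is to obtain the biconditional by pairing Lemma~\ref{lem:progpsiuniproc} with an explicit witness execution. Lemma~\ref{lem:progpsiuniproc} already supplies one implication for every MCM containing \uniproc: from any $\mm$-consistent execution of $P_\psi$ that computes a state with $y=1$ (resp. $y=0$) it extracts an assignment $A$ with $A\models\psi$ (resp. $A\not\models\psi$) and forces the computed state to be exactly $\stdef{A}{\ass{y}{1}}$ (resp. $\stdef{A}{\ass{y}{0}}$). Since every common MCM satisfies \uniproc\ by property~(iii) of Definition~\ref{Definition:Common}, this half carries over verbatim, and the only thing left to prove is the converse: if $A\models\psi$ then some $\mm$-consistent execution of $P_\psi$ computes $\stdef{A}{\ass{y}{1}}$, and symmetrically if $A\not\models\psi$ then one computes $\stdef{A}{\ass{y}{0}}$.

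For the converse I would construct a single interleaving $Z$ of $t_1$ and $t_2$, hence an SC-consistent execution, and then invoke property~(iv) of Definition~\ref{Definition:Common}, i.e. $\consistent{\mathrm{SC}}{P_\psi}\subseteq\consistent{\mm}{P_\psi}$, to conclude $Z\in\consistent{\mm}{P_\psi}$. The interleaving proceeds in stages: (1) the register-initialising prefixes of both threads; (2) a shuffle of $t_1$'s write block $w_{1,0},\dots,w_{m,0}$ with $t_2$'s write block $w_{1,1},\dots,w_{m,1}$ that, for each $i$, schedules $w_{i,0}$ before $w_{i,1}$ exactly when $A(x_i)=1$, so that every $x_i$ ends up holding $A(x_i)$; (3) $t_2$'s reads $\bar r_i\leftarrow x_i$, which obtain $A(x_i)$, followed by $t_2$'s copy-writes $x'_i\define\bar r_i$, so that every $x'_i$ holds $A(x_i)$; (4) $t_1$'s reads $r'_i\leftarrow x'_i$ and $r_i\leftarrow x_i$, both obtaining $A(x_i)$, so that the error guard of Line~\ref{line:checkerror} is false (every $r_i=r'_i$) and $y\define r_{c,2}$ is skipped, while the guard $\psi(r_1\dots r_m)$ of Line~\ref{line:readpsi} holds (as $A\models\psi$ and each $r_i$ equals $A(x_i)$) and $y\define r_{c,1}$ executes; (5) $t_2$'s closing writes $x'_i\define r_{c,3}$, which reset every $x'_i$ to the initial value. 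The computed state of $Z$ is then $x_i=A(x_i)$ for all $i$, $y=1$, and the initial value everywhere else, i.e. precisely $\stdef{A}{\ass{y}{1}}$. When $A\not\models\psi$ the very same interleaving is used; only the $\psi$-guard now fails, so $y\define r_{c,0}$ runs instead and $Z$ computes $\stdef{A}{\ass{y}{0}}$.

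The main obstacle will be checking that the scheduling in stages (2)--(4) is at once (a) a linearisation of both threads' program orders and (b) arranged so that $P_\psi$ never reaches the error outcome $y=2$. Part (a) in stage (2) is possible because both threads emit the $x_i$-writes in the same index order, so the choice of which of $w_{i,0},w_{i,1}$ goes first never clashes with program order; part (b) requires each read $r'_i$ to fall in the window strictly after $t_2$'s copy-write $x'_i\define\bar r_i$ and strictly before $t_2$'s reset $x'_i\define r_{c,3}$, since otherwise $r'_i$ would observe the initial value, disagree with $r_i$, and trigger $y\define r_{c,2}$. Once $Z$ is fixed its SC-consistency is routine, the interleaving inducing a coherence order per location with every read matched to the latest preceding write to its location and $\porel\cup\rfrel\cup\corel\cup\frrel$ acyclic along the linear order; membership in $\consistent{\mm}{P_\psi}$ is then immediate from property~(iv), so nothing beyond these bookkeeping checks is needed.
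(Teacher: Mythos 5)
Your proposal is correct and follows essentially the same route as the paper: the converse direction is obtained by explicitly building an SC interleaving in which the writes to $x_1\dots x_m$ are shuffled to realise $A$, $t_2$ copies the values to $x'_1\dots x'_m$ before $t_1$'s checks so the error branch is avoided, $y$ is set according to whether $A\models\psi$, and the $x'_i$ are reset to the initial value, after which property~(iv) of common MCMs lifts SC-consistency to $\mm$-consistency; the other direction is delegated, exactly as in the paper, to Lemma~\ref{lem:progpsiuniproc} via property~(iii). Your write-up is in fact somewhat more detailed on the scheduling constraints than the paper's own proof, but the decomposition and key ingredients coincide.
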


\begin{proof}
Given some assignment $A$, we can easily construct an SC-consistent execution where the writes to $x_1 \dots x_n$ are interleaved according to the assignment ($w_{i,0}\co w_{i,1}$ if $x_i$ is satisfied). 
Then $t_2$ reads those values and writes them to $x'_1\dots x'_n$. 
Now $t_1$ reads $x'_1\dots x'_n$ and the if-condition in \autoref{line:checkerror} is not satisfied, we go in the else-branch.
So $t_1$ sets $y$ to $0$ or $1$ depending on whether $A\models \psi$ and $t_2$ sets $x'_1\dots x'_n$ back to the initial value 3. 
It follows that for every assignment $A$, there is a SC consistent execution $\exec$ that computes $\stdef{A}{\ass{y}{0}}$ or $\stdef{A}{\ass{y}{1}}$ depending on whether $A$ satisfies $\psi$.
Since any SC-consistent execution is consistent with all common MCMs, we can compute the desired state for any assignment of $\psi$.
The other direction follows diretly from Lemma~\ref{lem:progpsiuniproc}.
\end{proof}

In order to show $\Pi^P_2$-hardness, we reduce validity of a closed formula $\forall x_1 \dots x_n\exists y_1 \dots y_m: \psi$ to state portability.
The idea is to construct a program that uses $P_\psi$ in order to check if some assignment satisfies $\psi$ and then overwrite $y_1\dots y_m$ with $1$ so that the assignment of $y_1 \dots y_m$ is not given by the computed state.
If $\psi$ was not satisfied, the non-portable component $\Pnp$ is executed. 
If the execution of $\Pnp$ is $\mtarget$-consistent but not $\msource$-consistent, then it pretends that the formula was satisfied by setting $y$ to 1. This means, that under $\mtarget$, any assignment of $x_1 \dots x_n$ with $y=1$ can be computed.
It follows that the program is portable if any assignment of $x_1 \dots x_n$ with $y=1$ can be computed under $\msource$.
Under $\msource$ however, pretending is not possible. 
Here $\Ppsi$ can only be set to $1$ by $\Ppsi$. 
So under $\msource$, an assignment of $x_1 \dots x_n$ and $y=1$ can only be computed if there is some assignment of $y_1 \dots y_m$ so that $\psi$ is satisfied.
The program is portable if $\forall x_1 \dots x_n\exists y_1 \dots y_m: \psi$ holds.
 
We want a simple non portable program that always computes the initial state except under $\mtarget$, where it can set a location $z$ to 1.
We use a program $P_{np}= t'_1 \parallel \dots \parallel t'_k$ with the following properties: 
Any execution consistent with $\msource$ computes $3$ for all its locations and contains no write that sets $z$ to 1.
The $\mtarget$-consistent executions compute either $z=1$ or $z=3$ and $3$ for all other locations. 
The program contains only one write to $z$ which is in $t'_1$. 

We assume the state portability problem from $\msource $ to $\mtarget $ is not trivial and a program exists that has an $\mtarget$-consistent execution such that no $\msource$-consistent execution computes the same state \mem. 
We can assume that the program only assigns constant values and has no write on $z$. 

Similarly to the synchronization of $P_\psi$, we can add a mechanism at the end of all threads that does the following:
all threads check if they read \mem; if so, they communicate that to $t_1$ which sets $z$ accordingly and then all threads set all other locations back to $3$. 
It follows that a program $P_{np}$ with the required properties exists.

Given a formula $\forall x_1 \dots x_n\exists y_1 \dots y_m: \psi$, we use $P_\psi= t_1\parallel t_2$ and $P_{np}= t'_1 \parallel \dots \parallel t'_k$ to construct a program $P_s:= t^s_1 \parallel \dots \parallel t^s_k$. We define the threads below. 

\begin{algorithm}
	 $t_1$ \tcp*{Try some assignment.}
	 $r_y\leftarrow y$\;
	\If(\tcp*[f]{If $\psi$ was not satisfied,})
	{$r_y=0$}{\label{line:readphi}
	     $t'_1$ \tcp*{execute $P_{np}$.}
	     $r_z\leftarrow z$\;\label{line:readPnr}
	    \If(\tcp*[f]{If not $\msource$-consistent,}) {$r_z=1$}{
	         $z\define r_{c,3}$ \tcp*{pretend it is $\msource$-consistent }
	         $y\define r_{c,1}$ \label{line:pretendsat} \tcp*{and pretend $\psi$ was satisfied.}
	    }
	} 
	         $y_1\define r_{c,1};\cdots\ y_m\define r_{c,1}$\label{line:overwrite} \tcp*{Overwrite $y_1..y_m$ assignment.}
	\caption{$t^s_1$}
	\label{thread:ts1}  
\end{algorithm}

Let $t_i:=skip$ for $i\leq 3\leq k$ ($t_1$ and $t_2$ are from $P_\psi$). 
The threads are defined for $2\leq i\leq k$ as
$$t^s_i:=t_i;\ r_y \leftarrow y; \textbf{ if }(r_y=0) \textbf{ then }t'_i.$$

In general terms, $P_s$ does the following: First, it executes $P_\psi$. If the state computed by $P_\psi$ did not satisfy $\psi$ ($y=0$), then it executes $P_{np}$, which is not portable. If the execution of $P_{np}$ is $\mtarget$, but not $\msource$-consistent ($z=1$), then $P_s$ pretends, that the formula was satisfied by setting $y$ to 1. 
Afterward, $y_1\dots y_m$ are set to 1, so that their former assignment checked in $P_\psi$ is no longer given by the computed state.

\begin{lemma}\label{lem:Psmm2}
For every assignment $A$ of $x_1\dots x_n$ holds
  $$\exists X \in \consistent{{\mtarget}} {P_s}:\; \reach X =\stdef{A}{\ass{y,y_1\dots y_m}{1}}.$$
\end{lemma}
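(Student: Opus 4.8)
The goal is to exhibit, for an arbitrary assignment $A$ of $x_1\dots x_n$, a single $\mtarget$-consistent execution of $P_s$ that computes the state $\stdef{A}{\ass{y,y_1\dots y_m}{1}}$. I will build this execution in two regimes depending on whether $A$ (extended by some assignment of $y_1\dots y_m$) satisfies $\psi$, but in both cases the final state will be the same because Line~\ref{line:overwrite} overwrites $y_1\dots y_m$ with $1$ and, one way or another, $y$ will be set to $1$.

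\emph{Case 1: there is an extension of $A$ to $y_1\dots y_m$ satisfying $\psi$.} Then by Lemma~\ref{lem:progpsi} (applied to that extended assignment, which I also call $A$ by abuse, over variables $x_1\dots x_n,y_1\dots y_m$) there is an execution $X_\psi \in \consistent{\mtarget}{P_\psi}$ with $\reach{X_\psi}=\stdef{A}{\ass{y}{1}}$. Since $y=1$, the test $r_y=0$ in Line~\ref{line:readphi} fails in every thread $t^s_i$, so $P_{np}$ is not executed and the skip-threads contribute nothing. The only further events are the reads $r_y\leftarrow y$ (reading the write $y\define r_{c,1}$ from $t_1$) and the writes $y_1\define r_{c,1};\dots;y_m\define r_{c,1}$ in Line~\ref{line:overwrite}. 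Extend $X_\psi$ by these events, with the $r_y$-reads taking their $\rfrel$-edge from the $t_1$-write to $y$, the $\porel$-edges dictated by thread structure, and the new writes to $y_1\dots y_m$ placed last in the coherence order on their locations. I must check this extension is $\mtarget$-consistent: by Lemma~\ref{lem:cycles} it suffices to rule out violating cycles, and by Lemma~\ref{lem:common} any such cycle corresponds to a basic path through the new events. But $r_y\leftarrow y$ reads the $t_1$-write and has no outgoing basic relation back into $P_\psi$ (the only candidate would be an $\frrel$-edge, excluded since it reads the $co$-maximal write to $y$, which holds because $y$ is written only in $t_1$ here); and the $y_j\define r_{c,1}$ writes are $co$-maximal and read by nobody, so they too have no outgoing basic edge. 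Hence no new violating cycle is created, and the execution computes $A$ on the $x_i$ (inherited from $X_\psi$), $1$ on $y$, $1$ on each $y_j$ (the $co$-maximal writes), and $3$ elsewhere — i.e.\ $\stdef{A}{\ass{y,y_1\dots y_m}{1}}$.

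\emph{Case 2: no extension of $A$ satisfies $\psi$.} Pick any extension, still called $A$; by Lemma~\ref{lem:progpsi} there is $X_\psi\in\consistent{\mtarget}{P_\psi}$ with $\reach{X_\psi}=\stdef{A}{\ass{y}{0}}$. Now $r_y=0$, so $P_{np}$ runs in every $t^s_i$. By the stated properties of $P_{np}$, it has a $\mtarget$-consistent execution $X_{np}$ that computes $z=1$ (and $3$ on all its other locations), while no $\msource$-consistent execution does. Since $r_z=1$ in $t^s_1$, the writes $z\define r_{c,3}$ (Line) and $y\define r_{c,1}$ (Line~\ref{line:pretendsat}) execute, and then Line~\ref{line:overwrite} runs. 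Assemble the execution as $X\supseteq X_\psi\cup X_{np}$ plus the $r_y$-reads, the $r_z$-read, the write $z\define r_{c,3}$, the write $y\define r_{c,1}$ of Line~\ref{line:pretendsat}, and the $y_j$-writes; make $z\define r_{c,3}$ $co$-maximal on $z$ (so the final value of $z$ is $3$, matching $\stdef{}{}$'s requirement of $3$ on non-$x$, non-$y$ locations), make $y\define r_{c,1}$ the $co$-maximal write to $y$ that is actually read/observed (so $\reach X(y)=1$; note the original $y\define r_{c,0}$ in $t_1$ and this write are the writes to $y$, and placing this one last is legitimate since no uniproc cycle arises — $t^s_1$ executes it strictly after reading $y$), and make the $y_j\define r_{c,1}$ writes $co$-maximal. $\mtarget$-consistency again follows from Lemmas~\ref{lem:cycles} and~\ref{lem:common}: $P_{np}$ shares no registers/locations with the rest except $z$, and $z$ is touched outside $P_{np}$ only by the read $r_z\leftarrow z$ (incoming $\rfrel$ from $X_{np}$, no outgoing basic edge back since it reads the value written inside $P_{np}$, hence $co$-maximal among writes observed on that path) and by $z\define r_{c,3}$ (which I place $co$-maximal, with no reader, so no outgoing basic edge); thus $X_{np}$ is "never left" by basic relations exactly as in the proof of Theorem~1.1, and no basic cycle crosses its boundary. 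The overwrite writes and the $y$-write of Line~\ref{line:pretendsat} are handled as in Case~1. Therefore $X\in\consistent{\mtarget}{P_s}$ and $\reach X=\stdef{A}{\ass{y,y_1\dots y_m}{1}}$.

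\emph{Main obstacle.} The delicate point is the $\mtarget$-consistency argument for the composite execution in Case~2: I must argue that gluing $X_\psi$, $X_{np}$, and the connective read/write events introduces no violating cycle under $\mtarget$. This is exactly the kind of composition-of-location-disjoint-programs reasoning used in the proof of Theorem~1.1 (via the path condition of Lemma~\ref{lem:common} and the fact that $P_{np}$ has no shared state beyond $z$, which appears on no basic cycle leaving $P_{np}$). I expect to invoke that machinery essentially verbatim; the only genuinely new checks are the local uniproc-style checks for the writes to $y$ and $y_1\dots y_m$, which are immediate because each such write is placed $co$-maximal and occurs after any same-location read in its own thread.
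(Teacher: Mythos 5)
Your proposal is correct and follows essentially the same route as the paper's proof: apply Lemma~\ref{lem:progpsi} to $P_\psi$, split on whether the (extended) assignment satisfies $\psi$, in the unsatisfied case glue in the $\mtarget$-consistent execution of $\Pnp$ that sets $z=1$, place the cleanup writes ($z\define r_{c,3}$, $y\define r_{c,1}$, and the $y_j$-overwrites) co-maximal, and exclude violating cycles of $\mtarget$ via the path-condition machinery (Lemmas~\ref{lem:cycles}, \ref{lem:common}, \ref{lem:paths}), exactly as in the paper's partition argument. The only cosmetic deviations are that the paper settles the satisfied case by exhibiting an SC interleaving and invoking property (iv) of common MCMs instead of your direct cycle check, and your justification that $r_z\leftarrow z$ has no outgoing basic edge back into $\Pnp$ (``it reads a co-maximal write'') is slightly off --- $z\define r_{c,3}$ is co-later, so an $\frrel$-edge exists --- but it targets the connective block rather than $\Pnp$, so the conclusion and the overall argument stand.
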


\begin{proof}
According to Lemma~\ref{lem:progpsi}, the following holds: 
For every assignment $A$ of $x_1\dots x_n$ and $A'$ of $y_1\dots y_m$, there is an $\mtarget$-consistent execution $\exec$ of $P_\psi$, such that either 
$ \reach {\exec} = \stdef{A\cup A'}{\ass{y}{1}}$ or 
$ \reach {\exec} = \stdef{A\cup A'}{\ass{y}{0}}$. 
We examine both cases: 

\paragraph{Case 1:} If $ \reach {\exec} = \stdef{A\cup A'}{\ass{y}{1}}$, then $A\cup A'\models \psi$ and according to Lemma~\ref{lem:progpsi} there is an SC-consistent execution of $\Ppsi$ that computes $\stdef{A\cup A'}{\ass{y}{1}}$.
We can easily extend this to an SC-consistent execution $\exec'$ (represented by an interleaving) of $P_s$.
After $P_\psi$ is executed, we read 1 with reads $r_y \leftarrow y$ of all threads. So the subsequent if conditions are not satisfied. Then we execute the writes in Line~\ref{line:overwrite}.
So $y_1\define r_{c,1} \dots y_m\define r_{c,1};$ are the only writes executed outside of $P_\psi$. 
These writes are ordered after the assignments of 0 to $y_1 \dots y_m$ in $\corel$ and thus $\reach{\exec'} (y_i)=1$ for $i\leq m$. 
Since there are no further executed writes outside of $P_\psi$, the computed state otherwise coincides with $ \stdef{A\cup A'}{\ass{y}{1}}$ computed by $\exec$.
The extension of $\exec$ computes $\stdef{A}{\ass{y,y_1\dots y_m}{1}}$. 

\paragraph{Case 2:} If $ \reach {\exec} = \stdef{A\cup A'}{\ass{y}{0}}$, then write $y\define r_{c,0}$ is executed in $t_1$. 
We construct an execution $\exec'\supset \exec$ of $P_s$ in the following way:
We ensure all reads $r_y\leftarrow y$ of threads $t^s_i$ with $i\leq k$ read from $y\define r_{c,0}$ and thus all threads $t'_i$ are executed. 
This means $P_{np}$ is executed. 
According to the definition of $P_{np}$, there is an $\mtarget$-consistent execution $Y$ of $\Pnp$ such that $\reach Y (z)=1$ is written by some write $w_z$ in $t'_1$ and $Y$ computes the initial value for all other locations of $\Pnp$.  
We enforce $\exec' \supseteq Y$ and ensure the read $r_z\leftarrow z$ reads from the write in $P_{np}$ ($z=1$) and thus the following if condition is satisfied and the writes $z\define 3$ and $y\define 1$ are executed. We order them last in $co$.
It follows that $\exec'$ computes $\stdef{A}{\ass{y,y_1\dots y_m}{1}}$

We show that $\exec'$ is still $\mtarget$-consistent. We partition the events of $\exec'$ into sets $\A$, $\B$ and $\C$ and show that there is no violating cycle of $\mtarget$ inside or between these sets. 

Set $\A$ consists of events in $\exec$ and the subsequent reads $r_y\leftarrow y$ of all threads. 
Set $\B$ consists of the events of $Y$.
Set $\C$ contains the events $r_z\leftarrow z$ in $t_1$ and the following writes $z\define r_{c,3}$, $y\define r_{c,1}$ and $y_1\define r_{c,1};\cdots\ y_m\define r_{c,1}$.



The following two conditions hold:
(i) $\B$ is only left by basic relations leading to $\C$. This is the case since $\A$ precedes $\B$ in the program order and has no common registers.
(ii) There are no basic relations leading from $\C$ to some other set: The events in $\C$ are ordered last in the program order and the writes are ordered last in the coherence order. The read $r_z\leftarrow z$ has no outgoing $fr$ relation, since there is only one write to $z$.

From (i) and (ii) follows that $\exec'$ contains no basic cycle that contains events from more than one of the sets. 
According to the path property (Lemma~\ref{lem:common}) exists no violating cycle for $\mtarget$ in $\exec'$ that contains events from more than one of the sets.
So any violating cycle must be contained entirely in one of the sets.

From (i) and (ii) follows (iii): If two events $e_1, e_2$ are in the same set, then there is no basic path from $e_1$ to $e_2$ that leaves the set.

We consider $\A$: 
There are read from relations from $y\define r_{c,0}$ in $t_1$ to all reads $r_y\leftarrow y$ in $t^s_i$ with $i\leq k$. 
However, there are no outgoing basic relations from any of those reads to other events in $\A$. 
It follows from (iii) that there is no basic path from a read $y\define r_{c,0}$ to another event in $\A$ and according to the path property, there is no relation from a read $y\define r_{c,0}$ to some other element in $\A$. It follows that a read $r_y\leftarrow y$ cannot occur in a violating cycle.
From (iii) follows that the basic paths in $\exec'$ between events of $\A$ (and thus the violating cycles for $\mtarget$) are the same as in $\exec$. 
Since $\exec$ is $\mtarget$-consistent, it has no violating cycle for $\mtarget$ and thus $\exec'$ has no violating cycle for $\mtarget$ in $\A$.

We consider $\B$: 
From (iii) follows that the basic paths in $\exec'$ between events in $\B$ are the same as in $Y$. 
According to Lemma~\ref{lem:paths}, $\exec'$ has a violating cycle for $\mtarget$  in $\B$ iff $Y$ has a violating cycle for $\mtarget$. 
The execution $Y$ is defined to be $\mtarget$-consistent so there is no violating cycle in $\B$.

We consider $\C$: 
The set contains no basic cycle and no basic relation leaves $\C$ according to (ii). It follows that there is no basic cycle that contains events of $\C$.
According to the path property, there is no violating cycle in $\C$.
It follows that $\exec'$ is $\mtarget$-consistent. 

\end{proof}

\begin{lemma}\label{lem:mssat}
There is an $\msource$-consistent execution of $P_s$ that computes $\stdef{A}{\ass{y,y_1\dots y_m}{1}}$ with some assignment $A$ of $x_1 \dots x_n$  
iff there is an assignment $A'$ of $y_1 \dots y_m$ such that $A\cup A' \models \psi$. 
\end{lemma}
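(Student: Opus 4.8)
The plan is to prove the two implications separately, reusing the execution-decomposition technique from the proof of Theorem~1.1. Given an execution $X$ of $P_s$, I will split its events into three groups: those coming from $P_\psi$, those coming from $P_{np}$, and the remaining ``glue'' events (the reads $r_y\leftarrow y$ in every thread, the read $r_z\leftarrow z$, and the writes in Lines~\ref{line:pretendsat} and~\ref{line:overwrite}). Since $P_\psi$, $P_{np}$ and these glue instructions pairwise share no locations and no registers except where data flows strictly ``downstream'' (from $P_\psi$ into the reads $r_y\leftarrow y$, and from $P_{np}$ into $r_z\leftarrow z$), no basic path between two events of $P_\psi$ (resp.\ of $P_{np}$) ever leaves that sub-program; hence, by Lemmas~\ref{lem:common} and~\ref{lem:paths}, the derived relations of $\msource$ and $\mtarget$ between such events, and therefore the violating cycles, are the same in $X$ as in the restriction of $X$ to that sub-program. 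The same disjointness shows that coherence on any location accessed by $P_\psi$ (together with the initial writes) is unchanged by the restriction, so the restriction computes the expected state on those locations.

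For the ``if'' direction, assume some assignment $A'$ of $y_1\dots y_m$ satisfies $A\cup A'\models\psi$. By Lemma~\ref{lem:progpsi} there is an SC-consistent execution of $P_\psi$ computing $\stdef{A\cup A'}{\ass{y}{1}}$. I extend it to an execution of $P_s$: every read $r_y\leftarrow y$ reads the (coherence-last) value $1$, so all guards ``$r_y=0$'' fail, $P_{np}$ is not executed, and only the writes $y_1\define r_{c,1};\dots;y_m\define r_{c,1}$ of Line~\ref{line:overwrite} are appended, after the corresponding $y_i$-writes of $P_\psi$. This yields an SC-consistent execution computing $\stdef{A}{\ass{y,y_1\dots y_m}{1}}$, which by Property~(iv) of common MCMs is $\msource$-consistent.

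For the ``only if'' direction, let $X\in\consistent{\msource}{P_s}$ with $\reach X=\stdef{A}{\ass{y,y_1\dots y_m}{1}}$. The main case split is whether $t_1$ executes the write $y\define r_{c,1}$ of $P_\psi$. If it does not, then the only write of value $1$ to $y$ left in $P_s$ is the ``pretend'' write of Line~\ref{line:pretendsat}, which must then be executed; its guard forces $r_z$ to read value $1$ from $z$, but every write to $z$ in $X$ writes $3$ or a value $\neq 1$ — the initial write and $z\define r_{c,3}$ write $3$, and, because the restriction of $X$ to $P_{np}$ is $\msource$-consistent, the single $z$-write of $t'_1$ writes no $1$ by the defining property of $P_{np}$ — a contradiction. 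Hence $t_1$ executes $y\define r_{c,1}$, so in $X$ the error check of Line~\ref{line:checkerror} failed ($r_i=r'_i$ for all $i$) and $\psi(r_1,\dots,r_{n+m})$ holds. Restricting $X$ to $P_\psi$ gives an $\msource$-consistent execution of $P_\psi$ in which $y\define r_{c,1}$ is the unique write to $y$, so it computes $\stdef{B}{\ass{y}{1}}$ where $B$ is the assignment read by $t_1$; by Lemma~\ref{lem:progpsi}, $B\models\psi$. Finally, since the coherence-last writes to $x_1,\dots,x_n$ are the same in $X$ and in this restriction, $B$ agrees with $A$ on $x_1,\dots,x_n$, so $B=A\cup A'$ for an assignment $A'$ of $y_1,\dots,y_m$ with $A\cup A'\models\psi$.

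The hard part is the two restriction arguments — showing that $X$ cut down to the events of $P_\psi$ (and of $P_{np}$) is again $\msource$-consistent and computes the claimed state. This rests on a careful inspection of the control flow of $P_s$ to verify that the glue reads $r_y\leftarrow y$ and $r_z\leftarrow z$ are ``sinks'' for basic relations (no outgoing basic edge re-enters the sub-program), so that Lemmas~\ref{lem:common} and~\ref{lem:paths} transfer violating cycles in both directions; once this is established, the rest is routine bookkeeping about which writes are coherence-last and which branches of $P_s$ are taken.
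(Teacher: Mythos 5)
Your overall route is the paper's route: decompose the execution of $P_s$ into the $\Ppsi$ part, the $\Pnp$ part and the glue events, transfer (non-)consistency between the full execution and its restrictions via basic paths and Lemma~\ref{lem:paths}, and finish with the $\Ppsi$ lemmas plus property~(iv) of common MCMs. Your ``if'' direction and your treatment of $\Pnp$ (ruling out a write of $1$ to $z$, hence the pretend branch) coincide in substance with the paper's argument.

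There is, however, a genuine gap in the ``only if'' direction. Your blanket claim that no basic path between two events of $\Ppsi$ ever leaves $\Ppsi$ is false: a glue read $r_y\leftarrow y$ in thread $t^s_2$ may read the \emph{initial} value of $y$ (this is possible even under SC, when that read is scheduled before $t_1$ writes $y$); then there is a $\porel$ edge from the $\Ppsi$ writes of $t_2$ to this read and an $\frrel$ edge from the read back to the executed $y$-write of $t_1$, i.e.\ a basic path between two $\Ppsi$ events that passes through a glue event. Consequently $X$ may have strictly more basic paths between $\Ppsi$ events than the restriction $X[\Ppsi]$, and since common MCMs admit set difference, Lemma~\ref{lem:paths} does \emph{not} let you conclude that the derived relations (and hence the absence of violating cycles) carry over from $X$ to $X[\Ppsi]$; removing paths can in principle \emph{create} derived edges. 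So your assertion that $X[\Ppsi]$ is $\msource$-consistent, which you need in order to invoke Lemma~\ref{lem:progpsi}, is unjustified. The paper deliberately avoids this: it only claims that $X[\Ppsi]$ satisfies \uniproc\ (which is immediate, because any \uniproc-violating basic cycle of the restriction consists of basic edges already present in $X$) and then applies Lemma~\ref{lem:progpsiuniproc}, whose hypothesis is just \uniproc, to conclude that the computed assignment satisfies $\psi$; the agreement with $A$ on $x_1,\dots,x_n$ and the values of $y_1,\dots,y_m$, $y$ and the synchronisation variables are then obtained from \uniproc\ and the fact that these locations are not written (or only overwritten later in program order) outside $\Ppsi$. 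Your proof is repaired by making exactly this substitution: drop the consistency claim for $X[\Ppsi]$, argue \uniproc\ for the restriction, and use Lemma~\ref{lem:progpsiuniproc} in place of Lemma~\ref{lem:progpsi}.
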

\begin{proof}
If a program $P'$ is contained in a program $P$, we can restrict an execution \exec\ of $P$ to an execution of $P'$. We simply remove all events and relation on events that are not in $P'$. We denote $X$ restricted to $P'$ as $\exec [ P' ]$.

($\Rightarrow$): Let $\exec$ be an $\msource$-consistent execution of $P_s$ that computes $\reach \exec (y)=1$.
We assume towards contradiction that $\exec$ contains a write that sets $z$ to 1.
According to the definition of $\Pnp$ the execution $\exec [ \Pnp ]$ contains a violating cycle for $\msource$.
As with the proof of Lemma~\ref{lem:Psmm2}, we can show that no basic path in $\exec$ between events of $\exec [ \Pnp ]$ leaves $\exec [ \Pnp ]$.
It follows that the basic paths between events in $\exec [ \Pnp ]$ are the same in $\exec$ and $\exec [ \Pnp ]$ and thus 
$\exec$ also contains the violating cycle for $\msource$. This is a contradiction to $\exec$ being $\msource$-consistent.

It follows that $\exec$ cannot read $z=1$ in Line~\ref{line:readPnr} and the write in Line~\ref{line:pretendsat} is not executed.
So $y\define r_{c,1}$ must have been executed in $P_\psi$ for $\exec$ to compute $y=1$.
 
The execution $\exec$ satisfies uniproc since $\msource$ is common. It has no basic cycle that violates uniproc.
It follows that $\exec[{P_\psi}]$ has no such basic cycle either and thus satisfies uniproc. 
It follows from Lemma~\ref{lem:progpsiuniproc}, that $\exec[{P_\psi}]$ computes a satisfying assignment of $x_1\dots y_m$. 
Since nothing is written to $x_1 \dots x_n$ outside of $P_\psi$, $\exec$ computes the same values for $x_1 \dots x_n$ as $\exec[{P_\psi}]$.
A write $y_i\leftarrow r_{c,0}$ in $t_1$ is related to write $y_i\leftarrow r_{c,1}$ in~\autoref{line:overwrite} of $t^s_1$ in $\porel \cap \slocrel$ and according to uniproc also in $\corel$. This implies that $\exec$ computes 1 for $y_1\dots y_m$.
For the synchronization variables of $\Ppsi$, uniproc ensures that $\exec$ computes the initial values.

It follows, that if there is an $\msource$-consistent execution that computes $\stdef{A}{\ass{y,y_1\dots y_m}{1}}$ for some assignment $A$ of $x_1\dots x_n$, then there is an assignment $A'$ of $y_1,..y_m$ such that $A\cup A' \models \psi$. 

($\Leftarrow$): According to Lemma~\ref{lem:progpsi}, for each assignment $A$ of $x_1 \dots x_n$ and $A'$ of $y_1 \dots y_m$ where $A\cup A'$ satisfies $\psi$, there is an SC-consistent execution $\exec$ of $\Ppsi$ such that $\reach {\exec}=\stdef{A\cup A'}{\ass{y}{1}}$. 
This SC execution of $\Ppsi$ can be represented as an interleaving of the local executions. 
To this interleaving, we append the subsequent reads $r_y\leftarrow y$. 
They read the value 1 and the next if conditions are not satisfied. 
Then we append the remaining writes in~\autoref{line:overwrite}. 
The resulting interleaving represents an SC-consistent execution $\exec'$ of $P_s$ .

The only writes not in $P_\psi$ that are executed by $\exec'$ are in Line~\ref{line:overwrite} 
(they are last in the program order and thus also last in $po\cap sloc$) and according to \uniproc, they must be ordered after any write $y_i\define 0$ in $\corel$. 
It follows $\reach{\exec'} =\stdef{A}{\ass{y,y_1\dots y_m}{1}}$.
\end{proof}

\begin{theorem}\label{thm:StateHard}
State-based portability is $\Pi_2$-hard for common MCMs.
\end{theorem}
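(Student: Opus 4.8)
The plan is to establish $\Pi^P_2$-hardness by a polynomial-time many-one reduction from validity of quantified Boolean formulas of the form $\forall x_1\dots x_n\,\exists y_1\dots y_m\colon\psi$, which the paper has already recalled to be a $\Pi^P_2$-complete problem. Given such a formula, the reduction outputs $(P_s,\msource,\mtarget)$ with $P_s$ the program constructed above, built from the evaluator $P_\psi$ and the program $\Pnp$ supplied by the non-triviality assumption (every $\msource$-consistent execution of $\Pnp$ computes $3$ on all its locations and never writes $z=1$, while some $\mtarget$-consistent execution writes $z=1$). The statement to prove is then: $P_s$ is state portable from $\msource$ to $\mtarget$ if and only if $\forall x\exists y\colon\psi$ is valid. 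Since $P_s$ is clearly computable in polynomial time from the formula, this yields the hardness, and together with Lemma~\ref{lem:StateComplete} it upgrades to the $\Pi^P_2$-completeness asserted in Theorem~\ref{thm:TraceComplete0}.(2).

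One direction is immediate from the two main lemmas. If the formula is not valid, fix an assignment $B$ of $x_1\dots x_n$ with $B\cup A'\not\models\psi$ for every assignment $A'$ of $y_1\dots y_m$. By Lemma~\ref{lem:Psmm2}, $\stdef{B}{\ass{y,y_1\dots y_m}{1}}\in\reach{\consistent{\mtarget}{P_s}}$; by Lemma~\ref{lem:mssat}, this state is in $\reach{\consistent{\msource}{P_s}}$ only if some $A'$ makes $B\cup A'\models\psi$, which fails. Hence $\reach{\consistent{\mtarget}{P_s}}\not\subseteq\reach{\consistent{\msource}{P_s}}$, so $P_s$ is not state portable.

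For the converse I would first normalise an arbitrary $s=\reach{X}$ with $X\in\consistent{\mtarget}{P_s}$. Using \uniproc (available because $\mtarget$ is common) and the program text one checks that the unconditional final block forces $s(y_j)=1$ for $j\le m$, that $t_2$'s closing writes force $s(x'_i)=3$, that the unique $z$-write of $\Pnp$ is either absent or cancelled by the ``pretend'' reset $z\define r_{c,3}$ so that $s(z)=3$, that $\Pnp$'s remaining locations are $3$ (either untouched or computed by an $\mtarget$-consistent $\Pnp$-run, using the path argument from the proof of Lemma~\ref{lem:Psmm2} and the stated property of $\Pnp$), that $s$ restricted to $x_1\dots x_n$ is some assignment $B$, and that $s(y)=v\in\{0,1,2\}$. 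I then split on $v$. (i) $v=1$: validity supplies an $A'$ with $B\cup A'\models\psi$, and Lemma~\ref{lem:mssat} yields an $\msource$-consistent execution computing $s$. (ii) $v=2$: $P_\psi$ ended in its error branch and $\Pnp$ did not run, so one writes down directly an SC-consistent execution of $P_s$ (which is $\msource$-consistent by property (iv) applied to SC) whose coherence orders realise $s(x_i)=B(x_i)$ and whose final state coincides with $s$ everywhere. (iii) $v=0$: $t_1$ reached its $y\define r_{c,0}$ branch, so it falsified $\psi$ on the values $r_1,\dots,r_{n+m}$ it read; a \uniproc argument on the writes to each $x_i$ with $i\le n$, using that the error check $r_i=r'_i$ passed and that $t_2$'s own reads of the $x_i$ are program-order constrained, pins $r_i=B(x_i)$, so $B\cup A'\not\models\psi$ with $A'=(r_{n+1},\dots,r_{n+m})$; then an SC-consistent execution of $P_s$ that replays this falsifying assignment (sequencing $t_2$'s $x'$-writes before $t_1$'s $x'$-reads so that the error check passes), runs $\Pnp$ as an SC- hence $\msource$-consistent run (so it computes $3$ everywhere and writes no $z=1$, hence no ``pretend''), and executes the final overwrite, computes exactly $s$. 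In all three cases $s\in\reach{\consistent{\msource}{P_s}}$, so $P_s$ is state portable.

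The main obstacle is case (iii) of the converse. In a weak memory model a read need not observe the coherence-last write, so extracting a falsifying assignment $B\cup A'$ from an $\mtarget$-computable state with $y=0$ and $x$-part $B$ requires the careful \uniproc bookkeeping sketched above: the program order between $w_{i,0}$ and $t_1$'s read of $x_i$, the passed error check $r_i=r'_i$, and $t_2$'s program-order-constrained read of $x_i$ must together force $r_i=B(x_i)$ for $i\le n$. One then has to verify that the matching SC execution reproduces $s$ on \emph{every} location --- not only the $x_i$ but also the auxiliary locations $x'_i$, $z$, the $\Pnp$-locations and the overwritten $y_j$ --- which is exactly what the normalisation step is for. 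Case (ii) needs the same style of coherence-order construction but is lighter, since no synchronisation through the $x'_i$ is involved.
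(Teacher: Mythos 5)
Your proposal is correct and follows essentially the same route as the paper: the same reduction program $P_s$ built from $P_\psi$ and $\Pnp$, the same use of Lemma~\ref{lem:Psmm2} and Lemma~\ref{lem:mssat} for the $y=1$ states, and the same case analysis on the computed value $y\in\{0,1,2\}$ with \uniproc-based normalisation and SC-replays (the content of Lemmas~\ref{lem:progpsiuniproc} and~\ref{lem:progpsi}, which you re-derive inline) for the remaining cases. The only difference is presentational — you argue the two directions of the biconditional separately, whereas the paper packages the converse as properties (i)–(iii) before chaining the equivalences — so this is the paper's proof in slightly different clothing.
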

\begin{proof}
Given common MCMs $\msource$ and $\mtarget$, we argue that $P_s$ is a reduction of validity of a formula $\forall x_1 \dots x_n\exists y_1 \dots y_m: \psi(x_1 \dots y_m)$ to state portability from $\msource$ to $\mtarget$


Note that any execution of $P_s$ satisfying uniproc computes 1 for $y_1\dots y_m$ and the initial values for the synchronization variables. 
We show that the following properties (i)-(iii) hold:
\begin{itemize}
	\item[(i)] Any state with $y=2$ that is computable with an $\mtarget$-consistent execution of $P_s$ can be computed by an $\msource$-consistent computation of $P_s$. 
It is easy to see that any state with $y=2$ that can be computed by an execution satisfying uniproc, can be computed by an SC-consistent execution. 
	\item[(ii)]	 Any state with $y=0$ that is computed by an $\mtarget$-consistent execution $\exec$ of $P_s$ can be computed by an $\msource$-consistent computation. 
No event of $\Pnp$ is executed, the inital value is computed for locations of $\Pnp$. 
According to uniproc, any such state has the form $\stdef{A}{\ass{y}{0};\ass{y_1\dots y_m}{1}}$ for some assignment A of $x_1\dots x_n$.
It is easy to see that $A$ and $y=0$ is computed by $\exec[\Ppsi]$ and $\exec[\Ppsi]$ satisfies $\uniproc$. 
According to Lemma~\ref{lem:progpsiuniproc}, there is an assignment $A'$ of $y_1\dots y_m$ such that $A\cup A'\not\models \psi$. 
It follows from Lemma~\ref{lem:progpsi} that there is an SC-consistent execution of $\Ppsi$ that computes $\stdef{A\cup A'}{\ass{y}{0}}$. This can be easily extended to an SC-consistent execution of $P_s$ that computes $\stdef{A}{\ass{y}{0};\ass{y_1\dots y_m}{1}}$.
Since $\msource$ is common, $\stdef{A}{\ass{y}{0};\ass{y_1\dots y_m}{1}}$ is also computable under $\msource$.
	\item[(iii)]  Any state with $y=1$ computed by an $\mtarget$-consistent execution $\exec$ has the form $\stdef{A}{\ass{y,y_1\dots y_m}{1}}$ for some assignment $A$ of $x_1\dots x_n$. This holds since $\exec[\Pnp ]$ is also $\mtarget$-consistent (the basic paths between its events are the same as in $\exec$) and according to the definition of $\Pnp$, it computes the initial value for all its locations except maybe 1 for $z$. If that is the case, then the write $z\leftarrow r_{c,3}$ in $t^s_1$ is executed and according to uniproc ordered later in $\corel$.
\end{itemize}


From (i)-(iii) follows that $P_s$ is state portable from $\msource$ to $\mtarget$ iff any $\mtarget$-computable state  $\stdef{A}{\ass{y,y_1\dots y_m}{1}}$ can be computed under $\msource$.
Lemma~\ref{lem:Psmm2} implies that $P_s$ is state portable from $\msource$ to $\mtarget$ iff any state $\stdef{A}{\ass{y,y_1\dots y_m}{1}}$ can be computed under $\msource$.
According to Lemma~\ref{lem:mssat}, this is the case iff for every assignment $A$ of $x_1\dots x_n$ there is an assignment $A'$ of $y_1\dots y_m$ with $A\cup A' \models \psi$.
This is the case iff $\forall x_1 \dots x_n \exists y_1 \dots y_m: \psi (x_1 \dots y_m)$ is valid.
\end{proof}


\section{Complete Experiments}
\label{sec:Aexperiments}


\begin{table}[h]
\setlength{\tabcolsep}{4.5pt}
\def\sep{\hspace{10pt}}
\centering
\footnotesize
\tt
\scalebox{.9}{
\begin{tabular}{l@{\sep} cccccccccccccccccccc}
 \rm\small Benchmark
& $\rot{70} {SC\text{-}TSO}$
& $\rot{70} {SC\text{-}PSO}$
& $\rot{70} {SC\text{-}Power}$
& $\rot{70} {TSO\text{-}PSO}$
& $\rot{70} {TSO\text{-}Power}$
& $\rot{70} {PSO\text{-}Power}$
& $\rot{70} {RMO\text{-}Alpha}$
& $\rot{70} {RMO\text{-}Power}$
& $\rot{70} {Alpha\text{-}RMO}$
& $\rot{70} {Alpha\text{-}Power}$
\\
\midrule

\rm\bench{Bakery} &
\redcross & \redcross & \redcross & \redcross & \redcross & \redcross & \redcross & \tick & \redcross & \tick \newrow
\rm\bench{Bakery x86} &
\tick & \redcross & \redcross & \redcross & \redcross & \redcross & \redcross & \redcross & \redcross & \redcross \newrow
\rm\bench{Bakery Power} &
\tick & \redcross& \tick & \redcross& \tick & \tick & \redcross& \tick & \redcross& \tick \newrow
\rm\bench{Burns} &
\redcross & \redcross& \redcross& \tick & \redcross& \redcross& \redcross& \tick & \redcross& \tick \newrow
\rm\bench{Burns x86} &
\tick & \tick & \redcross& \tick & \redcross& \redcross& \redcross& \redcross& \redcross& \redcross\newrow
\rm\bench{Burns Power} &
\tick & \tick & \tick & \tick & \tick & \tick & \redcross& \tick & \redcross& \tick \newrow
\rm\bench{Dekker} &
\redcross & \redcross& \redcross& \redcross& \redcross& \redcross& \tick & \tick & \redcross & \tick \newrow
\rm\bench{Dekker x86} &
\tick & \redcross& \redcross& \redcross& \redcross& \redcross& \redcross& \redcross& \redcross& \redcross\newrow
\rm\bench{Dekker Power} &
\tick & \redcross& \tick & \redcross& \tick & \tick & \redcross& \tick & \redcross& \tick \newrow
\rm\bench{Lamport} &
\redcross & \redcross& \redcross& \redcross& \redcross& \redcross& \redcross& \redcross& \redcross& \tick \newrow
\rm\bench{Lamport x86} &
\tick & \tick & \redcross& \tick & \redcross& \redcross& \redcross& \redcross& \tick & \redcross\newrow
\rm\bench{Lamport Power} &
\tick & \tick & \tick & \tick & \tick & \tick & \redcross& \tick & \tick & \tick \newrow
\rm\bench{Parker} &
\redcross & \redcross& \redcross& \redcross& \redcross& \tick & \tick & \tick & \tick & \tick \newrow
\rm\bench{Parker x86} &
\tick & \redcross& \redcross& \redcross& \redcross& \tick & \tick & \tick & \tick & \tick \newrow
\rm\bench{Parker Power} &
\tick & \redcross& \tick & \redcross& \tick & \tick & \tick & \tick & \tick & \tick \newrow
\rm\bench{Peterson} &
\redcross & \redcross& \redcross& \redcross& \redcross& \redcross& \tick & \tick & \redcross& \tick \newrow
\rm\bench{Peterson x86} &
\tick & \redcross& \redcross& \redcross& \redcross& \redcross& \tick & \redcross& \redcross& \redcross\newrow
\rm\bench{Peterson Power} &
\tick & \redcross& \tick & \redcross& \tick & \tick & \tick & \tick & \redcross& \tick \newrow
\rm\bench{Szymanski} &
\redcross & \redcross& \redcross& \tick & \redcross& \redcross& \redcross& \tick & \redcross& \tick \newrow
\rm\bench{Szymanski x86} &
\tick & \tick & \redcross& \tick & \redcross& \redcross& \redcross& \redcross& \redcross& \redcross\newrow
\rm\bench{Szymanski Power} &
\tick & \tick & \tick & \tick & \tick & \tick & \redcross& \tick & \redcross& \tick \newrow
\\
\end{tabular}
}
\rm
\caption{Bounded portability analysis of mutual exclusion algorithms: portable ({\tick}), non-portable ({\redcross})}
\label{tab:mutual}
\end{table}

This section presents the complete set of experiments for portability of mutual exclusion algorithms. Besides the portability analysis between SC, TSO and Power shown on~\autoref{tab:portability}, we report on several MCMs combinations. The results of the portability analysis are shown in~\autoref{tab:mutual}.
The complete set of encoding and solving times are shown respectively in~\autoref{fig:completeencoding} and~\autoref{fig:completesolving}.

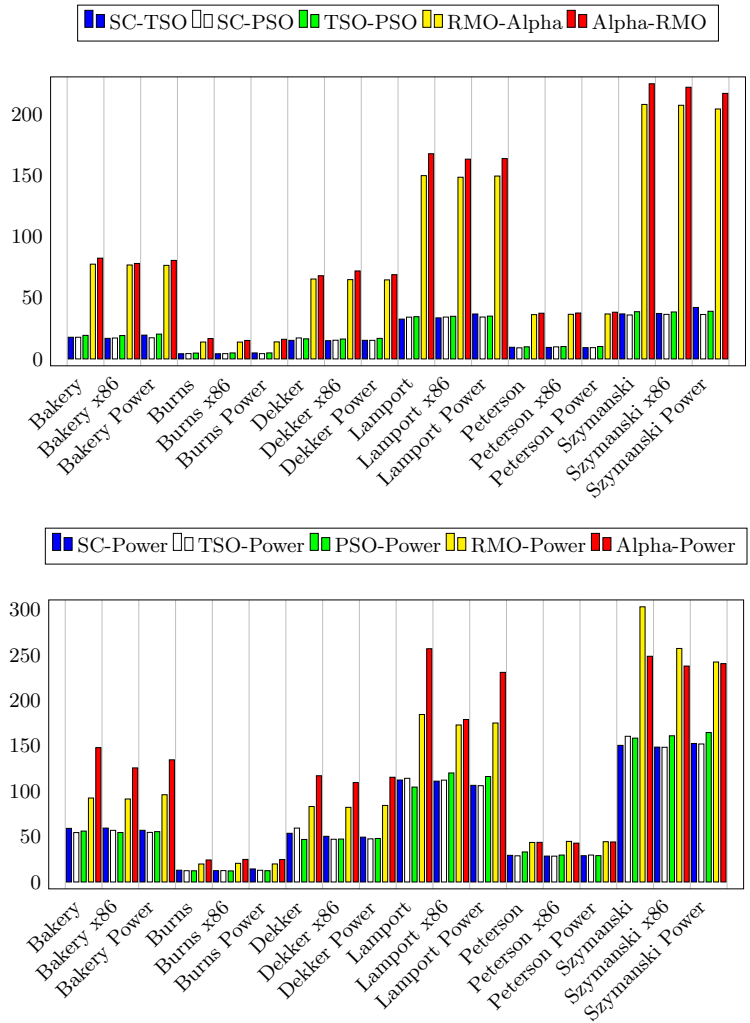
\begin{figure}[h]
\centering
\subfigure{\scalebox{.87}{

\begin{tikzpicture}
\begin{axis}[
	symbolic x coords={
	Bakery, Bakery x86, Bakery Power, 
	Burns, Burns x86, Burns Power,
	Dekker, Dekker x86, Dekker Power,
	Lamport, Lamport x86,Lamport Power,
	Peterson, Peterson x86, Peterson Power,
	Szymanski, Szymanski x86, Szymanski Power,
	a
	},
	xtick=data,
	xtick style={draw=none},
	ytick style={draw=none},
	ytick={0,50,100,...,150,200, 250},
	x tick label style={rotate=45, anchor=east, font=\small},
	enlargelimits=0.025,
	legend style={at={(0.5,1.25)},
		anchor=north,legend columns=5,
	legend entries={SC-TSO,SC-PSO, TSO-PSO, RMO-Alpha, Alpha-RMO}, font=\footnotesize},
	ybar interval=0.7,
	width=\textwidth,
	height=6cm
]

\addplot [fill=blue]
	coordinates {
(Bakery, 17.711787939071655)
(Bakery x86, 16.815124988555908)
(Bakery Power, 19.452289819717407)
(Burns, 4.2296130657196045)
(Burns x86, 4.222445011138916)
(Burns Power, 4.9057769775390625)
(Dekker, 15.182631015777588)
(Dekker x86, 14.89327883720398)
(Dekker Power, 15.218568086624146)
(Lamport, 32.53489804267883)
(Lamport x86, 33.54331588745117)
(Lamport Power, 36.726147174835205)
(Peterson, 9.537302017211914)
(Peterson x86, 9.417800903320312)
(Peterson Power, 9.224287033081055)
(Szymanski, 36.77258491516113)
(Szymanski x86, 37.155718088150024)
(Szymanski Power, 42.06637406349182)
(a,0)};
\addplot [fill=white]
	coordinates {
(Bakery, 17.71239709854126)
(Bakery x86, 17.08566689491272)
(Bakery Power, 17.281736850738525)
(Burns, 4.36043381690979)
(Burns x86, 4.3241260051727295)
(Burns Power, 4.316703796386719)
(Dekker, 17.16452407836914)
(Dekker x86, 15.287949085235596)
(Dekker Power, 15.195696830749512)
(Lamport, 34.145390033721924)
(Lamport x86, 34.20997500419617)
(Lamport Power, 34.24053192138672)
(Peterson, 9.079921007156372)
(Peterson x86, 9.876070022583008)
(Peterson Power, 9.2201669216156)
(Szymanski, 35.92042112350464)
(Szymanski x86, 36.4064040184021)
(Szymanski Power, 36.337804079055786)
(a,0)};
\addplot [fill=green]
	coordinates {
(Bakery, 19.266618013381958)
(Bakery x86, 19.10199284553528)
(Bakery Power, 20.279268980026245)
(Burns, 4.783245086669922)
(Burns x86, 4.902663946151733)
(Burns Power, 4.917539119720459)
(Dekker, 16.422616958618164)
(Dekker x86, 16.261545181274414)
(Dekker Power, 16.7957820892334)
(Lamport, 34.526187896728516)
(Lamport x86, 34.84059500694275)
(Lamport Power, 34.97525691986084)
(Peterson, 9.911800146102905)
(Peterson x86, 10.10535192489624)
(Peterson Power, 10.15679383277893)
(Szymanski, 38.586021900177)
(Szymanski x86, 38.36858797073364)
(Szymanski Power, 38.94428896903992)
(a,0)};
\addplot [fill=yellow]
	coordinates {
(Bakery, 77.43256998062134)
(Bakery x86, 76.76262283325195)
(Bakery Power, 76.4599220752716)
(Burns, 13.814864158630371)
(Burns x86, 13.747427940368652)
(Burns Power, 13.928956985473633)
(Dekker, 65.30407905578613)
(Dekker x86, 64.79374814033508)
(Dekker Power, 64.58525776863098)
(Lamport, 149.76748991012573)
(Lamport x86, 148.46767115592957)
(Lamport Power, 149.43270111083984)
(Peterson, 36.180014848709106)
(Peterson x86, 36.47496795654297)
(Peterson Power, 36.74676299095154)
(Szymanski, 207.9479649066925)
(Szymanski x86, 207.26525592803955)
(Szymanski Power, 204.21560096740723)
(a,0)};
\addplot [fill=red]
	coordinates {
(Bakery, 82.34192895889282)
(Bakery x86, 78.01531314849854)
(Bakery Power, 80.51177191734314)
(Burns, 16.663074016571045)
(Burns x86, 15.101838111877441)
(Burns Power, 16.046077013015747)
(Dekker, 67.98671388626099)
(Dekker x86, 71.88794994354248)
(Dekker Power, 68.83069491386414)
(Lamport, 167.64864993095398)
(Lamport x86, 163.22713589668274)
(Lamport Power, 163.69334292411804)
(Peterson, 37.35984206199646)
(Peterson x86, 37.52140998840332)
(Peterson Power, 38.16824007034302)
(Szymanski, 224.83411693572998)
(Szymanski x86, 221.9922640323639)
(Szymanski Power, 216.96360301971436)
(a,0)
};
\end{axis}
\end{tikzpicture}}}
\subfigure{\scalebox{.87}{

\begin{tikzpicture}
\begin{axis}[
	symbolic x coords={
	Bakery, Bakery x86, Bakery Power, 
	Burns, Burns x86, Burns Power,
	Dekker, Dekker x86, Dekker Power,
	Lamport, Lamport x86,Lamport Power,
	Peterson, Peterson x86, Peterson Power,
	Szymanski, Szymanski x86, Szymanski Power,
	a
	},
	xtick=data,
	xtick style={draw=none},
	ytick style={draw=none},
	ytick={0,50,100,...,200,250,300},
	x tick label style={rotate=45, anchor=east, font=\small},
	enlargelimits=0.025,
	legend style={at={(0.5,1.25)},
	anchor=north,legend columns=5,
	legend entries={SC-Power,TSO-Power, PSO-Power, RMO-Power, Alpha-Power}, font=\footnotesize},
	ybar interval=0.7,
	width=\textwidth,
	height=6cm
]
\addplot [fill=blue]
	coordinates {
(Bakery, 59.05555701255798)
(Bakery x86, 59.36018395423889)
(Bakery Power, 56.916510820388794)
(Burns, 12.968171119689941)
(Burns x86, 12.452924013137817)
(Burns Power, 14.357208013534546)
(Dekker, 53.6379599571228)
(Dekker x86, 50.312155961990356)
(Dekker Power, 49.373197078704834)
(Lamport, 112.31699204444885)
(Lamport x86, 111.06283402442932)
(Lamport Power, 106.44781613349915)
(Peterson, 29.347244024276733)
(Peterson x86, 28.539752960205078)
(Peterson Power, 28.929024934768677)
(Szymanski, 150.4757010936737)
(Szymanski x86, 148.56943702697754)
(Szymanski Power, 152.56696391105652)
(a,0)};
\addplot [fill=white]
	coordinates {
(Bakery, 54.40743398666382)
(Bakery x86, 56.86174511909485)
(Bakery Power, 54.626919984817505)
(Burns, 12.412890195846558)
(Burns x86, 12.52488398551941)
(Burns Power, 12.91234803199768)
(Dekker, 59.376790046691895)
(Dekker x86, 47.08935308456421)
(Dekker Power, 47.49696111679077)
(Lamport, 114.19770884513855)
(Lamport x86, 112.17199516296387)
(Lamport Power, 106.07640600204468)
(Peterson, 28.8381130695343)
(Peterson x86, 28.47548198699951)
(Peterson Power, 29.65829110145569)
(Szymanski, 160.4891541004181)
(Szymanski x86, 148.37688899040222)
(Szymanski Power, 152.01106595993042)
(a,0)};
\addplot [fill=green]
	coordinates {
(Bakery, 56.07311487197876)
(Bakery x86, 54.47545790672302)
(Bakery Power, 55.44184398651123)
(Burns, 12.326245069503784)
(Burns x86, 12.19865608215332)
(Burns Power, 12.438342809677124)
(Dekker, 46.836266040802)
(Dekker x86, 47.2587411403656)
(Dekker Power, 47.898717164993286)
(Lamport, 104.5262770652771)
(Lamport x86, 120.09160900115967)
(Lamport Power, 116.18441104888916)
(Peterson, 33.07738900184631)
(Peterson x86, 29.65832495689392)
(Peterson Power, 29.01974582672119)
(Szymanski, 158.44853115081787)
(Szymanski x86, 161.03548502922058)
(Szymanski Power, 164.60874891281128)
(a,0)};
\addplot [fill=yellow]
	coordinates {
(Bakery, 92.47750806808472)
(Bakery x86, 91.41643404960632)
(Bakery Power, 96.11202907562256)
(Burns, 19.80647087097168)
(Burns x86, 20.556342840194702)
(Burns Power, 19.872626066207886)
(Dekker, 83.11427998542786)
(Dekker x86, 82.2297899723053)
(Dekker Power, 84.3297049999237)
(Lamport, 184.48494505882263)
(Lamport x86, 172.94712710380554)
(Lamport Power, 175.1193699836731)
(Peterson, 43.555743932724)
(Peterson x86, 44.71226382255554)
(Peterson Power, 44.35728716850281)
(Szymanski, 303.1020109653473)
(Szymanski x86, 257.22806692123413)
(Szymanski Power, 242.29167008399963)
(a,0)};
\addplot [fill=red]
	coordinates {
(Bakery, 147.9792459011078)
(Bakery x86, 125.65532207489014)
(Bakery Power, 134.53474402427673)
(Burns, 24.237879037857056)
(Burns x86, 24.8580539226532)
(Burns Power, 24.7087562084198)
(Dekker, 117.00957894325256)
(Dekker x86, 109.46362400054932)
(Dekker Power, 115.35802793502808)
(Lamport, 256.9212529659271)
(Lamport x86, 178.96000790596008)
(Lamport Power, 230.89001393318176)
(Peterson, 43.66850209236145)
(Peterson x86, 42.7906060218811)
(Peterson Power, 44.07647895812988)
(Szymanski, 248.63054990768433)
(Szymanski x86, 237.85160207748413)
(Szymanski Power, 240.5077929496765)
(a,0)};
\end{axis}
\end{tikzpicture}}}
\caption{Encoding times (in secs.) for portability of mutual exclusion algorithms.}
\label{fig:completeencoding}
\end{figure}

\begin{figure}[h]
\centering
\subfigure{\scalebox{.87}{

\begin{tikzpicture}
\begin{axis}[
	symbolic x coords={
	Bakery, Bakery x86, Bakery Power, 
	Burns, Burns x86, Burns Power,
	Dekker, Dekker x86, Dekker Power,
	Lamport, Lamport x86,Lamport Power,
	Peterson, Peterson x86, Peterson Power,
	Szymanski, Szymanski x86, Szymanski Power,
	a
	},
	xtick=data,
	xtick style={draw=none},
	ytick style={draw=none},
	ytick={0,0.5,1,...,1.5,2,2.5},
	x tick label style={rotate=45, anchor=east, font=\small},
	enlargelimits=0.025,
	legend style={at={(0.5,1.25)},
		anchor=north,legend columns=5,
	legend entries={SC-TSO,SC-PSO, TSO-PSO, RMO-Alpha, Alpha-RMO}, font=\footnotesize},
	ybar interval=0.7,
	width=\textwidth,
	height=6cm
]

\addplot [fill=blue]
	coordinates {
(Bakery, 0.21112990379333496)
(Bakery x86, 0.2102651596069336)
(Bakery Power, 0.21029901504516602)
(Burns, 0.05124306678771973)
(Burns x86, 0.04942202568054199)
(Burns Power, 0.050054073333740234)
(Dekker, 0.16898703575134277)
(Dekker x86, 0.16709613800048828)
(Dekker Power, 0.1702289581298828)
(Lamport, 0.4890270233154297)
(Lamport x86, 0.759335994720459)
(Lamport Power, 0.7899990081787109)
(Peterson, 0.10684895515441895)
(Peterson x86, 0.10451698303222656)
(Peterson Power, 0.10692906379699707)
(Szymanski, 0.5494050979614258)
(Szymanski x86, 0.5911979675292969)
(Szymanski Power, 0.6410701274871826)
(a, 0)
};
\addplot [fill=white]
	coordinates {
(Bakery, 0.21400117874145508)
(Bakery x86, 0.21073293685913086)
(Bakery Power, 0.21464896202087402)
(Burns, 0.04929804801940918)
(Burns x86, 0.050644874572753906)
(Burns Power, 0.05086708068847656)
(Dekker, 0.16834092140197754)
(Dekker x86, 0.1729729175567627)
(Dekker Power, 0.16947293281555176)
(Lamport, 0.4916388988494873)
(Lamport x86, 0.7281951904296875)
(Lamport Power, 0.8419220447540283)
(Peterson, 0.10475778579711914)
(Peterson x86, 0.10595202445983887)
(Peterson Power, 0.10976791381835938)
(Szymanski, 0.5311329364776611)
(Szymanski x86, 0.5857129096984863)
(Szymanski Power, 0.6311180591583252)
(a, 0)
};
\addplot [fill=green]
	coordinates {
(Bakery, 0.2190721035003662)
(Bakery x86, 0.2321910858154297)
(Bakery Power, 0.2348921298980713)
(Burns, 0.053073883056640625)
(Burns x86, 0.055757999420166016)
(Burns Power, 0.05456399917602539)
(Dekker, 0.20906805992126465)
(Dekker x86, 0.2102038860321045)
(Dekker Power, 0.21227598190307617)
(Lamport, 0.5205211639404297)
(Lamport x86, 0.9647560119628906)
(Lamport Power, 1.1647891998291016)
(Peterson, 0.11618685722351074)
(Peterson x86, 0.11365413665771484)
(Peterson Power, 0.1182708740234375)
(Szymanski, 0.6846680641174316)
(Szymanski x86, 0.7036402225494385)
(Szymanski Power, 0.7138059139251709)
(a, 0)
};
\addplot [fill=yellow]
	coordinates {
(Bakery, 0.5961809158325195)
(Bakery x86, 0.603308916091919)
(Bakery Power, 0.6050448417663574)
(Burns, 0.10297894477844238)
(Burns x86, 0.11391115188598633)
(Burns Power, 0.10470199584960938)
(Dekker, 0.5138731002807617)
(Dekker x86, 0.5196599960327148)
(Dekker Power, 0.56563401222229)
(Lamport, 1.322260856628418)
(Lamport x86, 1.3322899341583252)
(Lamport Power, 1.3607261180877686)
(Peterson, 0.26342082023620605)
(Peterson x86, 0.26907801628112793)
(Peterson Power, 0.2674829959869385)
(Szymanski, 1.563870906829834)
(Szymanski x86, 1.6607310771942139)
(Szymanski Power, 1.5786058902740479)
(a, 0)
};
\addplot [fill=red]
	coordinates {
(Bakery, 0.600074052810669)
(Bakery x86, 0.6026499271392822)
(Bakery Power, 0.606834888458252)
(Burns, 0.10565304756164551)
(Burns x86, 0.10723114013671875)
(Burns Power, 0.10496401786804199)
(Dekker, 0.5253000259399414)
(Dekker x86, 0.5312240123748779)
(Dekker Power, 0.5235648155212402)
(Lamport, 1.341567039489746)
(Lamport x86, 1.7987451553344727)
(Lamport Power, 1.8820040225982666)
(Peterson, 0.2712550163269043)
(Peterson x86, 0.2696409225463867)
(Peterson Power, 0.2713041305541992)
(Szymanski, 1.5394089221954346)
(Szymanski x86, 1.544600009918213)
(Szymanski Power, 1.6145939826965332)
(a, 0)
};
\end{axis}
\end{tikzpicture}}}
\subfigure{\scalebox{.87}{

\begin{tikzpicture}
\begin{axis}[
	symbolic x coords={
	Bakery, Bakery x86, Bakery Power, 
	Burns, Burns x86, Burns Power,
	Dekker, Dekker x86, Dekker Power,
	Lamport, Lamport x86,Lamport Power,
	Peterson, Peterson x86, Peterson Power,
	Szymanski, Szymanski x86, Szymanski Power,
	a
	},
	xtick=data,
	xtick style={draw=none},
	ytick style={draw=none},
	ytick={0,5,10,...,35,40},
	x tick label style={rotate=45, anchor=east, font=\small},
	enlargelimits=0.025,
	legend style={at={(0.5,1.25)},
		anchor=north,legend columns=5,
	legend entries={SC-Power,TSO-Power, PSO-Power, RMO-Power, Alpha-Power}, font=\footnotesize},
	ybar interval=0.7,
	width=\textwidth,
	height=6cm
]

\addplot [fill=blue]
	coordinates {
(Bakery, 1.718883991241455)
(Bakery x86, 2.0373098850250244)
(Bakery Power, 2.0865390300750732)
(Burns, 0.2777390480041504)
(Burns x86, 0.3118760585784912)
(Burns Power, 0.3193979263305664)
(Dekker, 1.4639639854431152)
(Dekker x86, 1.7945969104766846)
(Dekker Power, 1.7987449169158936)
(Lamport, 4.230733871459961)
(Lamport x86, 5.859885931015015)
(Lamport Power, 22.798483848571777)
(Peterson, 0.5806450843811035)
(Peterson x86, 0.74149489402771)
(Peterson Power, 0.7837228775024414)
(Szymanski, 11.447784185409546)
(Szymanski x86, 10.17952585220337)
(Szymanski Power, 14.220797061920166)
(a, 0)
};
\addplot [fill=white]
	coordinates {
(Bakery, 1.669646978378296)
(Bakery x86, 2.095571994781494)
(Bakery Power, 2.1380279064178467)
(Burns, 0.2794959545135498)
(Burns x86, 0.3151819705963135)
(Burns Power, 0.328110933303833)
(Dekker, 1.4670288562774658)
(Dekker x86, 1.8313229084014893)
(Dekker Power, 1.8584771156311035)
(Lamport, 4.643279790878296)
(Lamport x86, 6.377609968185425)
(Lamport Power, 33.291738986968994)
(Peterson, 0.6031379699707031)
(Peterson x86, 0.7748010158538818)
(Peterson Power, 0.8080220222473145)
(Szymanski, 11.43646502494812)
(Szymanski x86, 9.99736499786377)
(Szymanski Power, 16.323559999465942)
(a, 0)
};
\addplot [fill=green]
	coordinates {
(Bakery, 1.684959888458252)
(Bakery x86, 2.0256781578063965)
(Bakery Power, 2.142043113708496)
(Burns, 0.28162097930908203)
(Burns x86, 0.3223989009857178)
(Burns Power, 0.3321700096130371)
(Dekker, 1.4776170253753662)
(Dekker x86, 1.8423659801483154)
(Dekker Power, 1.8567891120910645)
(Lamport, 4.935337781906128)
(Lamport x86, 6.459616184234619)
(Lamport Power, 33.15235900878906)
(Peterson, 0.5931689739227295)
(Peterson x86, 0.7798309326171875)
(Peterson Power, 0.8026549816131592)
(Szymanski, 11.339897155761719)
(Szymanski x86, 9.59848403930664)
(Szymanski Power, 15.555482864379883)
(a, 0)
};
\addplot [fill=yellow]
	coordinates {
(Bakery, 1.9555270671844482)
(Bakery x86, 2.542707920074463)
(Bakery Power, 2.4902281761169434)
(Burns, 0.3144710063934326)
(Burns x86, 0.3866710662841797)
(Burns Power, 0.3856320381164551)
(Dekker, 1.771177053451538)
(Dekker x86, 2.242156982421875)
(Dekker Power, 2.1123149394989014)
(Lamport, 5.6347129344940186)
(Lamport x86, 6.703173875808716)
(Lamport Power, 31.81052017211914)
(Peterson, 0.7770531177520752)
(Peterson x86, 0.9919090270996094)
(Peterson Power, 0.9977619647979736)
(Szymanski, 10.13765001296997)
(Szymanski x86, 10.490145921707153)
(Szymanski Power, 14.136280059814453)
(a, 0)
};
\addplot [fill=red]
	coordinates {
(Bakery, 2.0100860595703125)
(Bakery x86, 2.4069149494171143)
(Bakery Power, 2.4795260429382324)
(Burns, 0.3532569408416748)
(Burns x86, 0.39566493034362793)
(Burns Power, 0.38312482833862305)
(Dekker, 1.7838408946990967)
(Dekker x86, 2.158390998840332)
(Dekker Power, 2.09065580368042)
(Lamport, 7.027853012084961)
(Lamport x86, 6.6686530113220215)
(Lamport Power, 35.319161891937256)
(Peterson, 0.7625889778137207)
(Peterson x86, 0.9955828189849854)
(Peterson Power, 1.0100979804992676)
(Szymanski, 10.348663091659546)
(Szymanski x86, 10.590330839157104)
(Szymanski Power, 18.54880404472351)
(a, 0)
};
\end{axis}
\end{tikzpicture}}}
\caption{Solving times (in secs.) for portability of mutual exclusion algorithms.}
\label{fig:completesolving}
\end{figure}

%
%
%
%
%
%

\section{Common Executions}
\label{sec:common}
For a given program $P$, let $\execsof{P}$ be the set of its (consistent and inconsistent) executions. We show how to check portability of a high-level program $P_H$ based on the executions of two different low-level programs $P_S$ and $P_T$ that were compiled  from $P_H$ to different architectures.
The following concept of high-level portability is more involved than~\autoref{def:traceportability} since 
programs compiled towards different architectures may differ greatly and it is often difficult to directly compare their executions. 
The definition relies on a formula $\execproj{\exec_L,\exec_H}$ which holds iff the high-level execution $\exec_H$ is the projection (see below) of the low-level one $\exec_L$ according to the compilation mapping. 
We consider executions of $P_S$ and $P_T$ to be similar if they have the same projection over $P_H$.
We adapt our method. Instead of looking for a single low-level execution that is consistent with $\mtarget$ but not $\msource$, we now look for a high-level execution that is the projection of two low-level executions; an execution of $P_S$ not consistent with $\msource$ and an execution of $P_T$ consistent with $\mtarget$.
In order to still be able to encode portability as an existential SMT query, formula $\execproj{\exec_L,\exec_H}$ has to be existential as well.
\begin{definition}[High-level Portability]\label{def:2progport}
Let $\msource$, $\mtarget$ be two MCMs, $P_H$ a high-level program, and $P_S$, $P_T$ the two low-level programs after compiling to the corresponding architectures. Program $P_H$ is \emph{not portable from $\msource$ to $\mtarget$} if there are executions $\exec_H\in \execsof{P_H}, \exec_S\in \execsof{P_S}$ and $\exec_T\in \execsof{P_T}$ such that
$$\execproj{\exec_S,\exec_H} \land \execproj{\exec_T,\exec_H} \land \exec_S \not\in \consistent{{\msource}} {P_S} \land \exec_T  \in \consistent{{\mtarget}} {P_T}.$$
\end{definition}
For compilers doing complex optimisations such as common subexpression elimination, speculative execution, etc., creating the formula $\execproj{\exec_L,\exec_H}$ is not trivial, but can still be done. Wickerson et al already study the construction of projected high-level executions for litmus tests \cite{memalloy}.
We leave the concrete details of the implementation of such a mapping for, e.g., LLVM-compiler-generated x86/Power assembly code, for future research. 
However, we claim this can be done with a reasonable amount of implementation effort. 
For simpler compilers where we can obtain a function from low-level instructions to high-level ones 
 (such function exists even in the presence of instructions reordering), we give below a concrete formula $\execproj{\exec_L,\exec_H}$. 
 This relation between executions uses a direct mapping between events; it does not depend on the program order which allows us to handle any instruction reorderings without difficulties.

Given a program $P$, we denote its set of instructions as $I_P$. A low-level program $P_L$ is obtained by compiling an unrolled high-level program $P_H$. 
Each memory event of an execution $\exec_L$ corresponds to a low-level instruction, which in turn was compiled from a high-level instruction. We define a function $\hlcom : \events_L \rightarrow I_H$ that assigns to each memory event in the low-level execution the corresponding high-level instruction. 
We use $\hlcom$ to relate executions of the low-level and high-level programs.
An execution $\exec_H$ of a high-level program $P_H$ is a set of executed instructions $\executedi_H \subseteq I_H $ and relations $\rfrel$ and $\corel$ between them.
\begin{definition}[Execution Projections]\label{def:hlexec}
Let $\exec_L, \exec_H$ be respectively executions of a low-level program $P_L$ and a high-level program $P_H$ and let $\hlcom: \events_L \rightarrow I_H$ be a function. We define
\begin{align}
\execproj{\exec_L,\exec_H} \define & \bigwedge_{e \in \executed_L} \hlcom (e)\in \executedi_H \land \label{eq:proj1}\\
& \bigwedge_{ e_1, e_2\in \executed_L \atop {\rel \in \{ \rf , \co \} }} (e_1 \rel e_2 \Rightarrow \hlcom(e_1) \rel \hlcom(e_2)) \land \label{eq:proj2}\\
& \bigwedge_{{ {i_1, i_2} \in {\executedi_H}}} (i_1 \rel i_2 \Rightarrow \label{eq:proj3}\\
& \;\;\;\;\;\; \bigvee_{ {e_1,e_2} \in {\executed_L}} (e_1 \rel e_2 \land i_1=\hlcom (e_1)\land i_2=\hlcom(e_2)))\label{eq:proj4} \nonumber
\end{align}
\end{definition}
We ensure that each executed event is mapped to a unique executed high-level instruction (\ref{eq:proj1}), that events related with $\rfrel$ or $\corel$ are mapped to instructions that are related by the same relation (\ref{eq:proj2}), and that if two instructions are related, then there are two corresponding events in the low-level program with the same relation (\ref{eq:proj3}).
Note that one high-level instruction may correspond to multiple memory events.


The function $\hlcom$ can be easily constructed by keeping track of the original instructions during the compilation. 
\autoref{def:hlexec} can be encoded in SAT and thus portability of a high-level program can be encoded in an existential SMT formula according to Definition~\ref{def:2progport} and \ref{def:hlexec}.

%
%

\end{document}